\documentclass[11pt]{article}

\usepackage{cite}
\usepackage[hidelinks]{hyperref}
\usepackage{booktabs}
\usepackage{paralist}
\usepackage{algorithm}
\usepackage[noend]{algpseudocode}
\usepackage{apptools}
\usepackage{fancyvrb}
\usepackage{tikz}
\usepackage{multirow}
\usepackage{graphicx}
\usepackage{mathtools}
\usepackage{amsthm}
\usepackage{adjustbox}
\usepackage{amssymb}
\usepackage{relsize}
\usepackage{titlesec}
\usepackage[caption=false,font=normalsize,labelfont=sf,textfont=sf]{subfig}

\usepackage{float}
\newfloat{algorithm}{t}{lop}

\titlespacing*{\paragraph}
{0pt}   
{0.5em}   
{0.5em} 

\newcommand{\ifstringequal}[4]{%
  \ifnum\pdfstrcmp{#1}{#2}=0
  #3%
  \else
  #4%
  \fi
}

\algnewcommand\True{\textbf{true}}
\algnewcommand\False{\textbf{false}}
\algnewcommand\Input{\item[\textbf{Input:}]}
\algnewcommand\Output{\item[\textbf{Output:}]}

\newtheorem{definition}{Definition}
\newtheorem{proposition}{Proposition}
\newtheorem{lemma}{Lemma}
\newtheorem{theorem}{Theorem}
\newtheorem{example}{Example}
\newtheorem{proofpart}{Part}

\renewcommand{\epsilon}{\varepsilon}
\renewcommand{\phi}{\varphi}
\DeclareMathOperator*{\argmax}{arg\ max}

\newcommand{\name}[1]{\mathrm{#1}}

\newcommand{\smartparen}[3]{%
  \mathchoice%
  {\left#1#3\right#2}
  {#1#3#2}
  {#1#3#2}
  {#1#3#2}
}

\newcommand{\parenth}[1]{\smartparen{(}{)}{#1}}
\newcommand{\abs}[1]{\smartparen{|}{|}{#1}}
\newcommand{\ceil}[1]{\left\lceil{#1}\right\rceil}
\newcommand{\anglebr}[1]{\langle{#1}\rangle}
\newcommand{\curly}[1]{\left\{#1\right\}}
\newcommand{\set}[2]{\curly{#1 \ \middle|\ #2}}
\newcommand{\range}[2]{\curly{#1, \dotsc, #2}}
\newcommand{\rowvector}[2]{(#1_1, \dotsc, #1_#2)}
\newcommand{\conj}{\wedge}
\newcommand{\disj}{\vee}
\newcommand{\bigconji}[3]{\bigwedge\nolimits_{#1}^{#2} #3}
\newcommand{\bigdisji}[3]{\bigvee\nolimits_{#1}^{#2} #3}
\newcommand{\kdnfi}{\bigdisji{i=1}{m}{\parenth{\bigconji{j=1}{k}{a_j ^ i}}}}
\newcommand{\true}{1}
\newcommand{\false}{0}

\makeatletter
\newcommand{\casesfunc}[2]{\begin{cases} #2 & #1 \casesfuncchecknextarg}
\newcommand{\casesfuncchecknextarg}{\@ifnextchar\bgroup{\casesfunchandlenextarg}{\end{cases}}}
\newcommand{\casesfunchandlenextarg}[2]{\\ #2 & #1 \@ifnextchar\bgroup{\casesfunchandlenextarg}{\end{cases}}}
\makeatother

\newcommand{\func}[2]{\name{#1}\parenth{#2}}
\newcommand{\To}{\rightarrow}
\newcommand{\cardinality}[1]{\abs{#1}}

\renewcommand{\epsilon}{\varepsilon}
\renewcommand{\phi}{\varphi}

\mathchardef\mhyphen="2D

\newcommand{\thetaof}[1]{\(\mathrm{\Theta}\!\left({#1}\right)\)}

\newcommand{\vars}[1]{\func{Vars}{#1}}

\newcommand\restr[2]{{
  \left.\kern-\nulldelimiterspace
  #1
  \vphantom{\big|}
  \right|_{#2}
}}



\newcommand{\des}{DES}
\newcommand{\fscore}{\(F_1\)}

\begin{document}
\newcommand{\titleFirstLine}{Query-Guided Analysis and Mitigation of}
\newcommand{\titleSecondLine}{Data Verification Errors (Extended Version)}
\title{\titleFirstLine{} \\ \titleSecondLine{} \thanks{This paper is an extended version of our paper published in the IEEE International Conference on Data Engineering (ICDE) 2026. DOI: \href{https://doi.org/10.1109/ICDE65706.2026.00151}{10.1109/ICDE65706.2026.00151}.}}

\author{Ran Schreiber \\
The Department of Computer Science \\
Bar-Ilan University, Ramat Gan, Israel\\
ransch7@gmail.com
\and
Yael Amsterdamer \\
The Department of Computer Science \\
Bar-Ilan University, Ramat Gan, Israel\\
yael.amsterdamer@biu.ac.il
}

\maketitle

\begin{abstract}
Data verification, the process of labeling data items as correct or incorrect, is a preprocessing step that may critically affect the quality of results in data-driven pipelines. Despite recent advances, verification can still produce erroneous labels that propagate to downstream query results in complex ways. We present a framework that complements existing verification tools by assessing the impact of potential labeling errors on query outputs and guiding additional verification steps to improve result reliability. To this end, we introduce \emph{Maximal Error Score (MES)}, a worst-case uncertainty metric that quantifies the reliability of query output tuples independently of the underlying data distribution. As an auxiliary indicator, we identify \emph{risky tuples} -- input tuples for which reducing label uncertainty may counterintuitively increase the output uncertainty. We then develop efficient algorithms for computing MES and detecting risky tuples, as well as a generic algorithm, named MESReduce, that builds on both indicators and interacts with external verifiers to select effective additional verification steps. We implement our techniques in a prototype system and evaluate them on real and synthetic datasets, demonstrating that MESReduce can substantially and effectively reduce the MES and improve the accuracy of verification results.

\end{abstract}

\section{Introduction}
\label{sec:introduction}

\emph{Data verification} is the process of classifying data items as correct or incorrect, which can be critical in data-driven applications that require high data accuracy.  Verifying the data may involve fact-checking against external knowledge sources or detecting anomalies in format or content, and thus can be viewed as part of the broader data cleaning process. Recent advances in machine learning have led to growing interest and improved performance in this area (see Section~\ref{sec:related_work}).

Despite these advances, data verification can be costly in terms of latency and resources, e.g., when each tuple must be fact-checked by searching a data lake, using a Large Language Model, or consulting a human expert. Moreover, the resulting labels may still be uncertain or erroneous, especially under cost-accuracy tradeoffs. For example, data lake searches may be time-limited; verification systems that rely on human feedback, e.g., by involving domain experts or crowd workers~\cite{yakout2011guided, drien2023query, bergman2015query, krishnan2016activeclean, assadi2018cleaning,cheng2008cleaning, mo2013cleaning, wang2014sample, fariha2021coco, chu2015katara, breck2019data, schelter2018automating}, are naturally prone to erroneous decisions -- correct data might be labeled as incorrect or vice versa; similarly, systems in which labels are produced automatically, e.g., by an ML classifier, are also prone to erroneous labels. Such verification errors can then propagate, often in non-trivial ways, to downstream tasks~\cite{khan2025still,neutatz2021cleaning,nguyen2019user}. In the context of queries issued by analysts (which we show to be affected non-trivially by the verification errors in Section~\ref{sec:mes_metric_risky_tuples}), a notable shortcoming of existing systems is that the potential error propagation is not quantified and therefore not reported properly to the analysts, which can be particularly problematic in data-critical settings, e.g., medical applications.

\begin{example}
 \label{ex:analyst}
Consider a data analyst deriving business insights from a database that contains both valuable data and errors. To focus on the correct portion of the data, the analyst applies an automated data verification tool that labels each tuple as correct or incorrect. However, because verification is imperfect, these labels may themselves be erroneous and can propagate to downstream query results. This raises several key questions: \begin{inparaenum}[(i)] \item To what extent can the analyst trust the query results? \item How likely is it that correct output tuples are missing from the results? \item What actions can be taken to improve the reliability of the results?\end{inparaenum}

Answering these questions is not straightforward. Consider questions (i)-(ii) and suppose the query aims to retrieve data about company founders. For some individual~$X$, the database contains ten distinct tuples of the form ``$X$ founded company~$Y_i$,'' where $Y_1, \ldots, Y_{10}$ are different companies. The higher-level conclusion that~$X$ is a company founder is supported if at least one of these tuples is correct. If all ten tuples are labeled as correct, then all ten labels would have to be erroneous to invalidate this conclusion. In contrast, if all ten tuples are labeled as incorrect, a single mislabeled tuple would be sufficient to invalidate the opposite conclusion that~$X$ is not a company founder.

Now consider question~(iii) and assume that the analyst can hire a domain expert to refine the labels. Verifying the entire dataset is costly, so the analyst must decide which tuples the expert should check. Prioritizing tuples solely based on their individual error likelihood may be ineffective: e.g., if the tuple ``$X$ founded company $Y_1$'' is labeled as correct with (nearly)~100\% certainty, then~$X$ is almost certainly a company founder, and refining the remaining nine labels is redundant regardless of their uncertainty.\hfill\qed{}
\end{example}

In this work, we present a generic framework that complements existing data verification tools by analyzing how uncertainty in the labels they produce affects downstream query results. First, we introduce a novel uncertainty measure, \emph{Maximal Error Score (MES)}, which assigns each output tuple a worst-case error score based on the uncertainty of the labels assigned to input tuples. This measure highlights where verification provides strong reliability guarantees for query outputs and where outputs remain uncertain. Next, in settings where label error probabilities can be reduced at an additional cost, such as in Example~\ref{ex:analyst}, we propose a generic algorithm that leverages our uncertainty indicators to automatically guide external verifiers and optimize the tradeoff between query result reliability and verification cost.

Example uses of the framework include estimating the reliability of query output, considering both tuples labeled as correct and as incorrect; determining whether a fast, low-cost data verification tool produces sufficiently reliable results or if a more accurate, costlier tool is needed; and selecting which tuples should be labeled by a high-quality verifier under a limited budget.

\begin{figure}[t]
\center
\includegraphics[width=0.55\linewidth]{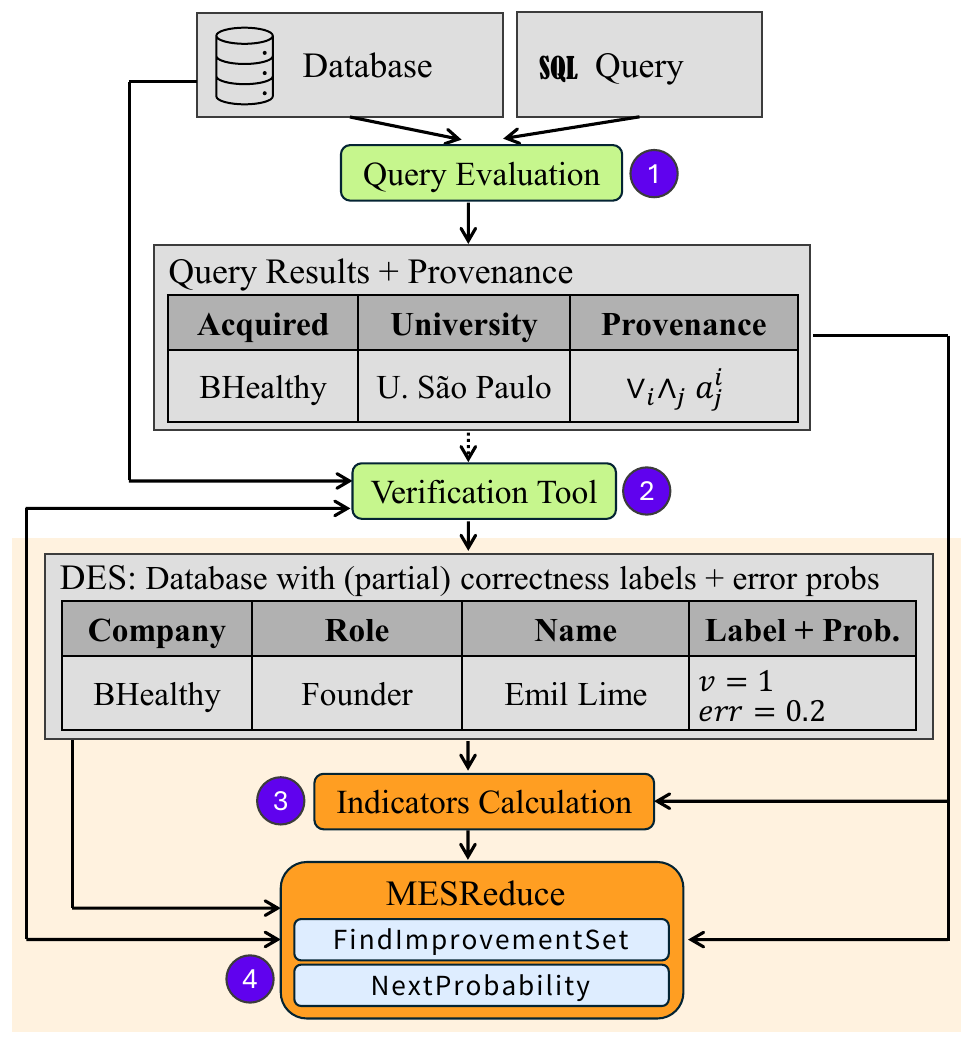}
\caption{The flow of our framework}
\label{fig:architecture_diagram}
\end{figure}

\subsection{Framework Components and Flow}
Figure~\ref{fig:architecture_diagram} illustrates the high-level flow of our framework. Novel components are enclosed in a highlighted background, and step numbers are shown in circles.

\textbf{Step~1}: A user-specified query is evaluated together with its \emph{provenance information}, which abstractly captures how output tuples are derived and is used in subsequent steps.

\textbf{Step~2}: An external verification tool is applied to produce a (possibly partial) correctness labeling of input tuples, along with an estimated probability for each label (shown in the Label+Prob.\ column in the figure). Section~\ref{sec:model} details how these probabilities can be estimated. The resulting instance is referred to as a \emph{database with error estimation (DES)}. If the verifier is \emph{query-guided}, it also receives the query results as input. Output tuples are presented to the analyst together with derived correctness labels.

\textbf{Step~3}: The framework provides two novel indicators to support analyst decision making. Based on these, the analyst may retain only reliable, low-MES (i.e., low-uncertainty) output tuples, apply a different verifier, or use the \emph{risky tuple} indicator, which identifies tuples whose further verification may increase overall uncertainty, to prioritize non-risky tuples for refinement.

\textbf{Step~4}: Optionally, the analyst can invoke our \emph{MESReduce} algorithm, which optimizes the verification quality-cost tradeoff by iteratively selecting input tuples for refinement, applying external verification tools, and gradually reducing uncertainty in the query results.

\subsection{Uncertainty Indicators}
We now outline the key ideas behind our uncertainty indicators and how they differ from prior work, detailed in Section~\ref{sec:indicators}. Our output uncertainty metric, MES, adopts a \emph{worst-case perspective} computing an upper bound on the probability that an output tuple is mislabeled across all possible worlds. This score provides a \emph{strong reliability guarantee}: tuples with low MES are unlikely to be mislabeled regardless of the ground truth.
Unlike prior work on uncertain databases, which estimates expected uncertainty and therefore requires a probability distribution over the ground truth (which is sometimes unavailable or hard to obtain), MES relies solely on the inherent uncertainty of the data verification process.

We observe that reducing the error likelihood of input tuple labels does not always decrease output uncertainty, as reflected in its MES; in fact, it may even increase it. We call such input tuples \emph{risky}, and identifying them is crucial for planning additional verification efforts.

\subsection{Contributions Overview and Novelty}
\label{sec:contributions}
Our work aims at assessing and improving the reliability of query results under uncertain data verification. Unlike prior approaches that ignore verification errors or implicitly try to minimize them, we \emph{explicitly} model how such errors propagate through queries and provide quantitative guarantees on result reliability. Our main contributions are as follows.

\paragraph*{Formal model of uncertain verification}
We present a formal model for uncertainty in data verification results. In contrast with prior work on uncertain, probabilistic, and possibilistic databases, our approach builds directly on uncertainty in the verification process rather than on an estimation of the ground truth data.

\paragraph*{Novel reliability indicators}
We propose two new metrics tailored to this setting: \emph{Maximal Error Score (MES)} and \emph{risky tuples}, which help users assess query-aware reliability and support multiple practical uses. Fundamentally, their formulation clarifies how verification errors propagate to query results.

\paragraph*{Efficient indicator computation}
We analyze the complexity of MES computation and provide efficient algorithms for computing our novel indicators.

\paragraph*{Uncertainty reduction algorithm}
Based on the concepts of MES and risky tuples, we develop a generic uncertainty reduction algorithm called \emph{MESReduce} that builds on top of verification systems of the analyst's choice.

\paragraph*{Prototype implementation and evaluation}
We implemented our techniques in a prototype system and evaluated them on both real and synthetic datasets. Section~\ref{sec:case_studies} presents multiple end-to-end case studies that illustrate how our framework integrates with prominent verification approaches and improves their effectiveness in practice. In the first case study, mirroring Example~\ref{ex:analyst}, an analyst initially applies a low-cost automated verifier to the entire dataset. Our framework then identifies which labels should be refined to maximize query result reliability. We demonstrate two refinement strategies: deciding how many crowd workers to allocate to selected tuples, and selecting tuples to be re-verified using a high-quality LLM. In both settings, our approach improves over initial verification results and achieves high output reliability at a substantially lower cost than exhaustive verification. Another case study shows how a lower-cost LLM can be used for initial verification with probability estimation, and how our framework guides targeted refinement via crowdsourcing to further improve reliability.

Beyond these concrete use cases, we conduct an extensive experimental study to evaluate the robustness, scalability, and effectiveness of our framework across diverse scenarios. Because our goal is not to replace existing verification systems but to complement and guide their use, we compare our approach against alternative strategies for interacting with such tools rather than against the tools themselves. The results show that our methods consistently improve verification reliability and output quality under limited budgets, scale to large query results and complex provenance structures, and offer practical guidance for configuring the framework to effectively interact with real verification systems.

Overall, our work provides the first formal framework for reasoning about and improving query reliability under uncertain data verification. It provides both theoretical contributions and practical applicability; we anticipate that the contributions will support analysts in improving the effectiveness and reliability of their analytical workflows under uncertainty. A short paper describing a demonstration of our framework, along with a prototype GUI system for analysts, was published in~\cite{schreiber2024cleaner}.

\section{Model}
\label{sec:model}

\begin{table}[t]
\caption{Notation Summary}
\label{tab:notation}
\centering
\begin{tabular}{lp{0.5\linewidth}}
\hline
\textbf{Symbol} & \textbf{Description} \\
\hline
\(Q\) & An SPJU query \\
\(D\) & A relational database \\
\(\overline{D}\) & A \des{} (Def.~\ref{def:des}) \\
\(\hat{D}\) & Annotated \des{} (Def.~\ref{def:annotated_des}) \\
\(v(t)\) & A correctness label given to an input tuple~\(t\) \\
\(err(t)\) & An error probability of a label given to~\(t\) \\
\(v_\name{full}\) & A possible world (Def.~\ref{def:possible_worlds}) \\
\(v^Q(o)\) & The label of an output tuple~\(o\) \\
\(L(t)\) & The variable annotating an input tuple~\(t\) \\
\(L^Q (o)\) & The provenance exp. of an output tuple~\(o\) (Def.~\ref{def:annotated_des_query}) \\
\(P\parenth{\overline{D}, v_\name{full}, D'}\) & Labeling probability (Def.~\ref{def:labeling_prob}) \\
\hline
\end{tabular}
\end{table}

\begin{table*}[t]
\caption{A Relational database~$D_{\name{ex}}$ -- the input to the data verification process, here including tuple annotations.}
\vspace{3mm}
  \small
  \resizebox{\linewidth}{!}{\begin{tabular}{ccc|c}
    \toprule
    \multicolumn{4}{c}{\textbf{Acquisitions}} \\
    \midrule
    Acquired & Acquiring & Date & \(L\) \\
    \midrule
    BHealthy & Fiffer & 04.03.2018 & \(a_1\) \\
    NewHealth  & BHealthy & 02.04.2017 & \(a_2\) \\
    NewHealth & Optobest & 01.10.2020 & \(a_3\) \\
    Optobest & microBarg & 08.08.2019 & \(a_4\) \\
    \bottomrule
  \end{tabular}
  \quad
  \begin{tabular}{ccc|c}
    \toprule
    \multicolumn{4}{c}{\textbf{Roles}} \\
    \midrule
    Company & Role & Name & \(L\) \\
    \midrule
    BHealthy & Founder & Emil Lime & \(r_1\) \\
    NewHealth & Co-founder & Yara Aray & \(r_2\) \\
    Optobest & Co-founder & Nala Alan & \(r_3\) \\
    BHealthy & Co-founder & Rima Amir & \(r_4\) \\
    \bottomrule
  \end{tabular}
  \quad
  \begin{tabular}{ccc|c}
    \toprule
    \multicolumn{4}{c}{\textbf{Education}} \\
    \midrule
    Name & University & Year & \(L\) \\
    \midrule
    Yara Aray & U.\ Melbourne & 2017 & \(e_1\) \\
    Emil Lime & U.\ São Paulo & 2014 & \(e_2\) \\
    Rima Amir & U.\ São Paulo  & 2018 & \(e_3\) \\
    Nala Alan & U.\ Cape Town & 2002 & \(e_4\) \\
    \bottomrule
  \end{tabular}}
  \label{tab:des_example}
\end{table*}

\begin{table}[t]
  \caption{Uncertain labels produced by a verification tool for~$D_{\name{ex}}$. {\small {\normalfont The variables on the header row are the tuple annotations from Table~\ref{tab:des_example}. The first row includes correctness labels for some of the tuples ($\bot$~stands for unassigned labels); the second includes error probabilities for each label; and the third demonstrates a possible ground truth.}}}
  \centering
  \begin{tabular}{ccccccccccccc}
    \toprule
  & $a_1$ & $a_2$ & $a_3$ & $a_4$ & $r_1$ & $r_2$ & $r_3$ & $r_4$ & $e_1$ & $e_2$ & $e_3$ & $e_4$ \\
    \midrule
  $v_{\name{ex}}:$ & \true & \false & \true & \false &\true & $\bot$ & $\bot$ & $\bot$ & $\bot$ & \true & $\bot$ & $\bot$ \\
  $\name{err}$: & 0.3 & 0.3 & 0.2 & 0 & 0.2 & $\bot$ & $\bot$ & $\bot$ & $\bot$ & 0.4 & $\bot$ & $\bot$ \\
  $v_\name{full}$: & \true & \false & \false & \false &\false & \false & \false & \false & \false & \true & \false & \false \\
    \bottomrule
  \end{tabular}
  \label{tab:examplenumbers}
\end{table}

In this section, we introduce a formal model for data verification, under which we will later analyze potential verification errors. We start by providing a formal generic definition of the possibly erroneous output of data verification. 

\paragraph*{Database with error estimation (\des{})}
We view data verification as a probabilistic process in which correctness labels are given to tuples, and each label has some likelihood of an error (Step~2 in Fig.~\ref{fig:architecture_diagram}). Therefore, the output of such a process is called a \emph{database with error estimation (\des{})} and is defined as follows.
\begin{definition}
\label{def:des}
A \des{} is a triplet \(\overline{D} = {\anglebr{D, v, \name{err}}}\) where $D$ is a relational database, \(v(t)\) is the correctness label given to a tuple~\(t \in D\), and \(err(t)\) is the error probability of that label. Formally, \(v: D\to \curly{\false{}, \true{}, \bot}\) is a 3-valued labeling function representing the correctness labels assigned by the verification process to~\(D\)'s tuples (where~\false{}, \true{} and~$\bot$ stand for correct, incorrect and unknown, i.e., not labeled, resp.), and \(\name{err}: D \to [0,0.5] \cup \curly{\bot}\) assigns an error probability\footnote{We assume w.l.o.g.\ that the error probability of a label is up to~\(0.5\); otherwise, one may negate the label to achieve this.} to each tuple~\(t \in D\), such that \begin{inparaenum}[(i)] \item if $v(t)\in\{\false,\true\}$ then $\func{err}{t}$ is the probability that $v(t)=\neg v^*(t)$ for the (unknown) ground truth labeling~$v^*: D\to \curly{\false{}, \true{}}$; and \item $\func{err}{t}=\bot$ iff $v(t)=\bot$.\end{inparaenum}
\end{definition}

Intuitively, the output of data verification is a partial correctness labeling, where each label may be erroneous and is associated with an estimated error probability. This definition is intentionally broad and makes no assumptions about the underlying data or verification method, making it applicable across a wide range of systems (see Section~\ref{sec:related_work}).

Crucially, these probabilities apply to errors in the \emph{verification process} itself, rather than in the input data being verified. This distinction is illustrated in the following example.

\begin{example}[DES]
\label{example:des}
Table~\ref{tab:des_example} presents a database $D_{\name{ex}}$ with three relations:  \textbf{Acquisitions}, with information about company acquisitions; \textbf{Roles}, with information about roles in companies; and \textbf{Education}, with information about university alumni. Column~\(L\) includes a unique variable annotating each tuple (to be defined in Def.~\ref{def:annotated_des}), and we will use these annotations to refer to specific tuples. This database is fed as input to a data verification system, resulting in a DES $\overline{D}_{\name{ex}}=(D_{\name{ex}},v_{\name{ex}},\name{err})$ with $v_{\name{ex}}$ and $\name{err}$ specified in Table~\ref{tab:examplenumbers} (ignore, for now, the last row). For instance, we have $v_{\name{ex}}(r_1)=1$ and $\name{err}(r_1)=0.2$, meaning that the tuple annotated by~$r_1$ is estimated to be correct by the verification system and that with probability~$0.2$ this label is wrong, i.e., $r_1$~is incorrect. Note that this does not necessarily imply that~$r_1$ is more likely to be correct than incorrect: for example, if~$r_1$ was produced by a very noisy source that has a probability~0.1 of being correct, then we would observe a true ``correct'' label with probability~$0.1\cdot 0.8=0.08$, whereas having an erroneous ``correct'' label would have a higher probability of $0.9\cdot0.2=0.18$.

As another example, $v_{\name{ex}}(a_2)=0$ and~$\name{err}(a_2)=0.3$, meaning that the corresponding tuple is estimated to be incorrect, with a likelihood of~0.3 of having an erroneous label; $v_{\name{ex}}(e_4)=\bot$ means that this tuple was not labeled by the verification system and we have no information regarding its correctness.
\end{example}

\paragraph*{Cell-level uncertainty}
Throughout the paper, we assume tuple-level uncertainty. A practical way of applying our work to cell-level uncertainty is to represent each tuple as triplets of Tuple-ID, Attribute, and Value (assuming a primary key exists). In the new database, tuple-level labels directly reflect cell correctness in the original database. In addition, every SPJU query over the original database can then be reconstructed through joins to operate over the new one.

\begin{example}
The Acquisitions relation of Table~\ref{tab:des_example} would be transformed to a relation AcquisitionsC with three attributes, ID, Attribute, and Value. In particular, the first tuple would be transformed to three tuples: (1, Acquired, BHealthy), (1, Acquiring, Fiffer), (1, Date, 04.03.2018).

Now, consider the following query over the original database:
\begin{smaller}
\begin{Verbatim}[frame=single]
SELECT DISTINCT r.Name, a.Acquiring
FROM Acquisitions AS a, Roles AS r
WHERE a.Acquired = r.Company
\end{Verbatim}
\end{smaller}

It will be transformed to the following query:
\begin{smaller}
\begin{Verbatim}[frame=single]
SELECT DISTINCT r2.Value, a2.Value
FROM AcquisitionsC as a1,
     AcquisitionsC as a2,
     RolesC as r1,
     RolesC as r2,
WHERE a1.ID = a2.ID AND
      r1.ID = r2.ID AND
      a1.Attribute = 'Acquired' AND
      a2.Attribute = 'Acquiring' AND
      r1.Attribute = 'Company' AND
      r2.Attribute = 'Name' AND
      a1.Value = r1.Value
\end{Verbatim}
\end{smaller}
\end{example}

\paragraph*{Obtaining error probabilities}
We now outline practical methods for obtaining error probabilities based on the verifiers. Some systems natively provide error estimates (e.g.,~\cite{krishnan2016activeclean,wang2014sample}). For verifications involving human experts or crowdsourcing, various established models can be used to estimate worker or label accuracy (e.g.,~\cite{dawid1979maximum,kerrigan2021combining,li2015survey}). For anomaly detection models, the predicted anomaly scores can be translated into estimated error probabilities, as demonstrated in Section~\ref{sec:case_studies}. Similarly, machine learning classifiers often output probabilities for each class, and the one assigned to the inverse of the predicted label can serve as an error estimate. For LLMs, there are various techniques of estimation, and we demonstrate the use of one such approach in Section~\ref{sec:case_studies} (e.g.,\cite{kadavath2022language, xiong2024can, kohn2023semantic, xiong2024efficient, lin2022teaching, shorinwa2026survey}). Lastly, when relying on external data sources, published precision metrics may be used as proxies for error probabilities (e.g., as in~\cite{mitchell2015nell}).

Note that even if we have no knowledge of the uncertainty of the initial verification results, our uncertainty reduction algorithm in Section~\ref{sec:uncertainty_red_algs} can still be valuable; see Scenario~\texttt{WCS} in Section~\ref{sec:experiments_setting}.

\paragraph*{Possible world semantics}
To capture the possible ground truth for a given DES, we use possible world semantics, following some of the previous work on uncertain databases~\cite{suciu2011prob,drien2023query,greco2014certain,lian2010consistent,suciu2011prob}. A possible world is a subset of the input tuples that are correct in this world; the set of all possible worlds is the set of every such subset. Equivalently, it is a full labeling of the data (without~$\bot$), where a tuple is assigned as \(\true\) iff it is correct in the world. Formally,

\begin{definition}[Possible worlds and label errors]
\label{def:possible_worlds}
Let \(\overline{D} = {\anglebr{D, v, \name{err}}}\) be a \des{}, a \emph {possible world for~\(\overline{D}\)} is a Boolean labeling function \(v_\name{full}:D\To\{\false,\true\}\) where a tuple \(t \in D\) is \emph{correct in \(v_\name{full}\)} iff \(v_\name{full}(t)=\true\). A label \(v(t)\) of the \des{} is then \emph{erroneous in \(v_\name{full}\)} iff $v(t)=\neg v_\name{full}(t)$.
\end{definition}

\begin{example}[Possible world]
\label{example:possible_worlds}
Continuing Example~\ref{example:des}, consider the possible world \(v_\name{full}\) depicted in Table~\ref{tab:examplenumbers}. With respect to this world, the label of~$r_1$ is erroneous (the tuple is labeled by the DES as correct but is actually incorrect), and for tuple~$a_2$ (that is labeled by the DES as incorrect) and~$r_2$ (that is not labeled), the labels are not erroneous.  
\end{example}

We can use the error probabilities in the \des{}, to compute the probability of a given labeling in a possible world:

\begin{definition}[Labeling probability]
\label{def:labeling_prob}
Let \(\overline{D}={\anglebr{D, v, \name{err}}}\) be a \des{}, let \(v_\name{full}\) be a possible world for \(\overline{D}\) and let \(D'\subseteq D\) be a nonempty subset of the database. Assuming that error probabilities in $\name{err}$ are independent, one can compute the probability of obtaining $v$ from~$v_\name{full}$ given~$\name{err}$ and restricting attention to $D'$ as
\begin{equation}
\label{equation:labeling_prob}
P\parenth{\overline{D}, v_\name{full}, D'} = \prod_{t \in T_{\name{right}}} (1 - \func{err}{t}) \cdot \prod_{t \in T_{\name{err}}} \func{err}{t}
\end{equation}
where we use, for short, \(T_{\name{right}} = \set{t \in D'}{v(t)=v_\name{full}(t)}\) and \({T_{\name{err}} = \set{t \in D'}{v(t)=\neg v_\name{full}(t)}}\).
\end{definition}

Focusing on a subset~$D'$ of the database tuples will be useful later in Section~\ref{sec:mes_metric}. Note that both \(T_{\name{right}}\) and \(T_{\name{err}}\) can only contain tuples for which the label is known, i.e. \(v(t) \ne \bot\) for every tuple~$t$ in these subsets. Tuples with unknown labels have a probability~\(1\) of having any label, and therefore do not affect the computation.

\begin{example}
\label{example:labeling_prob}
Continuing Example~\ref{example:possible_worlds} and taking \(D' = D_{\name{ex}}\), we have \(T_{\name{right}} = \curly{a_1, a_2, a_4, e_2}\) and \(T_{\name{err}} = \curly{a_3, r_1}\). Therefore, \(P\parenth{\overline{D_{\name{ex}}}, v_\name{full}, D'} = (1 - \func{err}{a_1}) \cdot (1 - \func{err}{a_2}) \cdot (1 - \func{err}{a_4})\cdot (1 - \func{err}{e_2}) \cdot \func{err}{a_3} \cdot \func{err}{r_1}  = 0.01176
\).
\end{example}

\paragraph*{Queries over \des{}}
We now study how labeling errors propagate to query results, focusing on Select-Project-Join-Union (SPJU) queries~\cite{abiteboul1995foundations}. See Section~\ref{sec:mes_queries_generalization} for a discussion about other types of queries. The next definition shows how queries are applied over a \des{}, i.e., how the 3-valued correctness labels are defined for query tuples.
\begin{definition}[output correctness labeling]\label{def:output_correctness_labeling}
Let \(\overline{D}={\anglebr{D, v, \name{err}}}\) be a \des{} and \(Q\) an SPJU query over \(D\). For \(l \in \curly{\false, \true, \bot}\) denote \(D_l = \set{t \in D}{v(t) = l}\), i.e., the subset of~$D$ labeled as~\(l\) by~\(v\). An \emph{output correctness labeling} is a 3-valued function \(v^Q:Q(D)\to \curly{\false, \true, \bot}\) such that for every $o\in Q(D)$ we have \(v^Q(o)=\true\) if~\({o\in Q(D_{\true})}\), \(v^Q(o)=\false\) if~\({o\not\in Q(D_{\true}\cup D_{\bot})}\), and~\(v^Q(o)=\bot\) otherwise.
\end{definition}

Intuitively, an output tuple~\(o\) is correct iff it can be derived from correct \des{} tuples, and incorrect iff it is not derived even if all the unlabeled \des{} tuples are correct. This definition is applicable to SPJU queries because of their monotonicity. 

\begin{example}[Querying the DES]
\label{example:query_over_des}
Continuing Example~\ref{example:labeling_prob}, define \(Q_{\name{ex}}\) to be the SPJU query in Fig.~\ref{fig:query}, which selects acquired companies along with the universities from which the founding members have graduated before the acquisition. Table~\ref{tab:des_query_example} presents the results of \(Q_{\name{ex}}\) (ignore the \(L^{Q_{\name{ex}}}\)~column for now). The first output tuple can be derived from the input tuples~$a_1$, $r_1$, and~$e_2$, corresponding to the acquisition of BHealthy, having Emil Lime as its founder and as a graduate of U.\ São Paulo. Since these tuples are considered correct according to \(v_{\name{ex}}\), so is this query result. The third output tuple is considered incorrect, since its derivation requires the acquisition of Optobest (tuple~$a_4$), which is considered incorrect. The second output tuple is considered unknown since its correctness depends on input tuples~\(r_2\) and~\(e_1\), which are assigned to~\(\bot\).
\end{example}

\subsection{Provenance}
Our analysis requires computations over different possible worlds. For that, we will build on the notion of \emph{Boolean provenance}, which allows capturing how query output tuples are derived from input tuples independently of a specific ground truth. Indeed, Boolean provenance has been used for this purpose in many contexts~\cite{cheney2009provenance,green2007provenance,imielinski1984incomplete} and in particular in data cleaning~\cite{bergman2015query,drien2023query} and probabilistic and uncertain databases~\cite{suciu2011prob}. In this work, provenance is useful in the definitions of our uncertainty metrics and in our uncertainty reduction algorithm. As we are focusing on SPJU queries, we will use the $\mathrm{PosK_3}$ semiring, extending the \textrm{PosBool} semiring of~\cite{green2007provenance} with Kleene's 3-valued logic~$K_3$ rather than Boolean logic.\footnote{Kleene's~$K_3$ logic adds to the usual~$\false,\true$ an unknown value. Truth tables are then extended such that $\false\wedge\bot=\false$, $\bot\wedge\true=\bot$, etc.} This semiring allows deriving a logical expression for each output tuple, such that: \begin{inparaenum}[(i)] \item each variable corresponds to a related input tuple; \item the value of each variable corresponds to its correctness label (or its absence); and \item the computed value of an expression reflects the derived correctness of the corresponding output tuple, according to Def.~\ref{def:output_correctness_labeling}.\end{inparaenum}

For SPJU queries, provenance can be computed in PTIME as monotone~\(k\)-DNF formulae~\cite{imielinski1984incomplete}, i.e., disjunctions of conjunctions (terms) containing \emph{at most}~\(k\) variables each, without negation. For clarity, we assume each term has \emph{exactly}~\(k\) variables; but our results apply to the general case.

\begin{figure}
\setlength{\abovecaptionskip}{-5pt}
\begin{smaller}
\begin{Verbatim}[frame=single]
SELECT DISTINCT a.Acquired, e.University
FROM Acquisitions AS a, Roles AS r, Education AS e
WHERE a.Acquired = r.Company AND
      r.Name = e.Name AND
      r.Role ILIKE '%found%' AND
      e.Year <= DATE_PART('YEAR', a.Date)
\end{Verbatim}
\end{smaller}
\caption{Example query \(Q_{\name{ex}}\)}
\label{fig:query}
\end{figure}

\begin{table}[t]
  \caption{The results of \(Q_{\name{ex}}\) over \(D_{\name{ex}}\)}
  \small
  \centering
  \begin{tabular}{cc|c}
    \toprule
     Acquired & University & \(L^{Q_{\name{ex}}}\) \\
    \midrule
   BHealthy & U.\ São Paulo & \begin{smaller}\(L^{Q_{\name{ex}}}(o_1)=(a_1 \conj r_1 \conj e_2) \disj (a_1 \conj r_4 \conj e_3)\)\end{smaller} \\
     NewHealth & U. Melbourne & \begin{smaller}\(L^{Q_{\name{ex}}}(o_2)=(a_2 \conj r_2 \conj e_1) \disj (a_3 \conj r_2 \conj e_1)\)\end{smaller} \\
    Optobest & U.\ Cape Town & \begin{smaller}\(L^{Q_{\name{ex}}}(o_3)=a_4 \conj r_3 \conj e_4\)\end{smaller} \\
    \bottomrule
  \end{tabular}
  \label{tab:des_query_example}
\end{table}

\begin{definition}[An annotated \des{}]
\label{def:annotated_des}
An \emph{annotated \des{}} is a \des{} in which every tuple is equipped with a unique variable. Formally, it is a triplet \(\hat{D} = \anglebr{\overline{D}, X, L}\), where \(\overline{D} = \anglebr{D, v, \name{err}}\) is a \des{}, \(X\) is a set of \(\abs{D}\) ternary variables, and \(L\) is a bijection \(L: D \to X\).
\end{definition}

See, for example, column~$L$ in Table~\ref{tab:des_example}. The next definition shows how SPJU queries are applied over an annotated \des{}:

\begin{definition}
\label{def:annotated_des_query}
\sloppy Let \(\hat{D}\) be an annotated \des{} and \(Q\) be an SPJU query. The \emph{result of \(Q\) over \(\hat{D}\)} is the result over~\(D\) where every output tuple is equipped with a provenance expression. Formally, it is a triplet \(Q(\hat{D}) = \anglebr{Q(D), \Phi, L^Q}\), where \(Q(D)\) is the result of \(Q\) over \(D\), \(\Phi\subseteq\mathrm{PosK_3}[X]\) is a set of up to \(\abs{Q(D)}\) formulae over \(X\), and \(L^Q\) is a function that labels each output tuple in \(Q(D)\) by a formula in \(\Phi\) satisfying the following important property: for every output tuple \(o \in Q(D)\), \(v^Q (o) = v(L^Q (o))\), where \(v^Q\) is the output correctness labeling of Def.~\ref{def:output_correctness_labeling}, and \(v(L^Q (o))\) is the value of the~$K_3$ expression~\(L^Q (o)\) after substituting every variable~\(x\) in it by the correctness of its corresponding tuple \(v(L^{-1}(x))\).
\end{definition}

The key property described above reflects a general feature of provenance semirings -- namely, their commutation with semiring homomorphisms~\cite{green2007provenance}. Specifically, a 3-valued assignment to variables acts as a semiring homomorphism from expressions in $\mathrm{PosK_3}[X]$ to truth values in $K_3$. Intuitively, one can either (1) assign correctness labels to input tuples and then compute the query result using provenance to derive output labels, or (2) evaluate the query over provenance annotations to get abstract expressions and assign labels afterward. The second approach avoids re-evaluating the query when labels change, requiring only re-assignment of truth values.

\begin{example}
\label{example:provenance}
Continuing Example~\ref{example:query_over_des}, Table~\ref{tab:des_query_example} contains provenance expressions in 3-DNF form, in the \(L^{Q_{\name{ex}}}\)~column. For convenience, we denote the output tuples by~$o_1$, $o_2$, and~$o_3$. Notice that applying \(v_{\name{ex}}\) on the provenance expressions indeed yields the same labeling of \(v_{\name{ex}}^{Q_{\name{ex}}}\), as expected: \(v_{\name{ex}} \parenth{L^{Q_{\name{ex}}} (o_1)} = (\true \conj \true \conj \true) \disj (\true \conj \bot \conj \bot) = \true\), \(v_{\name{ex}} \parenth{L^{Q_{\name{ex}}} (o_2)} = (\false \conj \bot \conj \bot) \disj (\true \conj \bot \conj \bot) = \bot\), and \(v_{\name{ex}} \parenth{L^{Q_{\name{ex}}} (o_3)} = \false \conj \bot \conj \bot = \false\).
\end{example}

\section{Uncertainty Indicators}\label{sec:indicators}
In this section, we present and analyze our novel uncertainty indicators, starting with the uncertainty metric, \emph{Maximal Error Score (MES)}. These indicators can provide valuable insights to analysts and may also guide subsequent data verification (as we do in Section~\ref{sec:uncertainty_red_algs}).

\subsection{Maximal Error Score (MES) Metric}
\label{sec:mes_metric}
Uncertainty can propagate from input data to query results, as reflected in our running example: for instance, in Table~\ref{tab:examplenumbers}, the labeling~$v_{\name{ex}}$ marks the first output tuple~$o_1$ in Table~\ref{tab:des_query_example} as correct, while it is incorrect according to the ground truth labeling~$v_\name{full}$. This error is caused by an erroneous label on the input tuple~$r_1$. While the uncertainty of the input tuples is generally easily obtainable, at least in an approximate form (Section~\ref{sec:model}), the challenge is in \emph{quantifying} the uncertainty of the query output. We start by focusing on a single output tuple, and in Section~\ref{sec:multiple_tuples_mes} we generalize to multiple tuples or full query results.

\subsubsection{MES for a single output tuple}
To capture the uncertainty in the label of a query output tuple~$o$, we take a worst-case perspective: among all the possible worlds in which the label of~$o$ is erroneous, we identify the one where the observed input labels are most likely. The MES is defined as the likelihood of the observed input labels in this worst-case world. This provides an upper bound on how likely it is that the label of~$o$ is wrong -- across \emph{all} worlds where an error in~$o$ occurs.

A low MES value provides a \emph{strong guarantee} that the correctness label of an output tuple~\(o\) is reliable, because the observed verification labels are improbable in every possible world in which~\(o\)'s label is erroneous. Conversely, a high MES indicates that there exists a possible world where the observed labels are likely, yet the label of~\(o\) is wrong. 

\begin{definition}[Maximal Error Score (MES)]
\label{def:mes}
Let \(\hat{D} = \anglebr{\overline{D}, X, L}\) be an annotated \des{}, \(Q\)~be an SPJU query, and \(o \in Q(D)\) be an output tuple with a known correctness label. The \emph{Maximal Error Score (MES) of~\(o\)} is the maximal labeling probability over all the possible worlds under which~\(o\) is labeled erroneously. Formally,
\begin{equation}\label{eq:mes}
\scalebox{0.95}[1.0]{%
$\displaystyle
\func{MES}{\hat{D}, Q, o} = \max_{v_\name{full} \in \func{Inv}{\hat{D}, Q, o}} P\parenth{\overline{D}, v_\name{full}, \func{Rel}{\hat{D}, o}}
$%
}
\end{equation}
where \(\func{Inv}{\hat{D}, Q, o}\) is a set of possible worlds defined by
\begin{equation}\label{eq:inv}
\scalebox{0.87}[1.0]{%
$\displaystyle
\func{Inv}{\hat{D}, Q, o} = \set{v_\name{full}: D \to \curly{\false, \true}}{v_\name{full}(L^Q (o)) = \neg v^Q (o)}
$%
}
\end{equation}
\(P\parenth{\overline{D}, v_\name{full}, \func{Rel}{\hat{D}, o}}\) is the labeling probability defined in Eq.~\ref{equation:labeling_prob}, and \(\func{Rel}{\hat{D}, o}\) is the subset of~\(D\) with the input tuples that participate in the derivation of~\(o\), defined by
\begin{equation}\label{eq:rel}
\func{Rel}{\hat{D}, o} = \set{t \in D}{L(t) \in \vars{L^Q (o)}}
\end{equation}
where \(\vars{\phi}\) is the set of variables in \(\phi\). We refer to possible worlds that achieve the maximum in Eq.~\ref{eq:mes} as \emph{worst-case possible worlds}.
\end{definition}

\begin{example}
\label{example:mes}
Continuing Example~\ref{example:provenance}, we will illustrate MES computation for $o_1$ with respect to $v_{\name{ex}}$ in Table~\ref{tab:examplenumbers}. Note that among the annotations in $o_1$'s provenance, only~$a_1,r_1$ and $e_2$ have known labels, and that the derived correctness label of~$o_1$ is~\true{}; therefore, $\name{Inv}(\hat{D}, Q, o_1)$ consists of all the possible worlds in which~$o_1$ is incorrect. For instance, with the possible world~$v_{\name{full}}$  from Table~\ref{tab:examplenumbers}, the labeling probability restricted to~$\func{Rel}{\hat{D}, o_1)}$ is \((1 - \func{err}{a_1}) \cdot \func{err}{r_1} \cdot (1 - \func{err}{e_2}) = 0.7\cdot 0.2\cdot 0.6=0.084\). For a possible world $v'_{\name{full}}$ for which $v_{\name{ex}}$~errs only for~$e_2$, the labeling probability is \((1 - \func{err}{a_1}) \cdot (1-\func{err}{r_1}) \cdot \func{err}{e_2} = 0.7\cdot 0.8\cdot 0.4=0.224\). It can be shown that $v'_{\name{full}}$ is the worst-case possible world for~$o_1$ and $v_{\name{ex}}$, i.e., that it achieves the maximal probability compared with any possible world in $\name{Inv}(\hat{D}, Q, o_1)$. Hence, $\func{MES}{\hat{D}, Q, o_1} = 0.224$.

As another example, the MES of output tuple~$o_3$ with respect to~$v_{\name{ex}}$ is~0: according to Table~\ref{tab:examplenumbers}, the probability of an error in the label of~$a_4$ is~0. Hence, in every possible world where~$a_4$ is correct, the probability of the observed labeling~$v_{\name{ex}}$ is~0. However, the derived label of~$o_3$ as incorrect can only be erroneous if~$a_4$ is correct. In this case, a MES score of~0 corresponds to the intuition that there is a zero probability of an error in the label of~$o_3$, regardless of the ground truth.
\end{example}

Notice that, as opposed to previous probabilistic or possibilistic approaches to uncertain databases, computing the MES does not require prior knowledge (e.g., correctness probabilities) of input tuples, domain knowledge that may be unavailable or unaffordable. In addition, it obtains the worst-case perspective, which results in a strong guarantee of reliability for analysts.

\begin{example}
\label{example:mes_usage}
Recall from Example~\ref{ex:analyst} that even after applying a verifier (even an expensive, high-quality one such as an LLM or a human expert), an analyst still faces fundamental questions about the reliability of query results and the risk of missing correct output tuples.

The MES metric addresses these concerns by assigning each output tuple a worst-case uncertainty score. Output tuples with low MES are highly reliable, providing strong guarantees about whether they are correct query results. These scores also offer practical guidance to analysts: for instance, they can highlight the most uncertain output tuples for closer inspection or allow the analyst to retain only sufficiently reliable outputs for downstream analysis (as demonstrated in~\cite{schreiber2024cleaner}). Importantly, MES can be reported independently of the underlying verification mechanism and thus can be used on top of existing verification systems, without requiring adoption of the full framework shown in Figure~\ref{fig:architecture_diagram}.

How to further improve result reliability under a limited verification budget is addressed in Sections~\ref{sec:mes_metric_risky_tuples} and~\ref{sec:uncertainty_red_algs}, which build on the MES metric. Section~\ref{sec:case_studies} presents end-to-end case studies that illustrate these ideas in practice.
\end{example}

In some cases, e.g., when outputting it in the GUI of a system, it may be convenient to work with the averaged value
\begin{equation}
\label{equation:averaged_mes}
\func{\overline{MES}}{\hat{D}, Q, o} = \func{MES}{\hat{D}, Q, o}^{-\frac{1}{n}}
\end{equation}
where \(n\) equals the number of factors in the product, similar to the perplexity measure. In addition, it may also be useful to present the log of the MES values. See also Section~\ref{sec:mesreduce_optimizations}.

\subsubsection{MES for multiple tuples}\label{sec:multiple_tuples_mes}
Def.~\ref{def:mes} computes an uncertainty score for each query result. However, in some scenarios -- such as the one in Section~\ref{sec:uncertainty_red_algs} -- a single aggregated uncertainty score for the entire query output or a subset of its tuples may be desired. We propose, as a natural extension to Def.~\ref{def:mes} that preserves the worst-case perspective of MES as an uncertainty bound, to compute the MES of multiple output tuples as the maximal MES over these tuples: given an annotated database $\hat{D}$, a query~$Q$ and a set of output tuples $O\subseteq Q(D)$, we define,
\begin{equation}\label{eq:multiple-tuples-mes}
\func{MES}{\hat{D}, Q, O} = \max_{o\in O} \func{MES}{\hat{D}, Q, o}
\end{equation}
Indeed, if the maximal MES of tuples in~$O$ is low, then the uncertainty of every output tuple~$o\in O$ is correspondingly low, and we have a guarantee of reliability for every examined output tuple. On the contrary, if the maximal value is high, then there exists a highly uncertain tuple. We show this aggregation to be useful when we consider lowering the uncertainty in Section~\ref{sec:uncertainty_red_algs}, and when we examine the empirical performance of such algorithms in Section~\ref{sec:experiments}. Other aggregations are possible, see Section~\ref{sec:conclusions}.

\subsubsection{Generalizing beyond SPJU queries}\label{sec:mes_queries_generalization}
Our depiction of the formal framework can guide the generalization of the MES metric to additional query types. As an example, we outline the necessary steps for adapting the MES metric to \emph{aggregate queries}. First, a suitable provenance system should be adopted for aggregate queries, e.g.~\cite{amsterdamer2011provenance}; in particular, it is used for identifying related input tuples that participate in the derivation of the output tuple of interest (Eq.~\ref{eq:rel}). Then, the set \(\func{Inv}{\hat{D}, Q, o}\) (Eq.~\ref{eq:inv}) should be redefined, e.g., to one where the true aggregation result sufficiently differs from the result computed over tuples labeled as correct by the verifier.

For more complex downstream tasks (e.g., machine learning), SPJU queries can be used to identify and extract relevant input data (e.g., features or training instances). The MES metric can be applied to assess and filter out uncertain outputs of such queries, thereby improving the reliability of the downstream input. Further optimization of verification that accounts for the specific downstream task is a  direction for future work; see Section~\ref{sec:conclusions}.

\subsection{Risky Tuples}
\label{sec:mes_metric_risky_tuples}

A counterintuitive phenomenon when adopting a worst-case perspective of verification is that lowering the error probability of an input label can actually increase the uncertainty of an output tuple. We call such input tuples \emph{risky tuples}. We start with an example illustrating this phenomenon, then we formally define risky tuples, and lastly, we give a lemma of properties that is used in Section~\ref{sec:riskycomputation} for computationally identifying such tuples.

\begin{example}
\label{example:risky_tuple}
Continuing Example~\ref{example:mes}, define \(\name{err'}\) to be identical to \(\name{err}\) in Table~\ref{tab:examplenumbers} except that \(\func{err'}{a_1} = 0.1 < \func{err}{a_1}\). Denote by \(\hat{D'}\) the corresponding annotated \des{}. Both \(\overline{D}\) and \(\overline{D'}\) represent the same verification result, but in \(\overline{D'}\) we are more confident in the label of~\(a_1\). Surprisingly, the MES of~\(o_1\) has increased from \(0.224\) in~\(\hat{D}\) to \(0.288\) in~\(\hat{D'}\), i.e. the worst-case uncertainty has increased. Even more surprisingly, the other variables in the provenance of~\(o_1\) all exhibit this property: reducing \(\func{err}{r_1}\) to~\(0.1\) while keeping the other error probabilities increases the MES value to~\(0.252\), and reducing \(\func{err}{e_2}\) to~\(0.01\) increases the MES value to~\(0.237\).
\end{example}

When we reduce the error likelihood of a label for a risky tuple, we increase the likelihood of the labeling in a possible world; if this is, or becomes, the worst-case possible world, the MES will increase accordingly. Formally,

\begin{definition}[Risky input tuples]
\label{def:risky_tuple}
Let \(\hat{D} = \anglebr{\overline{D}, X, L}\) be an annotated \des{}, \(Q\)~be an SPJU query, and \(o \in Q(D)\) be an output tuple with a known correctness label (\(v^Q (o) \neq \bot\)). Let~\(t_0\) be an input tuple with a positive error probability \(\func{err}{t_0}\). A \(p \in [0, \func{err}{t_0})\) is called an \emph{unsafe error probability for~\(t_0\)} if reducing the uncertainty to~$p$ results in a higher MES value for~\(o\), i.e. \(\func{MES}{\hat{D'}, Q, o} > \func{MES}{\hat{D}, Q, o}\), where \(\hat{D'} = \anglebr{\overline{D'}, X, L}\) and \(\overline{D'} = \anglebr{D, v, \name{err'}}\) with \(\func{err'}{t_0} = p\) and \(\func{err'}{t} = \func{err}{t}\) for every other tuple~\(t\). \(t_0\)~is called a \emph{risky input tuple} if it has an unsafe error probability. If every probability \(p \in [0, \func{err}{t_0})\) is unsafe, it is called a \emph{reliability-impairing input tuple}. If there is no unsafe error probability for it, it is called a \emph{safe input tuple}.
\end{definition} 

Besides the theoretical interest in the phenomenon, the indication of risky input tuples can practically help the data analysts determine how to invest further verification efforts, and can be leveraged by automatic verification algorithms, e.g., as in Section~\ref{sec:uncertainty_red_algs}.

\begin{example}
\label{example:risky_tuples_usage}
Continuing the analyst scenario from Example~\ref{example:mes_usage}, suppose the analyst identifies several query output tuples that are critical for her task but have high MES, indicating high uncertainty. The analyst is willing to invest additional effort to improve reliability, for example, by sending some of the input tuples to a domain expert or an LLM for re-verification. The \emph{risky tuples} indicator intuitively means that the choice of tuples to re-verify is crucial, by identifying input tuples whose refinement could \emph{increase} overall output uncertainty. By highlighting such tuples, the indicator helps the analyst focus verification efforts on non-risky inputs that are more likely to improve reliability. In~\cite{schreiber2024cleaner}, we demonstrate an interactive graphical interface that enables analysts to inspect risky tuples and make informed refinement decisions.
\end{example}

Note that Def.~\ref{def:risky_tuple} decreases only a single probability. When multiple probabilities are decreased simultaneously, the MES value may decrease even if risky input tuples are involved. A trivial example is the decrease of all the error probabilities to zero, in which case a zero MES value is guaranteed. We use this observation in Section~\ref{sec:uncertainty_red_algs}.

The following example shows that a risky input tuple may not be a reliability-impairing tuple:

\begin{example}
Continuing Example~\ref{example:risky_tuple}, \(0.01\) is an unsafe error probability for \(e_2\), so \(e_2\) is a risky input tuple. However, \(0.1\) is not an unsafe error probability: it reduces the MES value to \(0.216\). Therefore, \(e_2\) is risky but not reliability-impairing.
\end{example}

The following lemma presents two basic properties of risky input tuples, which will later be useful for detecting them.

\begin{lemma}
\label{lemma:unsafe_probs_properties}
Let \(\hat{D} = \anglebr{\overline{D}, X, L}\) be an annotated \des{}, \(Q\)~be an SPJU query, and \(o \in Q(D)\) be an output tuple with a known correctness label. Let~\(t_0\) be a risky input tuple. Then,
\begin{enumerate}
    \item If~\(q\) is an unsafe error probability for~\(t_0\), then every \(q' \in [0, q)\) is also unsafe for~\(t_0\).
    \item There must exist a \emph{positive} unsafe error probability for~\(t_0\).
\end{enumerate}
\end{lemma}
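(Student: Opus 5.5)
Both parts follow from viewing the MES as a function of the single parameter $p=\func{err'}{t_0}$ while all other error probabilities stay fixed. Write $M(p)$ for $\func{MES}{\hat{D'}, Q, o}$, where $\hat{D'}$ has $\func{err'}{t_0}=p$. The set $\func{Inv}{\hat{D}, Q, o}$ depends only on $v$ and the provenance, not on $\name{err}$. The set $\func{Rel}{\hat{D}, o}$ also does not depend on $\name{err}$. Partition $\func{Inv}{\hat{D}, Q, o}$ into $W_{\name{right}}$, the worlds where $v(t_0)=v_\name{full}(t_0)$, and $W_{\name{err}}$, the worlds where the label of $t_0$ is erroneous. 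Here $t_0$ has a known label because $\func{err}{t_0}>0$. If $t_0\notin\func{Rel}{\hat{D}, o}$, then $M$ is constant in $p$, so $t_0$ could not be risky. Hence $t_0\in\func{Rel}{\hat{D}, o}$ and, by Eq.~\ref{equation:labeling_prob},
\[
M(p)=\max\parenth{(1-p)\,A,\; p\,B},
\]
where $A$ is the maximum over $W_{\name{right}}$ of the product of the factors of all other relevant tuples, and $B$ is the analogous maximum over $W_{\name{err}}$. Both are constants in $[0,1]$, and an empty maximum is taken as $0$. So $M$ is the maximum of a non-increasing linear function and a non-decreasing linear function of $p$.

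For part~1, let $e=\func{err}{t_0}$ and let $q<e$ be unsafe, so $M(q)>M(e)$. Since $pB$ is non-decreasing, $qB\le eB\le M(e)<M(q)$. The maximum at $q$ must therefore be attained by the first term, giving $M(q)=(1-q)A>M(e)$. For any $q'\in[0,q)$ we get $M(q')\ge(1-q')A\ge(1-q)A>M(e)$. Hence $q'$ is unsafe.

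For part~2, $t_0$ is risky, so some unsafe $q\in[0,e)$ exists. If $q>0$ we are done. If $q=0$, then $M(0)=\max(A,0)=A>M(e)$. The map $p\mapsto M(p)$ is continuous because it is a maximum of two linear functions. Therefore $M(p)>M(e)$ for all sufficiently small $p>0$, and any such $p$ with $0<p<e$ is a positive unsafe probability. Concretely, any $p<\min\parenth{e,\,(A-M(e))/A}$ works, using $A>0$ and $pB\le eB\le M(e)$. Note that $e>0$ by assumption, so this interval is nonempty.

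The only step needing care is justifying the two-term decomposition. One must check that changing $\func{err}{t_0}$ changes no world's membership in $\func{Inv}{\hat{D}, Q, o}$ and no world's set $T_{\name{right}}$ or $T_{\name{err}}$. In each world it multiplies exactly one factor, $(1-p)$ or $p$. All of this is immediate from Definitions~\ref{def:labeling_prob} and~\ref{def:mes}, since neither set depends on $\name{err}$. Everything afterward is elementary monotonicity and continuity.
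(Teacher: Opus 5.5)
Your proof is correct, and it rests on the same two facts as the paper's proof. First, for each world in $\mathrm{Inv}(\hat{D},Q,o)$ the labeling probability is linear in the error probability $p$ of $t_0$, so the MES is a maximum of linear functions of $p$. Second, part~2 follows by continuity, as in the paper; you additionally give an explicit threshold on $p$.

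The difference is in part~1. The paper treats $f(p)=\mathrm{MES}(\hat{D_p},Q,o)$ as a generic piecewise-linear function. It argues that a negative slope somewhere in $[q,e]$, where $e=\mathrm{err}(t_0)$, forces all slopes to its left to be negative. That step holds because $f$ is convex: as a maximum of linear functions its slopes are nondecreasing. The paper uses this without stating it.

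You instead group the worlds by whether the label of $t_0$ is right or erroneous. This collapses the maximum to $M(p)=\max\{(1-p)A,\,pB\}$ with $A,B\ge 0$. The excess $M(q)>M(e)$ cannot come from the non-decreasing term, so $M(q)=(1-q)A$. Monotonicity of $(1-p)A$ then finishes the argument with no appeal to slopes or convexity.

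The paper's version is shorter. Yours makes the key step fully explicit and elementary.
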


\begin{proof}
\begin{proofpart}
\normalfont Define a function \(f: [0, \func{err}{t_0}] \to [0, 1]\) that maps every probability \(p\) to the MES value \(MES\parenth{\hat{D_p}, Q, o}\), where \(\hat{D_p} = \anglebr{\overline{D_p}, X, L}\) and \(\overline{D_p} = \anglebr{D, v, err_p}\) with \(err_p(t_0) = p\) and \(err_p(t) = \func{err}{t}\) for any other \(t\). For every \(v_\name{full} \in \func{Inv}{\hat{D}, Q, o}\) and every \(p \in [0, \func{err}{t_0})\), the probability \(P\parenth{\overline{D_p}, v_\name{full}, \func{Rel}{\hat{D}, o}}\) is a linear function of \(p\). Therefore, \(f\)~is a piecewise linear function.

By definition, \(f(q) > f(\func{err}{t_0})\), so there must exist some \(q_0 \in [q, \func{err}{t_0}]\) that lies in an interval \(I \subset [0, \func{err}{t_0}]\) in which the slope of \(f\) is negative. All slopes to the left of \(I\) must also be negative, so for every \(q' \in [0, q)\), \(f(q') > f(\func{err}{t_0})\) as desired.
\end{proofpart}

\begin{proofpart}
\normalfont Assume by contradiction that \(0\)~is the only unsafe probability for~\(t_0\). Using the same \(f\) as in the previous part, \(f(0) > f(\func{err}{t_0})\). Since \(f\)~is piecewise linear, it's continuous, so for every small~\(\epsilon > 0\), \(f(\epsilon) > f(\func{err}{t_0})\), and there must exist a positive unsafe error probability for~\(t_0\) in contradiction.\qedhere
\end{proofpart}
\end{proof}

As seen in Example~\ref{example:risky_tuple}, all input tuples participating in the derivation of an output tuple~\(o\) might be risky. The following example shows that this phenomenon may happen regardless of the correctness labels of~\(o\) or the input tuples:

\begin{example}
\label{example:all_tuples_risky}
 Returning to the annotated \des{}~\(\hat{D}\) presented in Table~\ref{tab:des_example}, define~\(Q_{\name{ex}2}\) to be the following query, which selects companies along with universities from which the members have graduated:
\begin{smaller}
\begin{Verbatim}[frame=single]
SELECT DISTINCT r.Company, e.University
FROM Roles AS r, Education AS e
WHERE r.Name = e.Name
\end{Verbatim}
\end{smaller}

\begin{table}[t]
  \small
  \centering
  \begin{tabular}{cccc}
    \toprule
    & Company & University & \(L^{Q_{\name{ex}2}}\) \\
    \midrule
    \(o_1\) & BHealthy & U.\ São Paulo & \begin{smaller}\((r_1 \conj e_2) \disj (r_4 \conj e_3)\)\end{smaller} \\
    \(o_2\) & NewHealth & U. Melbourne & \begin{smaller}\(r_2 \conj e_1\)\end{smaller} \\
    \(o_3\) & Optobest & U.\ Cape Town & \begin{smaller}\(r_3 \conj e_4\)\end{smaller} \\
    \bottomrule
  \end{tabular}
  \caption{The results of \(Q_{\name{ex}2}\) over \(\hat{D}_{\name{ex}}\)}
  \label{tab:resultsqex2}
\end{table}

Table~\ref{tab:resultsqex2} presents the results of~\(Q_{\name{ex}2}\) over \(\hat{D}_{\name{ex}}\). Define an assignment~\(v_1\) by \(v_1 (r_1) = v_1 (e_3) = \true\), \(v_1 (r_4) = v_1 (e_2) = \false\), and \(v_1 (x) = \bot\) for any other~\(x\), and define the error probabilities function~\(\name{err}\) by \(\func{err}{r_1} = \func{err}{r_4} = \func{err}{e_2} = \func{err}{e_3} = 0.3\), and \(\func{err}{x} = \bot\) for any other~\(x\). Notice that~\(o_1\) is labeled as incorrect. It can be verified that its MES value is~\(0.1029\) and that for every labeled input tuple -- both the tuples labeled as correct and the tuples labeled as incorrect -- decreasing the error probability to~\(0.2\) increases the MES value.

Now define an assignment~\(v_2\) by \(v_2 (r_1) = v_2 (e_2) = v_2 (e_3) = \true\), \(v_2 (r_4) = \false\), and \(v_2 (x) = \bot\) for any other~\(x\), and define \(\name{err}\) as above. Now, \(o_1\)~is labeled as correct. It can be verified that also in this case its MES value is~\(0.1029\), and for every labeled input tuple, decreasing the error probability to~\(0.2\) increases the MES value.
\end{example}

\section{Computing the Indicators}
\label{sec:computation}

In the previous section, we defined the MES metric and risky tuples; we now analyze their computation (Step~3 in Fig.~\ref{fig:architecture_diagram}). Clearly, computing MES naively is generally infeasible due to the exponential growth in possible worlds with respect to the number of labels. We show that, interestingly, the complexity of MES computation depends on the correctness label of the output tuple. If labeled as incorrect, the problem is tractable, and we provide a PTIME algorithm; if labeled as correct, the problem becomes intractable. We identify a tractable query class and reduce the general case to an integer linear program (ILP). Our solutions have been implemented and experimentally validated as practical in Section~\ref{sec:experiments}.

\subsection{Computing MES for Incorrect Output Tuples}\label{sec:computemesincorrect}
In the case of output tuples labeled as incorrect, it suffices to iterate over a small subset of \(\name{Inv}(\hat{D}, Q, o)\) that provably contains a worst-case possible world. It is proved formally in the next proposition, whose proof is by an analysis of Alg.~\ref{alg:incorrect_tuple_mes}.

\begin{algorithm}
\caption{Calculating MES of an incorrect output tuple}
\label{alg:incorrect_tuple_mes}
{\small
\begin{algorithmic}[1]
\Input \(\hat{D} = \anglebr{\overline{D}, X, L}\), SPJU query \(Q\), an output tuple \(o \in Q(D)\) labeled as incorrect (i.e. \(v^Q (o) = \false\)).
\Output \(\func{MES}{\hat{D}, Q, o}\)
\State \(\mathsf{maxProb} \gets 0\)
\State Calculate the provenance \(L^Q (o) \gets \kdnfi\)\label{line:incorrect_mes_provenance}
\For{\(i \gets 1\) to \(m\)}
    \State For every \(t \in D\), set \(v_\name{full}^i (t)\) to \(\true\) if \(L(t) \in \curly{a_j ^ i}_j\), \(v(L(t))\) if \(v(L(t)) \neq \bot\) or \(1\) otherwise.\label{line:incorrect_mes_vfull_def}
    \State \(p \gets P\parenth{\overline{D}, v_\name{full}^i, \func{Rel}{\hat{D}, o}}\)\label{line:incorrect_mes_labeling_prob}
    \If{\(p > \mathsf{maxProb}\)}
        { \(\mathsf{maxProb} \gets p\)}
        \EndIf
\EndFor
\State \Return \(\mathsf{maxProb}\)
\end{algorithmic}
}
\end{algorithm}

\begin{proposition}
\label{proposition:incorrect_tuple_mes_alg_correctness}
Let \(\hat{D} = \anglebr{\overline{D}, X, L}\) be an annotated \des{}, \(Q\)~be an SPJU query, and \(o \in Q(D)\) be an output tuple labeled as incorrect (i.e. \(v^Q (o) = \false\)). The value \(\func{MES}{\hat{D}, Q, o}\) can be calculated in polynomial time in~$\cardinality{\hat{D}}$.
\end{proposition}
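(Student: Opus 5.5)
We analyze Alg.~\ref{alg:incorrect_tuple_mes} and show both that it returns $\func{MES}{\hat{D}, Q, o}$ and that it runs in polynomial time. Since $v^Q(o)=\false$, the set $\func{Inv}{\hat{D}, Q, o}$ consists exactly of the possible worlds $v_\name{full}$ with $v_\name{full}(L^Q(o))=\true$. Because $L^Q(o)=\kdnfi$ is a monotone DNF, this holds iff some term $i$ is satisfied, that is, $v_\name{full}(a^i_j)=\true$ for all $j$.

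Hence $\func{Inv}{\hat{D}, Q, o}=\bigcup_{i=1}^m W_i$, where $W_i$ is the set of worlds satisfying term $i$, and
\[
\func{MES}{\hat{D}, Q, o}=\max_i \max_{v_\name{full}\in W_i} P\parenth{\overline{D}, v_\name{full}, \func{Rel}{\hat{D}, o}}.
\]
The key step is to show that, for a fixed $i$, the world $v_\name{full}^i$ built in line~\ref{line:incorrect_mes_vfull_def} maximizes the labeling probability over $W_i$. By Eq.~\ref{equation:labeling_prob}, the labeling probability factorizes over the tuples of $\func{Rel}{\hat{D}, o}$, and each factor depends only on that tuple's value in the world. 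We treat the tuples as follows.
\begin{itemize}
\item For a tuple whose variable appears in term $i$, the value is forced to $\true$ in every world of $W_i$, so its factor is identical across $W_i$.
\item For any other tuple with a known label, the factor is either $1-\func{err}{t}$ (if the world agrees with $v(t)$) or $\func{err}{t}$ (otherwise). Since $\func{err}{t}\le 0.5$, choosing $v_\name{full}(t)=v(t)$ is optimal.
\item Tuples with $v(t)=\bot$ do not contribute any factor, so their value is arbitrary.
\end{itemize}

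We must also check that these free choices do not take the world out of $W_i$. Membership in $W_i$ depends only on the variables of term $i$, which are fixed to $\true$, so every other tuple can be chosen independently without violating membership. This gives $v_\name{full}^i\in W_i$ and $P(v_\name{full}^i)\ge P(v_\name{full})$ for all $v_\name{full}\in W_i$. Taking the maximum over $i$, as the loop does, yields exactly the MES.

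For the running time, we assume that the provenance $L^Q(o)$ is computed in PTIME as a $k$-DNF with $m$ polynomial in $\cardinality{\hat{D}}$ (Section~\ref{sec:model}, since $Q$ is fixed). Each iteration constructs a labeling and evaluates a product over at most $\cardinality{D}$ tuples, so the total cost is polynomial.

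The only subtle point is the independence and exchange argument above: the objective is separable, while the constraint defining $W_i$ is a conjunction of positive literals on disjoint coordinates. One also has to check the degenerate case $\func{err}{t}=0.5$, where ties occur and do not affect the maximum, and the case where some variable of term $i$ has label $\true$ rather than $\false$. In that case its factor is simply $1-\func{err}{t}$; it is still forced, so the argument goes through unchanged.
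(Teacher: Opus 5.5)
Your proposal is correct and matches the paper's proof. The paper also analyzes Alg.~\ref{alg:incorrect_tuple_mes}, notes that each $v_\name{full}^i$ lies in $\func{Inv}{\hat{D}, Q, o}$ because term $i$ is satisfied, and shows in Lemma~\ref{proposition:incorrect_tuple_mes_lemma} that any world satisfying some term $i_0$ is dominated by $v_\name{full}^{i_0}$: it splits the labeled related tuples into those in term $i_0$ (identical forced factors) and the rest (factor $1-\func{err}{t}\ge\func{err}{t}$), which is exactly your per-$W_i$ separability argument, with the running time likewise resting on PTIME computation of the $k$-DNF provenance.
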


\begin{proof}
We will prove the proposition by presenting a PTIME algorithm that calculates the MES value of~\(o\), depicted in Alg.~\ref{alg:incorrect_tuple_mes}. The algorithm calculates the provenance expression of~\(o\) in $k$-DNF form (line~\ref{line:incorrect_mes_provenance}), and iteratively examines \(m\)~possible worlds, \(v_\name{full}^1, \dotsc, v_\name{full}^m\), where~\(m\) is the number of terms in the provenance expression (line~\ref{line:incorrect_mes_vfull_def}). Intuitively, each~\(v_\name{full}^i\) is the possible world in which the \(i\)-th term in~\(L^Q (o)\) is satisfied, and the rest of the variables are labeled as in~\(v\) unless they have an unknown label, in which case they are labeled as~\true{}. For each possible world, the corresponding labeling probability is calculated (line~\ref{line:incorrect_mes_labeling_prob}), and the maximal probability among these is returned.

Each step of Alg.~\ref{alg:incorrect_tuple_mes} can be computed in PTIME in the data size, in particular since, as stated above, $k$-DNF provenance can be computed for SPJU query in PTIME~\cite{imielinski1984incomplete}. For each~\(i\), it holds that \(v_\name{full}^i \in \func{Inv}{\hat{D}, Q, o}\), since by construction term~$i$ is satisfied, whereas we assumed $v^Q(o)=\false$. Lemma~\ref{proposition:incorrect_tuple_mes_lemma} below proves that iterating over this small subset of possible worlds is indeed sufficient for finding a worst-case world.
\end{proof}

\begin{lemma}
\label{proposition:incorrect_tuple_mes_lemma}
Let \(\hat{D} = \anglebr{\overline{D}, X, L}\) be an annotated \des{}, \(Q\)~be an SPJU query, and \(o \in Q(D)\) be an output tuple labeled as incorrect (i.e. \(v^Q (o) = \false\)). Denote \(L^Q (o) = \kdnfi\) the k-DNF provenance expression of~\(o\), and for every \(i \in \range{1}{m}\) denote \(v_\name{full}^i (t)\) as in Alg.~\ref{alg:incorrect_tuple_mes}. Denote \(A = \range{v_\name{full}^1}{v_\name{full}^m}\). Then,
\begin{equation}
\func{MES}{\hat{D}, Q, o} = \max_{v_\name{full} \in A} P\parenth{\overline{D}, v_\name{full}, \func{Rel}{\hat{D}, o}}
\end{equation}
\end{lemma}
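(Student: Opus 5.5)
We prove two inequalities. The direction ``\(\geq\)'' is immediate. The proof of Proposition~\ref{proposition:incorrect_tuple_mes_alg_correctness} already notes that each \(v_\name{full}^i\) satisfies the \(i\)-th term. Hence \(v_\name{full}^i(L^Q(o))=\true=\neg v^Q(o)\), so \(A\subseteq\func{Inv}{\hat{D}, Q, o}\), and the maximum over \(A\) is at most the MES. The substance is the reverse direction. For every \(v_\name{full}\in\func{Inv}{\hat{D}, Q, o}\) we will exhibit an index \(i\) with
\[
P\parenth{\overline{D}, v_\name{full}^i, \func{Rel}{\hat{D}, o}} \geq P\parenth{\overline{D}, v_\name{full}, \func{Rel}{\hat{D}, o}}.
\]

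Fix such a \(v_\name{full}\). Since \(v^Q(o)=\false\) and \(v_\name{full}\in\func{Inv}\), we have \(v_\name{full}(L^Q(o))=\true\). Because the provenance is a monotone DNF, some term \(i\) has all its variables \(a^i_1,\dots,a^i_k\) set to \(\true\) by \(v_\name{full}\); choose this \(i\). Next compare \(v_\name{full}\) and \(v_\name{full}^i\) tuple by tuple over \(\func{Rel}{\hat{D}, o}\). The labeling probability in Eq.~\ref{equation:labeling_prob} factorizes into one factor per relevant tuple with a known label, so it suffices to show that each factor under \(v_\name{full}^i\) is at least the corresponding factor under \(v_\name{full}\).

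There are two cases.
\begin{itemize}
\item If \(L(t)\) is in term \(i\), then \(v_\name{full}^i(t)=\true=v_\name{full}(t)\), so the factors coincide.
\item Otherwise, if \(v(t)\neq\bot\), then \(v_\name{full}^i(t)=v(t)\), so \(t\in T_{\name{right}}\) and the factor is \(1-\func{err}{t}\). Under \(v_\name{full}\) the factor is either \(1-\func{err}{t}\) or \(\func{err}{t}\). Since \(\func{err}{t}\leq 0.5\) by Definition~\ref{def:des}, we have \(1-\func{err}{t}\geq\func{err}{t}\).
\end{itemize}
Tuples with \(v(t)=\bot\) contribute no factor. Multiplying the factors gives the desired inequality, and taking the maximum over \(\func{Inv}\) yields equality.

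The main point to check is the assumption \(\func{err}\leq 0.5\). Without it, the per-factor domination fails and the worst-case world could flip labels outside the chosen term. With the assumption, the step is trivial.

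A secondary subtlety is well-definedness. Line~\ref{line:incorrect_mes_vfull_def} of Alg.~\ref{alg:incorrect_tuple_mes} gives precedence to membership in term \(i\), so tuples in term \(i\) labeled \(\false\) are set to \(\true\). This is exactly what the case analysis uses, so consistency holds. Note also that \(m\geq1\), since \(o\in Q(D)\) means the provenance is nonempty, so the maximum over \(A\) is well defined.
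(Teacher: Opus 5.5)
Your proof is correct and takes essentially the same route as the paper's. The easy direction is \(A\subseteq\func{Inv}{\hat{D}, Q, o}\). For the reverse direction, you fix a term satisfied by \(v_\name{full}\) and compare \(v_\name{full}\) with that term's \(v_\name{full}^{i}\) factor by factor: the factors for tuples in that term coincide, and the factors for the remaining labeled tuples are dominated using \(1-\name{err}(t)\geq\name{err}(t)\), which holds because \(\name{err}(t)\leq 0.5\).
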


Intuitively, we construct each possible world such that it resembles the labeling~$v$ as much as possible with respect to tuples with known labels, with a minimal change to make one term evaluate to~\true{}, which is necessary to make~$o$ correct in contrast with its label in~$v^Q$. Any other possible world can only differ from~$v$ in more labels, and hence the probability of~$v$'s labels can only be lower.

\begin{proof}
First, notice that \(A \subseteq \func{Inv}{\hat{D}, Q, o}\), so the inequality \(\mathrm{l.h.s} \ge \mathrm{r.h.s}\) is immediate. We will prove the opposite inequality by showing that for every \(v_\name{full} \in \func{Inv}{\hat{D}, Q, o}\) there exists some \(v_\name{full}^{i_0} \in A\) such that \(P\parenth{\overline{D}, v_\name{full}, \func{Rel}{\hat{D}, o}} \le P\parenth{\overline{D}, v_\name{full}^{i_0}, \func{Rel}{\hat{D}, o}}\). Let \(v_\name{full} \in \func{Inv}{\hat{D}, Q, o}\), then it must satisfy some term \(i_0\) in the provenance expression \(L^Q (o)\). We will show that \(v_\name{full}^{i_0}\) satisfies the desired inequality.

If \(P\parenth{\overline{D}, v_\name{full}, \func{Rel}{\hat{D}, o}} = 0\), the inequality is trivial. Assume it's positive. Denote by~\(T\) the set of related input tuples that are labeled: \(T = \set{t \in D}{v(t) \neq \bot \conj L(t) \in \vars{L^Q (o)}}\). We will decompose~\(T\) into two disjoint sets \(T_1, T_2\), where \(T_1\) contains the tuples whose annotating variables are in the \(i_0\)-th term in the provenance expression of~\(o\), and \(T_2\) contains the rest of the tuples:
\[
T_1 = \set{t \in T}{L(t) \in \curly{a_j ^ {i_0}}_j} \quad T_2 = T \setminus T_1
\]

Also, for every tuple~\(t\) denote for convenience
\[
p(v_\name{full}, t) = \casesfunc{v(t) = v_\name{full} (t)}{1 - err(t)}{v(t) = \neg v_\name{full} (t)}{err(t)}
\]

Now, we can express the labeling probabilities using~\(p\):
\begin{gather*}
P\parenth{\overline{D}, v_\name{full}, \func{Rel}{\hat{D}, o}} = \prod_{t \in T_1} p(v_\name{full}, t) \cdot \prod_{t \in T_2} p(v_\name{full}, t) \\
P\parenth{\overline{D}, v_\name{full}^{i_0}, \func{Rel}{\hat{D}, o}} = \prod_{t \in T_1} p(v_\name{full}^{i_0}, t) \cdot \prod_{t \in T_2} (1 - err(t))
\end{gather*}

Since \(err(t) \in [0, 0.5]\) for every~\(t\), we get that \(P\parenth{\overline{D}, v_\name{full}, \func{Rel}{\hat{D}, o}} \le P\parenth{\overline{D}, v_\name{full}^{i_0}, \func{Rel}{\hat{D}, o}}\) as desired.
\end{proof}

\begin{algorithm}
\caption{Calculating MES of a correct output tuple}
\label{alg:correct_tuple_mes}
\begin{algorithmic}[1]
\Input \(\hat{D} = \anglebr{\overline{D}, X, L}\), \(Q\), an output tuple \(o \in Q(D)\) labeled as correct.
\Output \(\func{MES}{\hat{D}, Q, o}\)
\If{\(\mathbf{ExistsZeroError1Certificate}(\hat{D},Q,o)\) }
    {\Return 0}\label{line:correct_mes_zero_err_cert}
\EndIf
\State Calculate \(L^Q (o) \gets \kdnfi\) and denote \(\vars{L^Q (o)} = \curly{a_i}_{i=1}^n\)\label{line:correct_mes_provenance}
\State Denote \(p_i \gets err\parenth{L^{-1} (a_i)}\) for every \(a_i\) with \(v\parenth{L^{-1} (a_i)} \neq \bot\)\label{line:correct_mes_defs_start}
\State Denote \(I \gets \set{i \in [n]}{v\parenth{L^{-1} (a_i)} \neq \bot}\), \(J \gets \set{i \in I}{p_i > 0}\)
\State Calculate \(\beta \gets \sum_{i \in J} \parenth{v(a_i) \cdot \ln (p_i) + (1 - v(a_i)) \cdot \ln (1 - p_i)}\)
\State Define a vector \(\vec{c} = \rowvector{c}{n}^T\) by \(c_i \gets \ln (1-p_i)-\ln (p_i)\) for \(i \in J \conj v(a_i) = \true\), \(c_i \gets \ln (p_i)-\ln (1-p_i)\) for \(i \in J \conj v(a_i) = \false\) and \(c_i \gets 0\) otherwise.
\State Solve the following integer linear program over \(\vec{a} = \rowvector{a}{n}^T\) and denote by \(\xi\) its objective:\\
\parbox{\columnwidth}{%
\begin{gather}
\begin{cases}
\text{maximize }c^T \cdot a\\[-0.3em]
\begin{aligned}
\text{subject to }
\sum\nolimits_{j=1}^k a_j ^ i \le k - 1, & \quad i=1,\dotsc,m \\
a_i \in \{0{,}1\}, & \quad i \in J \\
a_i = \casesfunc{v(a_i) = \true}{1}{v(a_i) = \false}{0}, & \quad i \in I \setminus J \\
a_i = 0, & \quad i \notin I
\end{aligned}
\end{cases}
\end{gather}
}\label{line:correct_mes_ilp}
\State \Return \(\func{exp}{\xi + \beta}\)
\end{algorithmic}
\end{algorithm}

\subsection{Computing MES for Correct Output Tuples}\label{sec:computemescorrect}
As noted earlier, when the output tuple is labeled as correct, MES computation is intractable:

\begin{proposition}
\label{proposition:correct_tuple_mes_intractable}
Let \(\hat{D} = \anglebr{\overline{D}, X, L}\) be an annotated \des{}, \(Q\)~be an SPJU query, and \(o \in Q(D)\) be an output tuple labeled as correct (i.e. \(v^Q (o) = \true\)). If \(P \neq NP\), \(\func{MES}{\hat{D}, Q, o}\) cannot be calculated in polynomial time in~\(L^Q(o)\). This holds even if we assume that all the error probabilities are equal and that all the relevant input tuples have known correctness.
\end{proposition}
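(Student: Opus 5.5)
The plan is to reduce from Vertex Cover, which is NP-hard. Assume all relevant input tuples are labeled $\true$ and all error probabilities equal some $p\in(0,0.5)$.

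\textbf{Rewriting the MES.} For $o$ labeled correct, a world $v_\name{full}$ lies in $\func{Inv}{\hat{D}, Q, o}$ iff it falsifies every term of the monotone $k$-DNF $L^Q(o)$. Since every label is $\true$, the labeling probability of such a world is $p^{s}(1-p)^{n-s}$, where $s$ is the number of variables set to $\false$ and $n=\abs{\func{Rel}{\hat{D}, o}}$. This quantity is strictly decreasing in $s$ because $p<1-p$. The falsified variables must form a hitting set of the terms, i.e., a set intersecting every term. Hence
\[
\func{MES}{\hat{D}, Q, o}=p^{\tau}(1-p)^{n-\tau},
\]
where $\tau$ is the minimum hitting-set size of the terms. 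From the MES value, which is a polynomial-size rational when $p$ is, say, $1/4$, one recovers $\tau$ in polynomial time by trying $s=0,\dots,n$. So a polynomial algorithm for MES would compute the minimum hitting set.

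\textbf{Encoding graphs.} With $k=2$, a minimum hitting set of the terms is exactly a minimum vertex cover of a graph $G$ whose edges are the terms. It remains to realize every graph $G=(V,E)$ as the provenance of one output tuple of a fixed SPJU query with tuples labeled correct. I would use relations $R(A)$ and $E'(A,B)$ with query $Q()\,{:}{-}\,R(x),R(y),E'(x,y)$, which outputs the Boolean tuple. The provenance is then $\bigvee_{(u,w)\in E} r_u\wedge r_w\wedge e_{uw}$, which is a 3-DNF.

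\textbf{Main obstacle.} The extra $e_{uw}$ variables change the hitting-set problem, since each term could be hit through its edge variable. I would resolve this by giving the $E'$ tuples error probability $0$ with label $\true$. Such tuples are then forced $\true$ in any positive-probability world, and they are excluded from the ILP in the same way as Alg.~\ref{alg:correct_tuple_mes}. Alternatively, since the statement says ``in $L^Q(o)$'', one can simply argue over arbitrary monotone 2-DNF provenance expressions, using the query $Q()\,{:}{-}\,R(x),R(y),\mathit{Edge}(x,y)$.

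To keep all probabilities equal, I would instead encode $E$ in a projection or selection condition. For instance, take one relation $R(\mathit{id},\mathit{nbr})$ containing a tuple for each vertex-edge incidence, and join two tuples on a shared edge id. The remaining task is to check that the minimum hitting set still corresponds to a minimum vertex cover. If this becomes messy, I would fall back on the equal-probability version over a 2-DNF viewed directly as provenance. Since the statement measures complexity in $\abs{L^Q(o)}$, this is a reduction from Vertex Cover to computing MES over 2-DNF provenance, and it yields the claim under $P\neq NP$.
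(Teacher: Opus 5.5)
Your core argument is the paper's. With every relevant tuple labeled $\true$ and one common error probability $p<\tfrac12$, the worlds in $\func{Inv}{\hat{D},Q,o}$ are exactly those falsifying every term. The MES is therefore $p^{\tau}(1-p)^{n-\tau}$, where $\tau$ is the minimum hitting set. For edge provenance $\tau$ is the minimum vertex cover, i.e., $n$ minus the maximum independent set; the paper reduces from INDSET with $p=0.3$.

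The gap is the step you flag yourself: realizing the graph as $L^Q(o)$ while keeping \emph{all} error probabilities equal. None of your three options does this.
\begin{itemize}
\item Giving the $E'$ tuples error probability $0$ contradicts the equal-probability clause, so it only proves the weaker statement.
\item The incidence encoding fails outright. Each vertex--edge incidence tuple takes part in exactly one derivation, so joining two of them on a shared edge id gives terms $x_{u,e}\wedge x_{w,e}$ that are pairwise variable-disjoint. Their minimum hitting set is $\abs{E}$ for every graph. A variable shared by several terms must be a single tuple used in several derivations, i.e., one annotated tuple per vertex.
\item Arguing ``directly over 2-DNF'' does not prove the proposition. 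The hard formulas must arise as provenance of an SPJU query over an annotated \des{}, where every tuple, including each $\mathit{Edge}$ tuple, carries its own variable. The query $Q()\,{:}{-}\,R(x),R(y),\mathit{Edge}(x,y)$ yields your 3-DNF, not a 2-DNF.
\end{itemize}

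There are two easy repairs.
\begin{itemize}
\item The paper's route: store one tuple per vertex whose extra attributes encode its neighbours. Join two copies under the disjunctive condition $R.v_1=S.v \vee \dots \vee R.v_n=S.v$, which is a union of SPJ queries. Only vertex tuples then occur, and the provenance is exactly $\bigvee_{\{u,w\}\in E}(u\wedge w)$.
\item Your original query: $R(x),R(y),E'(x,y)$ already works with all probabilities equal, because the obstacle is illusory. Each $e_{uw}$ occurs in a single term. In any hitting set you can replace $e_{uw}$ by $r_u$, or delete it if $r_u$ is already present, without increasing its size or leaving any term unhit. Hence the minimum hitting set of $\bigvee_{(u,w)\in E}(r_u\wedge r_w\wedge e_{uw})$ is exactly $\tau(G)$. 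Then $\func{MES}{\hat{D},Q,o}=p^{\tau(G)}(1-p)^{N-\tau(G)}$, with $N$ the number of non-isolated vertices plus $\abs{E}$, and you recover $\tau(G)$ as planned.
\end{itemize}
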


\begin{proof}
Assume by contradiction that there is a polynomial-time algorithm for calculating the MES value in the described setting. We will show a polynomial algorithm for the independent set problem, which is a well-known NP-hard problem~\cite{garey1979computers}:
\[
\name{INDSET} = \set{(G, l)}{
\parbox{12em}{
\(G\) contains an independent set of size \(l\) or more
}
}
\]

Let \((G, i)\) be an input for the problem, and denote \(G = (V, E)\) where \(V = \range{v_1}{v_n}\) and \(E = \range{e_1}{e_m}\).

We will first construct a relational database \(D\) that consists of a single relation \(edges\) that contains all the tuples \((v, v_1, \dotsc, v_n, 1) \in {[n]}^n \times \curly{1}\) where for every \(k\), the element \(v_k\) equals \(k\) if \(\curly{v_i, v_k} \in E\) and \(v < k\), and \(0\) otherwise.

Then we will define an SPJ query \(Q\), where \(x\) is the last element:
\begin{smaller}
\begin{Verbatim}[frame=single]
SELECT DISTINCT R.x FROM edges as R, edges as S
WHERE R.v1 = S.v
OR R.v2 = S.v
OR ...
OR R.vn = S.v
\end{Verbatim}
\end{smaller}

Denote by \(\hat{D}\) an annotation of \(D\). It's easy to see that \(Q(\hat{D})\) contains a single output tuple \(o\) whose provenance expression is the 2-DNF formula \(\phi = \bigdisji{\curly{u, v} \in E}{}{\parenth{u \conj v}}\), and the MES value is in the form of \(0.3^k \cdot 0.7^{2m - k}\).

The provenance of the single output tuple in \(Q(\hat{D})\) is \((v_1 \conj v_2) \disj (v_1 \conj v_3) \disj (v_2 \conj v_4) \disj (v_3 \conj v_5)\).

Define a \des{} \(\tilde{D} = (\hat{D}, v, err)\) with \(v(L(t)) = \true\) and \(err(t) = 0.3\) for all \(t \in D\). By our assumption, we can calculate \(\name{MES}(\tilde{D}, Q, o)\) in polynomial time. Then, we can extract \(k\) from the MES value in polynomial time, and we answer that \((G, l) \in \name{INDSET}\) iff \(k \ge l\).

It's easy to see that the reduction takes polynomial time. We will prove that \((G, l) \in \name{INDSET}\) iff \(k \ge l\). First, notice that there is a natural correspondence between independent sets for \(G\) and labeling functions \(V: D \to \curly{\false, \true}\) for which \(o \notin Q(D_V)\), because the corresponding assignments of such labelings don't satisfy \(\phi\). In addition, the size of an independent set equals \(k\) from the above form of the MES value.

Now, if \(G\) contains an independent set of size \(l\) or more, then the corresponding labeling function sends at least \(l\) variables to \(\true\), and \(k \ge l\). In the opposite direction, if \(k \ge l\), there exists a labeling function that sends at least \(l\) variables to \(\true\), and the corresponding independent set contains at least \(l\) vertices.
\end{proof}

We next propose two approaches for dealing with the intractability of MES computation for output tuples labeled as correct. First, we state an equivalent result to Prop.~\ref{proposition:incorrect_tuple_mes_alg_correctness} for subclasses of SPJU that admit CDNF provenance.

\begin{proposition}
\label{proposition:cdnf-correct-tractability}
Let \(\hat{D} = \anglebr{\overline{D}, X, L}\) be an annotated \des{}, \(Q\)~be an SJU or SPU query, and \(o \in Q(D)\) be an output tuple labeled as incorrect (i.e. \(v^Q (o) = \false\)). The value \(\func{MES}{\hat{D}, Q, o}\) can be calculated in polynomial time in~$\cardinality{\hat{D}}$.
\end{proposition}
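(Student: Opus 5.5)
The plan is to obtain the statement as a direct specialization of Prop.~\ref{proposition:incorrect_tuple_mes_alg_correctness}. Every SJU query and every SPU query is in particular an SPJU query. The hypotheses here are the same as there: an annotated \des{} \(\hat{D}\), an output tuple \(o\in Q(D)\), and \(v^Q(o)=\false\). So the conclusion that \(\func{MES}{\hat{D}, Q, o}\) is computable in polynomial time in \(\cardinality{\hat{D}}\) can be inherited verbatim. The only point to check is that nothing in the earlier proof relied on the query using all of selection, projection, join and union. It did not: that proof only used that \(L^Q(o)\) is a monotone \(k\)-DNF computable in PTIME~\cite{imielinski1984incomplete}, and this holds for every fragment of SPJU.

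To make the argument self-contained, and to show why the bound is even simpler for these fragments, I would describe the provenance concretely and then run Alg.~\ref{alg:incorrect_tuple_mes}.
\begin{itemize}
\item For an SPU query (no joins), every derivation of \(o\) uses a single input tuple. Hence \(L^Q(o)=\bigvee_{i=1}^m a^i\) is a \(1\)-DNF whose terms are single variables, and \(m\le\cardinality{D}\).
\item For an SJU query (no projection), each union branch derives \(o\) in at most one way. Hence \(L^Q(o)\) is a disjunction of at most \(\cardinality{Q}\) conjunctions, each of at most \(\cardinality{Q}\) variables.
\end{itemize}
In both cases the number of terms \(m\) is polynomial, and Alg.~\ref{alg:incorrect_tuple_mes} builds the \(m\) candidate worlds \(v^i_\name{full}\). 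Each \(v^i_\name{full}\) satisfies term \(i\) and otherwise agrees with \(v\) on labeled tuples, setting unlabeled tuples to \true{}. For each world it evaluates Eq.~\ref{equation:labeling_prob} over \(\func{Rel}{\hat{D}, o}\), which takes linear time. Correctness of the algorithm is then exactly Lemma~\ref{proposition:incorrect_tuple_mes_lemma}. Any \(v_\name{full}\in\func{Inv}{\hat{D}, Q, o}\) satisfies some term \(i_0\). On the tuples outside that term, \(v^{i_0}_\name{full}\) contributes the factor \(1-\func{err}{t}\ge\func{err}{t}\), because \(\func{err}{t}\le 0.5\). So no world in \(\func{Inv}{\hat{D}, Q, o}\) has a larger labeling probability than the best \(v^{i}_\name{full}\).

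I do not expect a real obstacle. The one step needing care is the claim that the fragment-specific provenance has polynomially many terms of polynomial size. This is immediate for SPU, where each term is a single tuple. For SJU it follows because a query without projection maps each output tuple to a unique assignment per union branch. If the paper intends, as the surrounding text suggests, that these fragments yield CDNF provenance, I would also remark that the incorrect case does not need this property at all. Thus the result holds for the same reason as in the general SPJU case, and the fragment restriction becomes essential only for output tuples labeled as correct.
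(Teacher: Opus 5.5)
Your argument is valid for the statement exactly as printed. With $v^Q(o)=\false$, an SJU or SPU query is in particular SPJU, so Prop.~\ref{proposition:incorrect_tuple_mes_alg_correctness} applies verbatim; your fragment-specific description of the provenance is accurate but not even needed. This is not, however, what the paper proves, and the printed hypothesis is almost certainly a typo for $v^Q(o)=\true$. The paper introduces the proposition as an approach to ``the intractability of MES computation for output tuples labeled as correct'' (Prop.~\ref{proposition:correct_tuple_mes_intractable}). Its accompanying sketch also swaps the roles of \true{} and \false{}, and of DNF terms and CNF clauses, in Alg.~\ref{alg:incorrect_tuple_mes}. Under that reading your route breaks down. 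Each world $v_\name{full}^i$ built by Alg.~\ref{alg:incorrect_tuple_mes} satisfies a term of $L^Q(o)$. So for a correct-labeled $o$ none of them lies in $\func{Inv}{\hat{D}, Q, o}$, whose members are exactly the worlds falsifying $L^Q(o)$.

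The paper's argument instead uses CNF provenance. For join-free or projection-free queries, $L^Q(o)$ can be written in PTIME as a monotone CNF $\bigwedge_{i=1}^{r} C_i$, and your own observations give this:
\begin{itemize}
\item for SPU queries, $L^Q(o)$ is a disjunction of single variables, i.e.\ one clause;
\item for SJU queries, it has at most as many terms as union branches, each of size bounded by the query, so distributing yields a number of clauses that depends only on $Q$.
\end{itemize}
One then runs the dual of Alg.~\ref{alg:incorrect_tuple_mes}. For each clause $C_i$, let $w^i$ set the tuples whose variables occur in $C_i$ to \false{}, keep every other labeled tuple as in $v$, and set unlabeled tuples to \false{}. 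Each $w^i$ lies in $\func{Inv}{\hat{D}, Q, o}$, because a single false clause falsifies the CNF. Conversely, any $v_\name{full}\in\func{Inv}{\hat{D}, Q, o}$ falsifies some $C_{i_0}$ and agrees with $w^{i_0}$ on it. On every other labeled tuple, $w^{i_0}$ contributes $1-\func{err}{t}\ge\func{err}{t}$, so $v_\name{full}$ is dominated, exactly as in Lemma~\ref{proposition:incorrect_tuple_mes_lemma}. Hence $\func{MES}{\hat{D}, Q, o}=\max_i P\parenth{\overline{D}, w^i, \func{Rel}{\hat{D}, o}}$. Your closing remark already notes that the fragment restriction only matters for correct-labeled outputs. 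The missing step is to carry out this dual argument, which is the actual content of the proposition.
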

For the subclasses of SPJU queries mentioned above, specifically join-free or projection-free queries, CNF provenance (a conjunction of disjunctions) can be computed in PTIME. In these cases, MES can be computed using a modified version of Alg.~\ref{alg:incorrect_tuple_mes}, where \true{} is replaced with \false{}, DNF terms (conjunctions) are replaced with CNF clauses (disjunctions), and the analysis follows a similar approach.

In the general case, we formulate an integer linear program such that the desired MES value is a tractable function of its solution. Alg.~\ref{alg:correct_tuple_mes} presents a PTIME reduction to an ILP solver. Intuitively, the provenance expression of~\(o\) is calculated in $k$-DNF form (line~\ref{line:correct_mes_provenance}) and is used to define the integer linear program (lines~\ref{line:correct_mes_defs_start}-\ref{line:correct_mes_ilp}), in which \(\vec{a} = \rowvector{a}{n}^T\) represents a possible world in \(\func{Inv}{\hat{D}, Q, o}\). Formally,

\begin{proposition}
\label{proposition:correct_tuple_mes}
Let \(\hat{D} = \anglebr{\overline{D}, X, L}\) be an annotated \des{}, \(Q\)~be an SPJU query, and \(o \in Q(D)\) be an output tuple labeled as correct (i.e. \(v^Q (o) = \true\)). There exists a PTIME algorithm with oracle access to an ILP solver that computes \(\func{MES}{\hat{D}, Q, o}\).
\end{proposition}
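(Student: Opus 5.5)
The proof will show that Alg.~\ref{alg:correct_tuple_mes} is correct and runs in polynomial time apart from a single ILP call. Since \(v^Q(o)=\true\), the set \(\func{Inv}{\hat{D}, Q, o}\) consists exactly of the worlds in which \(L^Q(o)\) evaluates to \false{}. Because the provenance is a monotone \(k\)-DNF, this holds iff every term contains at least one variable set to \false{}. For 0/1 vectors \(\vec{a}\) this is exactly the constraint \(\sum_j a^i_j \le k-1\) for every \(i\).

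The labeling probability over \(\func{Rel}{\hat{D}, o}\) depends only on the coordinates in \(I\), the related tuples with known labels. Hence an unlabeled variable can be set to \false{} (\(a_i=0\)) without loss of generality. By monotonicity this only helps satisfy the constraints and does not change the objective. The ILP therefore ranges over exactly the projections of \(\func{Inv}{\hat{D}, Q, o}\) that can matter.

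The first step handles the zero-error labels, i.e.\ \(i\in I\setminus J\). If a world disagrees with such a label, it has probability \(0\).

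\begin{itemize}
\item If some world in \(\func{Inv}{\hat{D}, Q, o}\) agrees with all zero-error labels, the maximum is attained among such worlds. This is because the MES is then positive: all other labels have error at most \(0.5\), so both \(\func{err}{t}\) and \(1-\func{err}{t}\) are positive for \(t\in J\).
\item Otherwise the MES is \(0\). This is exactly the case where some term consists only of variables whose labels are \true{} with zero error, or more generally where the forced assignments already satisfy \(L^Q(o)\). The check \(\mathbf{ExistsZeroError1Certificate}\) in line~\ref{line:correct_mes_zero_err_cert} detects this in PTIME by scanning the terms.
\end{itemize}

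Once this check passes, fixing \(a_i\) to the label value for \(i\in I\setminus J\) loses nothing, and the resulting ILP is feasible.

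The second step takes logarithms. For \(i\in J\), the factor contributed by tuple \(i\) is \(1-p_i\) when \(a_i=v(a_i)\) and \(p_i\) otherwise. The plan is to check case by case that its log equals the constant part of \(\beta\) for that index plus \(c_i a_i\).

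\begin{itemize}
\item If \(v(a_i)=\true\), the factor is \(p_i\) when \(a_i=0\) and \(1-p_i\) when \(a_i=1\). So the log is \(\ln p_i + a_i(\ln(1-p_i)-\ln p_i)\).
\item If \(v(a_i)=\false\), the log is \(\ln(1-p_i)+a_i(\ln p_i-\ln(1-p_i))\).
\end{itemize}

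These match the definitions of \(\beta\) and \(\vec{c}\) in the algorithm. Zero-error coordinates contribute factor \(1\), hence log \(0\), since they are fixed to agree with the label. Summing, \(\ln P = \beta + c^T a\) for every feasible \(\vec{a}\). Since \(\exp\) is monotone, the maximum of \(P\) over \(\func{Inv}{\hat{D}, Q, o}\) equals \(\exp(\xi+\beta)\).

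The main obstacle is the bookkeeping between worlds and ILP solutions. Every feasible \(\vec{a}\) must extend to a world in \(\func{Inv}{\hat{D}, Q, o}\); variables outside \(\func{Rel}{\hat{D}, o}\) are irrelevant, so it suffices to assign them arbitrarily. Conversely, every world in \(\func{Inv}{\hat{D}, Q, o}\) with positive probability must be dominated by a feasible \(\vec{a}\), which follows by zeroing its unlabeled variables. The zero-error edge cases also need care: the logs are undefined there, which is why \(J\) excludes them.

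Finally, the running time is polynomial apart from the oracle call. The \(k\)-DNF provenance is computed in PTIME~\cite{imielinski1984incomplete}, and the ILP has size polynomial in \(|L^Q(o)|\). With oracle access to an ILP solver, Alg.~\ref{alg:correct_tuple_mes} therefore computes \(\func{MES}{\hat{D}, Q, o}\) in PTIME.
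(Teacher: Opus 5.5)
Your proposal is correct and follows essentially the same route as the paper's proof: write $\ln P = \beta + c^T a$ for the positive-probability worlds in $\mathrm{Inv}(\hat{D}, Q, o)$, and observe that the ILP of Alg.~\ref{alg:correct_tuple_mes} maximizes $c^T a$ over exactly those worlds, so $\mathrm{MES} = \exp(\xi + \beta)$. You also make explicit two points the paper only calls ``easy to see'': that the zero-error certificate check is what guarantees feasibility and a positive MES, and that feasible vectors correspond to worlds once unlabeled coordinates are set to $0$.
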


\begin{proof}
First, notice that \(I\) and \(J\) are nonempty.
For every labeling function \(V: D \to \curly{\false, \true}\) define \(\vec{r_V} \in \curly{0, 1}^n\) by
\[
(r_V)_i = \casesfunc{V\parenth{L^{-1} (a_i)} = \true}{1}{V\parenth{L^{-1} (a_i)} = \false}{0}
\]

Let \(V\) be a labeling function in \(Inv\parenth{D, Q, o}\) with a positive probability \(P\parenth{\tilde{D}, V, o} > 0\). For each \(i \in \parenth{I \setminus J}\) (if exists), \(V\parenth{L^{-1} (a_i)} = v (a_i)\).
\[
\begin{split}
&\ln \parenth{P\parenth{\tilde{D}, V, o}} = \ln \parenth{\prod_{i \in I} \casesfunc{V\parenth{L^{-1} (a_i)} \neq v(a_i)}{p_i}{V\parenth{L^{-1} (a_i)} = v(a_i)}{1-p_i}} \\
&= \ln \parenth{\prod_{i \in J} \casesfunc{V\parenth{L^{-1} (a_i)} \neq v(a_i)}{p_i}{V\parenth{L^{-1} (a_i)} = v(a_i)}{1-p_i}} \\
&= \sum_{i \in J} \casesfunc{V\parenth{L^{-1} (a_i)} \neq v(a_i)}{\ln (p_i)}{V\parenth{L^{-1} (a_i)} = v(a_i)}{\ln (1-p_i)} \\
&= \sum_{i \in J} \casesfunc{v(a_i) = \true}{(\ln (1-p_i) - \ln (p_i)) \cdot ((r_V)_i - 1) + \ln (1 - p_i)}{v(a_i) = \false}{(\ln (p_i) - \ln (1-p_i)) \cdot (r_V)_i + \ln (1 - p_i)} \\
&= \sum_{i \in J} \casesfunc{v(a_i) = \true}{(\ln (1-p_i) - \ln (p_i)) \cdot (r_V)_i}{v(a_i) = \false}{(\ln (p_i) - \ln (1-p_i)) \cdot (r_V)_i} \\
&\quad + \sum_{i \in J} \casesfunc{v(a_i) = \true}{\ln (p_i)}{v(a_i) = \false}{\ln (1-p_i)} \\
&= c^T \cdot r_V + \beta
\end{split}
\]

Define \(A\) to be the nonempty subset of \(Inv\parenth{D, Q, o}\) with positive probability, then
\[
\begin{split}
&\argmax_{V \in Inv\parenth{D, Q, o}}{P\parenth{\tilde{D}, V, o}} = \argmax_{V \in A}{P\parenth{\tilde{D}, V, o}} \\
&= \argmax_{V \in A}{\ln \parenth{P\parenth{\tilde{D}, V, o}}} = \argmax_{V \in A}{\parenth{c^T \cdot r_V + \beta}} \\
&= \argmax_{V \in A}{\parenth{c^T \cdot r_V}}
\end{split}
\]

It's easy to see that the integer linear program in Alg.~\ref{alg:correct_tuple_mes} solves this optimization (and that it's feasible), so \(\xi = \max_{V \in A}{\parenth{c^T \cdot r_V}} = \ln \parenth{\func{MES}{\tilde{D}, Q, o}} - \beta\),
and we get \(\func{MES}{\tilde{D}, Q, o} = e^{\xi + \beta}\), as desired.
\end{proof}

\subsection{Risky Tuple Computation}\label{sec:riskycomputation}
We will now discuss the computation of risky tuples. The following theorem gives an algorithmic characterization of risky tuples:

\begin{theorem}
\label{theorem:risky_tuples_algorithm}
Let \(\hat{D} = \anglebr{\overline{D}, X, L}\) be an annotated \des{}, \(Q\)~be an SPJU query, and \(o \in Q(D)\) be an output tuple with a known correctness label. Let~\(t\) be an input tuple with a positive error probability \(\func{err}{t} \in (0, 0.5]\). Then, \(t\)~is risky iff \(\func{MES}{\hat{D_0}, Q, o} > MES\parenth{\hat{D}, Q, o}\), 
where \(\hat{D_0} = \anglebr{\overline{D_0}, X, L}\) and \(\overline{D_0} = \anglebr{D, v, \name{err_0}}\) with \(\func{err_0}{t} = 0\) and \(\func{err_0}{t'} = \func{err}{t'}\) for every other tuple~\(t'\).
\end{theorem}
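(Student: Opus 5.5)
The plan is to reuse the function \(f:[0,\func{err}{t}]\to[0,1]\) from the proof of Lemma~\ref{lemma:unsafe_probs_properties}, defined by \(f(p)=\func{MES}{\hat{D_p},Q,o}\). Here \(\hat{D_p}\) agrees with \(\hat{D}\) except that the error probability of \(t\) is set to \(p\). By definition, \(f(0)=\func{MES}{\hat{D_0},Q,o}\) and \(f(\func{err}{t})=\func{MES}{\hat{D},Q,o}\). The theorem then becomes the claim that \(t\) has an unsafe error probability iff \(f(0)>f(\func{err}{t})\).

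The direction ``\(\Leftarrow\)'' is immediate. Since \(\func{err}{t}>0\), the value \(p=0\) lies in \([0,\func{err}{t})\). The hypothesis \(f(0)>f(\func{err}{t})\) says exactly that \(0\) is an unsafe error probability for \(t\), so \(t\) is risky by Def.~\ref{def:risky_tuple}.

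For ``\(\Rightarrow\)'', assume \(t\) is risky and let \(q\in[0,\func{err}{t})\) be an unsafe error probability. Part~1 of Lemma~\ref{lemma:unsafe_probs_properties} gives that every \(q'\in[0,q)\) is also unsafe. If \(q>0\), take \(q'=0\); this yields \(f(0)>f(\func{err}{t})\). If \(q=0\), the conclusion holds by definition.

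To make this self-contained, I would also check the convexity fact that underlies Part~1. Note that \(\func{Inv}{\hat{D_p},Q,o}\) and \(\func{Rel}{\hat{D_p},o}\) do not depend on \(p\), since \(v\) and the provenance are unchanged. For each fixed world \(v_\name{full}\), the labeling probability is affine in \(p\): \(t\) contributes either a factor \(p\) or a factor \(1-p\), times a constant. Hence \(f\) is a pointwise maximum of finitely many affine functions, so it is convex on \([0,\func{err}{t}]\).

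Convexity gives a direct argument. Suppose \(f(q)>f(\func{err}{t})\) for some \(q<\func{err}{t}\), but \(f(0)\le f(\func{err}{t})\). Write \(q\) as a convex combination of \(0\) and \(\func{err}{t}\). Convexity then gives \(f(q)\le\max(f(0),f(\func{err}{t}))=f(\func{err}{t})\), a contradiction.

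The main subtle point is this convexity justification: checking that the set of worlds and the relevant tuples are independent of \(p\), so that the maximum really is over a fixed finite family of affine functions. This is why I would rely on convexity rather than only on the ``piecewise linear'' phrasing used in the lemma.
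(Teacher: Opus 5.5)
Your proposal is correct and matches the paper's proof: the ``if'' direction is immediate by taking \(p=0\), and the ``only if'' direction follows from Part~1 of Lemma~\ref{lemma:unsafe_probs_properties}. Your added convexity argument is a genuine improvement in rigor. Viewing \(f\) as the maximum of finitely many affine functions of \(p\), over a set of worlds that does not depend on \(p\), supplies exactly what the lemma's step ``all slopes to the left must also be negative'' needs; piecewise linearity alone, which is all the paper invokes, does not guarantee that step.
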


The "if" direction is immediate from the definition, and the "only if" direction derives from the first part of Lemma~\ref{lemma:unsafe_probs_properties}. Theorem~\ref{theorem:risky_tuples_algorithm} can be implemented as an algorithm to check if an input tuple is risky by performing two MES calculations: one with the current error probabilities and another in a hypothetical scenario where the error probability of the tuple is reduced to its minimum (0). MES computation is done using the algorithms described earlier. The algorithm can also be extended to a parameterized version, where we evaluate if reducing the error probability of tuple~$t$ to~$q>0$ results in a MES increase for~$o$. This extension is useful for assessing the risk of specific verification steps, especially when the post-verification error probability is known.

As discussed in Section~\ref{sec:computation}, MES calculation is an intractable problem. The following lemma presents two special cases in which an input tuple can be checked to be risky in polynomial time:

\begin{lemma}
Let \(\hat{D} = \anglebr{\overline{D}, X, L}\) be an annotated \des{}, \(Q\)~be an SPJU query, and \(o \in Q(D)\) be an output tuple with a known correctness label. Let~\(t_0\) be an input tuple with a positive error probability \(err(t_0) \in (0, 0.5]\).
\begin{enumerate}
    \item If there doesn't exist a possible world \(v_\name{full} \in \func{Inv}{\hat{D}, Q, o}\) such that \(v_\name{full} (t_0) = v(t_0)\), then \(t_0\)~is not risky.
    \item \sloppy If there exists a possible world \(v_\name{full} \in \func{Inv}{\hat{D}, Q, o}\) such that \(v_\name{full} (t_0) = v(t_0)\), but there doesn't exist a possible world \(v_\name{full} \in \func{Inv}{\hat{D}, Q, o}\) such that \(v_\name{full} (t_0) = \neg v(t_0)\) then \(t_0\)~is risky.
\end{enumerate}
\end{lemma}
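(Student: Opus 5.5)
The approach is to study how the labeling probability of each individual possible world depends on the error probability of~\(t_0\). Fix \(p \in [0, 0.5]\) and let \(\hat{D_p}\) be \(\hat{D}\) with \(\func{err}{t_0}\) replaced by~\(p\), as in the proof of Lemma~\ref{lemma:unsafe_probs_properties}. The set \(\func{Inv}{\hat{D}, Q, o}\) depends only on the provenance \(L^Q(o)\) and on \(v^Q(o)\), not on the error probabilities, so it does not change with~\(p\). The same holds for \(\func{Rel}{\hat{D}, o}\).

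First I note that both hypotheses force \(t_0 \in \func{Rel}{\hat{D}, o}\). Indeed, if \(L(t_0)\) did not occur in \(L^Q(o)\), flipping \(t_0\) in any world of \(\func{Inv}{\hat{D}, Q, o}\) would keep it in \(\func{Inv}{\hat{D}, Q, o}\). Then worlds with both values of \(t_0\) would exist, contradicting either hypothesis. (The set \(\func{Inv}{\hat{D}, Q, o}\) is nonempty, since the hypotheses assert a world in it in case~2, and in case~1 the claim is trivial if it is empty.)

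Consequently, for every \(v_\name{full} \in \func{Inv}{\hat{D}, Q, o}\), Eq.~\ref{equation:labeling_prob} factors as
\[
P\parenth{\overline{D_p}, v_\name{full}, \func{Rel}{\hat{D}, o}} = C(v_\name{full}) \cdot g(p),
\]
where \(C(v_\name{full})\) is the product over the labeled relevant tuples other than~\(t_0\) and does not depend on~\(p\). The factor is \(g(p) = 1-p\) if \(v_\name{full}(t_0) = v(t_0)\) and \(g(p) = p\) otherwise.

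\textbf{Part~1.} Every world in \(\func{Inv}{\hat{D}, Q, o}\) has \(v_\name{full}(t_0) = \neg v(t_0)\). So each world contributes \(C(v_\name{full}) \cdot p\), and hence \(\func{MES}{\hat{D_p}, Q, o} = p \cdot \max_{v_\name{full}} C(v_\name{full})\). This is nondecreasing in~\(p\). Therefore no \(p \in [0, \func{err}{t_0})\) yields a strictly larger MES, so there is no unsafe error probability and \(t_0\) is not risky. Alternatively, I would invoke Theorem~\ref{theorem:risky_tuples_algorithm} directly: here \(\func{MES}{\hat{D_0}, Q, o} = 0 \le \func{MES}{\hat{D}, Q, o}\).

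\textbf{Part~2.} Every world in \(\func{Inv}{\hat{D}, Q, o}\) has \(v_\name{full}(t_0) = v(t_0)\). So \(\func{MES}{\hat{D_p}, Q, o} = (1-p) \cdot M\), where \(M = \max_{v_\name{full} \in \func{Inv}{\hat{D}, Q, o}} C(v_\name{full})\). By Theorem~\ref{theorem:risky_tuples_algorithm}, \(t_0\) is risky iff \(M > (1 - \func{err}{t_0}) M\). Since \(\func{err}{t_0} > 0\), this holds exactly when \(M > 0\).

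The main obstacle is exactly this positivity condition. If some other relevant tuple has error probability~\(0\) and every world in \(\func{Inv}{\hat{D}, Q, o}\) contradicts its label, then \(M = 0\). In that case the MES is \(0\) for every~\(p\), and \(t_0\) is not risky. For example, take \(o\) labeled incorrect where every term of \(L^Q(o)\) contains a tuple labeled \(\false\) with zero error.

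I would therefore make the implicit assumption \(\func{MES}{\hat{D}, Q, o} > 0\) explicit in the proof. Equivalently, there is a world in \(\func{Inv}{\hat{D}, Q, o}\) that agrees with the label of every zero-error relevant tuple. Under this assumption \(M = \func{MES}{\hat{D}, Q, o}/(1-\func{err}{t_0}) > 0\), which completes Part~2. In the degenerate zero-MES case I would remark that no tuple can be risky, since the MES remains~\(0\) for every~\(p\) by the factorization above.
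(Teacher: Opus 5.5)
The paper gives no proof of this lemma; it is stated without argument after Theorem~\ref{theorem:risky_tuples_algorithm}. So there is no authors' route to compare against, and judged on its own your argument is correct. Your key observations are these:
\begin{itemize}
\item $\mathrm{Inv}(\hat{D},Q,o)$ and $\mathrm{Rel}(\hat{D},o)$ do not depend on the error probabilities.
\item Either hypothesis forces $t_0\in\mathrm{Rel}(\hat{D},o)$.
\item Each world's labeling probability is $C(v_{\mathrm{full}})\cdot p$ or $C(v_{\mathrm{full}})\cdot(1-p)$, according to whether the world contradicts $v(t_0)$.
\end{itemize}
Together they reduce the MES to $p\cdot\max C$ in Part~1 and to $(1-p)\cdot M$ in Part~2, and both parts then follow by inspection. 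In Part~2 you do not need Theorem~\ref{theorem:risky_tuples_algorithm}. For every $p\in[0,\mathrm{err}(t_0))$, $(1-p)M>(1-\mathrm{err}(t_0))M$ holds iff $M>0$. This gives both directions directly, and shows that when $M>0$, $t_0$ is in fact reliability-impairing, not merely risky.

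The caveat you raise is a genuine error in the statement, not an artifact of your method. Part~2 fails whenever $\mathrm{MES}(\hat{D},Q,o)=0$, and the model permits this because zero error probabilities are allowed. A concrete counterexample comes from the running example. In $L^{Q_{\mathrm{ex}}}(o_3)=a_4\conj r_3\conj e_4$, keep $v(a_4)=\false$ with $\mathrm{err}(a_4)=0$, and additionally label $r_3$ as $\true$ with $\mathrm{err}(r_3)=0.2$. Then $o_3$ is still labeled incorrect, and every world in $\mathrm{Inv}$ sets $a_4=r_3=e_4=\true$, so Part~2's hypothesis holds for $t_0=r_3$. Yet the MES equals $0\cdot(1-p)=0$ for every $p$, so $r_3$ is not risky.

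Your own sketch (every term containing a zero-error tuple labeled $\false$) should be stated in this form, with $t_0$ occurring in every term and labeled $\true$; otherwise Part~2's hypothesis need not hold. With the added assumption $\mathrm{MES}(\hat{D},Q,o)>0$ (equivalently $M>0$, since $1-\mathrm{err}(t_0)\ge 0.5$), your proof is complete.
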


\section{Uncertainty Reduction Algorithms}
\label{sec:uncertainty_red_algs}
In this section, we focus on settings where label error probabilities of input tuples can be reduced through additional verification, such as by invoking a more reliable verifier or by aggregating feedback from additional sources. The goal is to strategically select input tuples for further verification to obtain better worst-case reliability guarantees of query output; that is, to reduce the MES metric (Step~4 in Fig.~\ref{fig:architecture_diagram}).

We begin by introducing a generic algorithm scheme, \emph{MESReduce}, for reducing MES values in query outputs (Section~\ref{sec:mesreduce}). This scheme relies on four subroutines, whose possible implementations are described in Section~\ref{sec:uncertainty_red_algs_inst}. One may adapt the scheme to other settings by providing alternative implementations for them. Lastly, hardness results are presented in Section~\ref{sec:uncertainty_red_algs_hardness}.

\subsection{MESReduce Generic Scheme}\label{sec:mesreduce}
Our generic MESReduce scheme for lowering MES of query outputs is presented in Algorithm~\ref{alg:mes_reduction}, with subroutines marked by~(\(\star\)). As shown in Section~\ref{sec:uncertainty_red_algs_hardness}, selecting an optimal batch of tuples under a budget is intractable even without label changes, motivating an iterative approach. Selecting multiple tuples per iteration is necessary in the presence of risky tuples (Section~\ref{sec:mes_metric_risky_tuples}).

\paragraph*{Input}
The algorithm takes as input an annotated \des{}, an SPJU query, an optional target MES threshold, an optional cost budget, and a set of output tuples of interest~\(O\), along with provenance expressions that allow analysts to focus verification efforts on selected outputs.

\begin{algorithm}
\caption{MESReduce -- Reducing the maximal MES value}
\label{alg:mes_reduction}
\begin{algorithmic}[1]
\Input Annotated \des{} \(\hat{D} = \anglebr{\overline{D}, X, L}\), budget \(B \in (0, \infty)\),  MES threshold~$\Theta$, SPJU query~$Q$, output tuples~\(O \subseteq Q(\hat{D})\)
\State \(\mathsf{budgetLeft} \gets B\)
\While{\(\mathsf{budgetLeft} > 0\) and \(\name{MES}(\hat{D},Q,O) > \Theta\)}
    \If{$\exists o\in O.~v^Q (o) = \bot$}\label{line:mesreduce_reverify_cond}
    \State \((\star)~\hat{D}, \mathsf{budgetLeft} \gets \mathsf{ReVerify}(\hat{D},O,\mathsf{budgetLeft})\)\label{line:mesreduce_reverify}
    \EndIf
    \State \(o^* \gets \argmax_{o \in O}{\name{MES}(\hat{D}, Q, o)}\)\label{line:mesreduce_argmax}
    \State \((\star)~S\gets\mathsf{FindImprovementSet}(\hat{D},o^*,L^Q (o^*))\)\label{line:mesreduce_imp_set}
    \State \adjustbox{max width=0.94\linewidth}{\((\star)~p \gets \mathsf{NextProbability}(\hat{D},L^Q (o^*), S, \mathsf{budgetLeft}, \Theta)\)}\label{line:mesreduce_nextprob}
    \State \adjustbox{max width=0.94\linewidth}{\((\star)~\hat{D}, \mathsf{budgetLeft} \gets \mathsf{ImproveVerification}(\hat{D},S,p,\mathsf{budgetLeft})\)}\label{line:mesreduce_improveverification}
\EndWhile
\end{algorithmic}
\end{algorithm}

\paragraph*{Initial re-verification}

At each iteration, output tuples in~$O$ must have known correctness labels. If any are missing, the $\mathsf{ReVerify}$ subroutine (lines~\ref{line:mesreduce_reverify_cond}-\ref{line:mesreduce_reverify}) is invoked to generate them using an external verification tool. This step is repeated each iteration, as updates to input labels may invalidate derived output labels; for example, when changes in a provenance expression require previously unknown labels to be resolved. For example, if some $o \in O$ has provenance $x \vee y$, $v(y)=\bot$, and $x$ changes from~\true{} to~\false{}, we must obtain a label for~$y$ to determine $o$'s correctness.

\paragraph*{Selecting tuples to verify}
The algorithm identifies the least certain (highest MES) output tuple~$o^* \in O$ (line~\ref{line:mesreduce_argmax}), and selects a subset~$S$ of its provenance variables called an \emph{improvement set} via the $\mathsf{FindImprovementSet}$ subroutine (line~\ref{line:mesreduce_imp_set}). We thus adopt Eq.~\ref{eq:multiple-tuples-mes} for optimizing the reliability of the set~\(O\). Variants may consider multiple high-MES tuples, trading off fewer iterations against adaptability (see Section~\ref{sec:hyperparams_experiments}).

\paragraph*{Executing verification steps}
Verification is performed using an external tool, which may differ from the one used in $\mathsf{ReVerify}$. If the verifier supports controlling error probabilities, the $\mathsf{NextProbability}$ subroutine (line~\ref{line:mesreduce_nextprob}) is used to set target error probabilities for the tuples of~$S$. The verifier is then applied to these tuples using the $\mathsf{ImproveVerification}$ subroutine (line~\ref{line:mesreduce_improveverification}).

\begin{example}
\label{example:mesreduce_usage}
Continuing Example~\ref{example:risky_tuples_usage}, suppose the analyst seeks strong reliability guarantees for the query results but operates under a limited verification budget. Rather than manually selecting tuples to refine, she can invoke the \emph{MESReduce} algorithm to interact with the verifier on her behalf. In practice, this means that MESReduce automatically selects which input tuples to send for re-verification, for example, by placing them in a queue for a domain expert or issuing verification requests to an LLM-based verifier. Budget constraints naturally arise in such settings: the analyst may only be able to afford a fixed amount of expert time or a limited number of LLM queries. Under these constraints, MESReduce prioritizes verification actions that will reduce output uncertainty, thereby improving query result reliability more effectively than uninformed or uniform selection of tuples. Moreover, MESReduce can terminate early, before exhausting the available budget, once a desired reliability level is achieved.
\end{example}

\paragraph*{Generalizing to multiple queries}
MESReduce can be extended to multiple queries with different budgets or stopping criteria by annotating each output tuple with the disjunction of its provenance expressions across queries and tracking budgets and criteria per tuple.

\subsubsection{Optimizations for MESReduce}
\label{sec:mesreduce_optimizations}
The generic algorithm described above already admits several useful optimizations:
\begin{enumerate}
    \item Verifying the tuples in improvement sets may be done in parallel.
    \item Recalculation of MES values only needs to be done for tuples affected by the verification steps of the previous iteration, and can be executed in parallel; the affected tuples can be efficiently identified by analyzing the provenance.
    \item To avoid underflow, the algorithms can work with the log of MES values, adjusting the algorithms in Section~\ref{sec:computation} accordingly.
\end{enumerate}

\subsection{Subroutines Implementation}
\label{sec:uncertainty_red_algs_inst}
We next discuss implementations of the MESReduce subroutines, which may vary depending on the setting.

\paragraph*{External data verification tools}
The $\mathsf{ReVerify}$ and $\mathsf{ImproveVerification}$ subroutines (lines~\ref{line:mesreduce_reverify} and~\ref{line:mesreduce_improveverification}) are implemented by invoking external verifiers.
For $\mathsf{ReVerify}$, the output of the tool must be converted into a \des{}, including correctness labels and error probabilities. Section~\ref{sec:model} discusses how such probabilities can be obtained in practice, and Section~\ref{sec:case_studies} presents case studies with state-of-the-art tools. The $\mathsf{ImproveVerification}$ subroutine requires a verifier that can increase label accuracy for selected tuples, either with fixed cost and improvement (e.g., switching to a more accurate model) or with cost-dependent improvement, as in crowdsourcing, where increasing the number of workers reduces error.

\paragraph*{Improvement sets choice}
We now present an implementation of the $\mathsf{FindImprovementSet}$ subroutine, which selects an efficient batch of input tuples \(S\) for verification. This approach leverages risky-tuple identification and the structure of the provenance expression for the current output tuple~$o^* \in O$. If a related safe tuple with positive error probability exists, $S$ includes such a tuple with minimal verification cost. Otherwise, $S$ is defined based on $v^Q(o^*)$: if $v^Q(o^*)=\true$, $S$ contains the tuples in the cheapest satisfied term of the provenance expression~$L^Q(o^*)$; if $v^Q(o^*)=\false$, $S$ is the cheapest set covering at least one unsatisfied variable per term. Provenance expressions are computable in PTIME as monotone~$k$-DNF formulae~\cite{imielinski1984incomplete}. In the former case, $\abs{S}\leq k$; in the latter, $\abs{S}$ is bounded by the number of terms, and while optimal selection is tractable for CDNF provenance (Section~\ref{sec:computemescorrect}), it is intractable in general, requiring an ILP solver or a Hitting Set approximation.

The following example illustrates that for some formulae, \(T\) must be of size \thetaof{\abs{Vars(\psi)}}, where \(\phi\) is the provenance expression of the output tuple:

\begin{example}
Define \(\phi = \bigdisji{i = 1}{n}{\parenth{a_{2i} \conj a_{2i + 1}}}\) to be the provenance expression of the current output tuple, and \(v \equiv f\). It's easy to construct a database and a query that achieve this provenance expression. For every set of error probabilities \(\curly{p_i}\), when calculating the MES, it suffices to look only at ground truth assignments that satisfy exactly two variables. Their probabilities are in the form of \(p_j p_{j + 1} \prod_{i \notin \curly{j, j + 1}} \parenth{1 - p_i}\). If we start with \(p_1 = \dotsc = p_n\), the MES equals \(p^2 (1-p)^{n - 2}\), and \(T\) cannot contain less than \(n = \mathrm{\Theta}\parenth{\abs{Vars(\phi)}}\) variables.
\end{example}

\begin{lemma}
Unless \(\text{P} = \text{NP}\), the problem of finding the cheapest improvement set according to the procedure above, when a relevant safe tuple doesn't exist, cannot be solved in polynomial time.
\end{lemma}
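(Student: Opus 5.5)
We reduce from a known NP-hard problem, choosing the branch of the procedure whose selection rule is intrinsically combinatorial. When \(v^Q(o^*)=\true\), the improvement set is the cheapest satisfied term; there are at most \(m\) terms of size \(k\), so this case is polynomial. Hardness must therefore come from the case \(v^Q(o^*)=\false\). There the procedure asks for the cheapest set of labeled variables containing at least one unsatisfied variable from every term of \(L^Q(o^*)\). This is a weighted hitting set problem; with unit costs and \(k=2\) it is exactly Vertex Cover. So we reduce from Vertex Cover, stated as a decision problem (``is there a cover of size at most \(l\)?''), which is NP-hard~\cite{garey1979computers}.

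The reduction reuses the construction in the proof of Prop.~\ref{proposition:correct_tuple_mes_intractable}. Given \(G=(V,E)\), we build the single-relation database and SPJ query whose single output tuple \(o\) has provenance \(\phi=\bigdisji{\curly{u,w}\in E}{}{(u\conj w)}\). We then define the \des{} by setting \(v(t)=\false\) for every tuple, a uniform error probability \(p\in(0,0.5)\), and unit verification cost for every tuple. Every term of \(\phi\) evaluates to \false{}, so \(v^Q(o)=\false\) and every variable is unsatisfied. A set \(S\) meets every term iff it is a vertex cover of \(G\), so the cheapest improvement set has cost equal to the minimum vertex cover size. Comparing that cost with \(l\) decides the Vertex Cover instance.

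The main obstacle is the lemma's premise that no relevant safe tuple exists. Otherwise the procedure returns a single safe tuple and never solves the hitting-set problem. We must therefore show that, in the constructed instance, every vertex variable with positive error is risky. By Theorem~\ref{theorem:risky_tuples_algorithm}, it suffices to show that setting \(\func{err}{t}=0\) strictly increases \(\func{MES}{\hat{D},Q,o}\).

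To do this, we compute the MES with Lemma~\ref{proposition:incorrect_tuple_mes_lemma}. Each candidate world flips exactly the two variables of one edge, so \(\func{MES}{\hat{D},Q,o}=p^2(1-p)^{n-2}\), where \(n\) is the number of vertices that appear in edges. If we set \(\func{err}{t}=0\), worlds flipping an edge that contains \(t\) get probability \(0\). Worlds flipping an edge that avoids \(t\) change their factor for \(t\) from \(1-p\) to \(1\), so their probability grows by a factor of \(1/(1-p)>1\).

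Hence \(t\) is risky whenever some edge avoids \(t\). We can guarantee this, for example, by first adding to \(G\) a disjoint edge, or two disjoint copies of \(G\). Such a padding changes the minimum cover size in a known, computable way, namely \(+1\) or doubling. Isolated vertices do not appear in \(\phi\) and are irrelevant.

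With the premise secured, the remaining steps are routine. The construction is polynomial, and the reduction is correct by the cover/hitting-set correspondence. Together these give the claim unless \(\text{P}=\text{NP}\).
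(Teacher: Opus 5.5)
Your proof is correct and follows essentially the paper's approach: both reduce from a hitting-set problem. You take the DNF of the set system as provenance, label every tuple incorrect with a uniform error probability and unit cost, show every variable is risky, and identify the cheapest improvement set with a minimum hitting set. You specialize to Vertex Cover, which lets you reuse the 2-DNF SPJ construction from Prop.~\ref{proposition:correct_tuple_mes_intractable}, and your padding step is more careful than the paper. The paper's riskiness argument (with error $0.5$) needs some term to avoid each variable, which its stated justification ``no term solely contains $s_i$'' does not guarantee, whereas your disjoint-edge or two-copies padding does.
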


\begin{proof}
\sloppy We will show a reduction from the hitting set problem~\cite{garey1979computers}. Let \(C = \range{S_1}{S_m}\) be a collection of subsets of a set \(S = \range{s_1}{s_n} \subset \mathbb{N}\) and let \(k\) be a positive number. We need to determine if a hitting set of size at most~\(k\) exists. We can assume that the sets \(S_i\) are equipotent and each contains \(l \ge 2\) elements. If this is not the case, we will first reduce the general hitting set problem to this restricted problem by adding new elements.

Denote by \(\hat{D}\) an annotated DES and \(Q\) a query such that \(Q(\hat{D})\) contains a single output tuple \(o\) whose provenance expression is the l-DNF formula \(\phi = \bigdisji{i=1}{m}{\parenth{\bigconji{s_j \in S_i}{}{s_j}}}\). Define a \des{} \(\hat{D} = \anglebr{\overline{D}, X, L}\) where \(\overline{D} = \anglebr{D, v, err}\) with \(v(L(t)) = \false\) and \(err(t) = 0.5\) for all \(t \in D\). Assume equal verification costs, so picking the cheapest set is equivalent to picking the smallest set.

Every input variable is risky: Let \(s_i\) be a variable, for every satisfying assignment for \(\psi\) the score equals \(\frac{1}{2^{n - 1}} \cdot p_i\) or \(\frac{1}{2^{n - 1}} \cdot (1 - p_i)\), and since no term solely contain \(s_i\) there exists a satisfying assignment for \(\psi\) which doesn't satisfy \(s_i\), so its score is \(\frac{1}{2^{n - 1}} \cdot (1 - p_i)\) and \(s_i\) is risky.

Finally, since the smallest selection that covers every term is a hitting set, we can determine if a hitting set of size at most~\(k\) exists iff the smallest selection contains at most~\(k\) variables.
\end{proof}

Note that improvement sets computed by the above procedure include only tuples with \emph{known labels}. Based on the initial information we have on these tuples, we can prove, as stated by the following proposition, that the above-defined improvement sets can indeed lower the MES. Formally,

\begin{proposition}
\label{proposition:mes_lowering_probs}
Let \(\hat{D} = \anglebr{\overline{D}, X, L}\) be an annotated \des{}, \(Q\) be an SPJU query, and \(o \in Q(D)\) be an output tuple with a known correctness label. Denote by \(T\) the set of input tuples chosen according to the procedure above. Then, there exists a sufficiently low probability \({p > 0}\) such that \(\func{MES}{\hat{D'}, Q, o} < MES\parenth{\hat{D}, Q, o}\), where \(\hat{D'} = \anglebr{\overline{D'}, X, L}\) and \(\overline{D'} = \anglebr{D, v, \name{err'}}\) for every error probabilities \(\name{err'}\) with \(\func{err'}{t_i} \le p\) for every \(t_i \in T\), and \(\func{err'}{t} = \func{err}{t}\) for every other tuple \(t\).
\end{proposition}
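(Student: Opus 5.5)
The plan is to split the argument along the three ways the improvement-set procedure can choose~\(T\). In two of the cases the bound is a direct \emph{covering} argument; the safe-tuple case needs a separate convexity argument. Throughout I assume \(\func{MES}{\hat{D}, Q, o} > 0\). If the MES is already~\(0\), no strict decrease is possible, and MESReduce never selects such a tuple, since it only runs while the MES exceeds \(\Theta \ge 0\). I will make this assumption explicit.

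\textbf{Case 1: \(v^Q(o)=\true\) and \(T\) is a satisfied term of \(L^Q(o)\).} Because the term evaluates to \(\true\) in Kleene logic, every \(t \in T\) has \(v(t)=\true\). Every \(v_\name{full} \in \func{Inv}{\hat{D'}, Q, o}\) falsifies \(L^Q(o)\), so it falsifies this term. Hence some \(t \in T\) has \(v_\name{full}(t)=\false=\neg v(t)\). By Eq.~\ref{equation:labeling_prob}, this contributes a factor \(\func{err'}{t}\le p\), and all other factors are at most~\(1\). Note that \(\func{Inv}{}\) depends only on \(v\) and the provenance, not on \(\name{err}\), so it is the same set for \(\hat{D}\) and \(\hat{D'}\). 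Taking the maximum gives \(\func{MES}{\hat{D'}, Q, o}\le p\).

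\textbf{Case 2: \(v^Q(o)=\false\) and \(T\) contains, for each term, a variable labeled \(\false\).} Every world in \(\func{Inv}{\hat{D'}, Q, o}\) satisfies some term~\(i\). The chosen tuple \(t\in T\) of that term therefore has \(v_\name{full}(t)=\true\neq v(t)\), which again contributes a factor at most~\(p\). In both covering cases, choosing any \(0<p<\func{MES}{\hat{D}, Q, o}\) proves the claim, uniformly over all admissible \(\name{err'}\).

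\textbf{Case 3: \(T=\{t_0\}\) with \(t_0\) safe and \(\func{err}{t_0}>0\).} Here I reuse the function \(f(q)=\func{MES}{\hat{D_q}, Q, o}\) from the proof of Lemma~\ref{lemma:unsafe_probs_properties}, and sharpen the observation that it is piecewise linear. For each world \(w\), the labeling probability has the form \(R_w(1-q)\) if \(w\) agrees with \(v\) on~\(t_0\), and \(R_w q\) otherwise, where \(R_w \ge 0\) does not depend on~\(q\). So \(f\) is a maximum of linear functions and hence convex on \([0,\func{err}{t_0}]\). Safety gives \(f(q)\le f(\func{err}{t_0})=:c\) for all \(q\). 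A convex function that attains its endpoint value~\(c\) at an interior point of the interval is constant there. A constant segment needs a component line of slope~\(0\), i.e.\ \(R_w=0\), which forces \(c=0\) and contradicts \(c>0\). Thus \(f(q)<c\) for all \(q\in(0,\func{err}{t_0})\). If moreover \(f(0)<c\), convexity gives \(f(q)\le(1-\lambda)f(0)+\lambda c<c\) on the whole of \([0,\func{err}{t_0})\). In that situation any \(p<\func{err}{t_0}\) works.

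The main obstacle is the residual boundary situation in Case~3 where \(f(0)=c\). This can occur when the two one-sided families of worlds balance exactly, and then setting \(\func{err'}{t_0}=0\) would not strictly decrease the MES. My first attempt would be to show that safety together with \(c>0\) excludes \(f(0)=c\) by examining the maximizing worlds at \(q=0\). If that fails, I would restrict the claim to \(0<\func{err'}{t_0}\le p\) in this case, which the convexity argument above fully covers.
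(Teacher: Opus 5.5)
Your Cases~1 and~2 are correct. They cover exactly the situations the paper's proof handles, but by a different mechanism. The paper argues in one line through Lemma~\ref{lemma:mes_lowering_probs}, applied to the labeling probability of every world in $\mathrm{Inv}$. In effect it takes the factors of the tuples in $T$ as the lemma's $n$ factors, at least $l=1$ of which are error factors. Scaling each $\mathrm{err}(t)$, $t\in T$, by some $\alpha<2^{1-|T|}$ then strictly lowers that world's probability, and hence the maximum. This tacitly relies on your covering property (every world in $\mathrm{Inv}$ errs on some tuple of $T$). It also needs $\mathrm{err}(t)\in(0,\frac{1}{2})$ on $T$, and it yields a threshold that depends on $|T|$ and $\min_{t\in T}\mathrm{err}(t)$ but not on the MES. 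Your direct bound $\mathrm{MES}(\hat{D'},Q,o)\le p$ is simpler and tolerates error probabilities equal to $0$ or $\frac{1}{2}$, as well as $\mathrm{err}'=0$. It also shows the MES can be driven arbitrarily low. Finally, it makes explicit the hypothesis $\mathrm{MES}(\hat{D},Q,o)>0$, which the strict inequality needs anyway and which the paper leaves implicit.

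Your Case~3 goes beyond the paper. Its proof never treats the safe-tuple choice $T=\{t_0\}$ separately. There the lemma's hypothesis fails for worlds that agree with $v$ on $t_0$, whose probability $R_w(1-q)$ grows as $q$ shrinks. Your convexity argument correctly gives $f(q)<f(\mathrm{err}(t_0))$ on $(0,\mathrm{err}(t_0))$. The slope-zero step uses that $t_0$ occurs in the provenance, which the word ``related'' in the procedure guarantees.

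However, drop your ``first attempt'': $f(0)=c$ genuinely occurs with $t_0$ safe and $c>0$. Take $L^Q(o)=t_0\vee y$, e.g.\ by projecting two tuples $R(a,1),R(a,2)$ onto the first attribute. Set $v(t_0)=v(y)=0$, $\mathrm{err}(t_0)=\frac{1}{4}$ and $\mathrm{err}(y)=\frac{1}{5}$. With $\mathrm{err}(t_0)=q$, the worlds of $\mathrm{Inv}$ have probabilities $\frac{4}{5}q$, $\frac{1}{5}(1-q)$ and $\frac{1}{5}q$. Hence $f(q)=\max\{\frac{4}{5}q,\frac{1}{5}(1-q)\}\le\frac{1}{5}=f(\frac{1}{4})$ on $[0,\frac{1}{4}]$, so $t_0$ is safe. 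Meanwhile $y$ is risky, since setting $\mathrm{err}(y)=0$ raises the MES to $\frac{1}{4}$. The procedure therefore picks $T=\{t_0\}$, yet $\mathrm{err}'(t_0)=0\le p$ leaves the MES at $\frac{1}{5}$ for every $p>0$.

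So the proposition as literally stated is false in the safe-tuple case. Your fallback is the right correction: require $0<\mathrm{err}'(t_0)\le p$ there, together with positive MES throughout. With that amendment your argument is complete.
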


\begin{proof}
By definition, \(MES\parenth{\hat{D}, Q, o}\) is the maximal labeling probability over \(\func{Inv}{\hat{D}, Q, o}\). Since Lemma~\ref{lemma:mes_lowering_probs} is applicable for every such probability, the desired inequality is achieved for it, and therefore it is achieved for the maximal probability as well.
\end{proof}

\begin{lemma}
\label{lemma:mes_lowering_probs}
Let \(p_1, \dotsc, p_n \in \parenth{0, \frac{1}{2}}\) and let \(b_1, \dotsc, b_n \in \curly{0, 1}\). Assume that at least \(l > 0\) \(b_i\)-s are \(1\), and let \(\alpha_1, \dotsc, \alpha_n \in \parenth{0, 2^{\frac{l-n}{l}}}\). Then,
\[
\prod_{i = 1}^n (\alpha_i p_i)^{b_i} (1-\alpha_i p_i)^{1-b_i} < \prod_{i = 1}^n {p_i}^{b_i} (1-p_i)^{1-b_i}
\]
\end{lemma}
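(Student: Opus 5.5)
The plan is to compare the two products factor by factor, by looking at their ratio. All quantities are strictly positive: $p_i\in(0,\tfrac12)$, and $\alpha_i\in(0,1]$ because $2^{\frac{l-n}{l}}\le 1$ when $l\le n$. Hence $\alpha_i p_i\in(0,\tfrac12)$ as well. So the claimed inequality is equivalent to
\[
R \;=\; \prod_{i:\,b_i=1}\alpha_i \cdot \prod_{i:\,b_i=0}\frac{1-\alpha_i p_i}{1-p_i} \;<\; 1 .
\]
Let $m=\abs{\set{i}{b_i=1}}$, so that $l\le m\le n$ by assumption, and there are $n-m$ indices with $b_i=0$.

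First, I will bound the factors with $b_i=0$. For these I use only $\alpha_i p_i>0$ and $p_i<\tfrac12$:
\[
\frac{1-\alpha_i p_i}{1-p_i} < \frac{1}{1-p_i} < 2 .
\]
Since there are $n-m\le n-l$ such factors, their product is at most $2^{n-m}\le 2^{n-l}$. The inequality is not strict when $m=n$, because the product is then empty.

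Second, I will bound the factors with $b_i=1$. There are $m\ge l\ge 1$ of them, so this product is nonempty, and each factor satisfies $\alpha_i<2^{\frac{l-n}{l}}$. Therefore their product is strictly less than $2^{m\frac{l-n}{l}}$. Because $\frac{l-n}{l}\le 0$ and $m\ge l$, this is at most $2^{l-n}$.

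Multiplying the two bounds gives $R<2^{l-n}\cdot 2^{n-l}=1$, and strictness comes from the nonempty product over $b_i=1$.

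The only delicate step is the exponent bookkeeping in the second bound. The threshold $2^{\frac{l-n}{l}}$ is calibrated to the worst case $m=l$, so one must check that a larger $m$ only helps. It does: every extra index with $b_i=1$ contributes a factor $\alpha_i\le 1$ and removes a potential factor of up to $2$ from the $b_i=0$ side. Apart from this, the argument needs only the elementary inequality $1/(1-p)<2$ for $p<\tfrac12$.
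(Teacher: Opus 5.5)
Your proof is correct and follows essentially the same route as the paper: both bound $1-\alpha_i p_i$ above by $1$ and $1-p_i$ below by $\tfrac12$ on the indices with $b_i=0$, and both use the exponent chain $m\frac{l-n}{l}\le l-n$ together with $2^{n-m}\le 2^{n-l}$ (the paper writes the latter as $l-n\le l'-n$). Your ratio formulation simply lets you handle the case $m=n$ inside the same argument, whereas the paper treats that case separately as trivial.
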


\begin{proof}
\sloppy Denote by \(l' \ge l\) the exact number of \(b_i\)-s that are \(1\). Assume w.l.o.g that the first \(l'\) \(b_i\)s are \(1\), and the rest are \(0\). If \(l' = n\), the claim is trivial. Otherwise, \(\prod_{i = 1}^n (\alpha_i p_i)^{b_i} (1-\alpha_i p_i)^{1-b_i} = \prod_{i = 1}^{l'} \alpha_i p_i \cdot \prod_{i = l' + 1}^n (1-\alpha_i p_i) < \prod_{i = 1}^{l'} \alpha_i p_i < \prod_{i = 1}^{l'} 2^{\frac{l-n}{l}} p_i = 2^{\frac{l' (l - n)}{l}} \cdot \prod_{i = 1}^{l'} p_i\). Notice that \(\prod_{i = l' + 1}^n (1-p_i) > \prod_{i = l' + 1}^n \frac{1}{2} = 2^{l' - n}\), and because \(\frac{l' (l-n)}{l} \le \frac{l (l-n)}{l} = l - n \le l' - n\), we get \(2^{\frac{l' (l - n)}{l}} \cdot \prod_{i = 1}^{l'} p_i < 2^{l' - n} \prod_{i = 1}^{l'} p_i < \prod_{i = 1}^{l'} p_i \cdot \prod_{i = l' + 1}^{n} (1 - p_i) = \prod_{i = 1}^n {p_i}^{b_i} (1-p_i)^{1-b_i}\) as desired.
\end{proof}

Proposition~\ref{proposition:mes_lowering_probs} is demonstrated in the following example:

\begin{example}
If \(p_1 = p_2 = 0.2, p_3 = p_4 = 0.3, b_1 = b_3 = 1\), and \(b_2 = b_4 = 0\), then \(\prod_{i = 1}^4 {p_i}^{b_i} (1-p_i)^{1-b_i} = 0.2 \cdot (1 - 0.2) \cdot 0.3 \cdot (1 - 0.3) = 0.0336\). According to the lemma, if we take \(l = 1\), we can multiply every probability by \(0.125\) to get \(p_1 = p_2 = 0.025, p_3 = p_4 = 0.0375\), and the product would be lowered to \(0.025 \cdot (1 - 0.025) \cdot 0.0375 \cdot (1 - 0.0375) = 0.0008\). If we take \(l = 2\), we can multiply every probability by \(\frac{1}{2}\) to get \(p_1 = p_2 = 0.1, p_3 = p_4 = 0.15\), and the product would be lowered to \(0.1 \cdot (1 - 0.1) \cdot 0.15 \cdot (1-0.15) = 0.011475\).
\end{example}

\paragraph*{Error probabilities choice}
There are several ways to implement the \(\mathsf{NextProbability}\) procedure when the external tools support adjusting target error probabilities (e.g., via crowd size in crowdsourcing or computation time in AI models). We propose a strategy inspired by variable step size in optimization (e.g.,~\cite{shalev2014understanding, patterson2017deep}):
\begin{compactenum}
    \item A monotonically decreasing function is set, e.g. \(f(n) = \frac{1}{n}\).
    \item Given an output tuple~$o^*$, the current minimal non-zero error probability~\(q\) of an input tuple that participates in the derivation of the current output tuple is taken, and we set \(n = \ceil{\frac{1}{q}}\).
    \item The verification probability is defined to be \(\max\{f(n + 1),\Theta\}\), where~$\Theta$ is the optional target MES. Intuitively, we aim at selecting a probability slightly lower than~$q$, but not lower than~$\Theta$.
\end{compactenum}

\subsection{MES reduction hardness}
\label{sec:uncertainty_red_algs_hardness}

The following proposition shows that reducing the MES value of a formula labeled as~\false{} is a computationally hard problem:

\begin{proposition}
Let \(\phi\) be a \(k\)-DNF formula, \(v\) be an assignment such that \(v(\phi) = \false\), and \(\curly{p_i}\), $q$ be probabilities. If \(P \neq NP\), the minimal selection of variables for which updating their probabilities to~$q$ would reduce the MES value cannot be done in polynomial time.
\end{proposition}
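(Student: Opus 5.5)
We reduce from Hitting Set, following the same construction as the lemma on cheapest improvement sets. Take an instance of Hitting Set: a collection $C=\range{S_1}{S_m}$ of subsets of $S=\range{s_1}{s_n}$ and a bound $k$. As in that lemma, we pad so that every $S_i$ has exactly $l\ge 2$ elements. Let $\phi=\bigvee_{i}\bigwedge_{s_j\in S_i}s_j$ be the $l$-DNF formula. Set $v\equiv\false$, so $v(\phi)=\false$, and give every variable the same probability $p_j=p$. We will fix $p$ and $q<p$ below.

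By Lemma~\ref{proposition:incorrect_tuple_mes_lemma}, the MES equals the maximum over terms $i$ of the probability of the world $v_\name{full}^i$. In that world exactly the variables of $S_i$ are set to $\true$, so each term contributes $\prod_{s_j\in S_i}p_j\cdot\prod_{s_j\notin S_i}(1-p_j)$. Initially every term gives the same value $M=p^l(1-p)^{n-l}$.

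Now suppose we update a set $U$ of variables to probability $q$. Term $i$ then contributes
\[
M\cdot\parenth{\frac{q}{p}}^{\abs{U\cap S_i}}\parenth{\frac{1-q}{1-p}}^{\abs{U\setminus S_i}}.
\]
The key step is to choose $p,q$ so that this value is below $M$ if and only if $U\cap S_i\neq\emptyset$. A term that $U$ misses has value at least $M$, because $(1-q)/(1-p)>1$ whenever $q<p$, so the MES cannot drop below $M$ while such a term exists. For a term that $U$ hits, we need $\frac{q}{p}\parenth{\frac{1-q}{1-p}}^{n}<1$ even in the worst case $\abs{U}=n$. Choosing for example $p=1/2$ and $q=2^{-(n+2)}$ gives $\frac{q}{p}\cdot 2^n\le 1/2<1$.

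With this choice, updating $U$ reduces the MES exactly when $U$ is a hitting set of $C$. So a minimal such selection has size at most $k$ iff a hitting set of size at most $k$ exists. The construction is polynomial: $q$ has $O(n)$ bits, and a database and query realizing any monotone $l$-DNF provenance can be built as in the earlier reductions (Prop.~\ref{proposition:correct_tuple_mes_intractable}). Hence a polynomial algorithm for minimal selection would decide Hitting Set, which is impossible unless P$=$NP.

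The main obstacle is the balancing step: an update must strictly lower every hit term while never lowering an unhit term. The ratio estimate above handles it, since the $(1-q)/(1-p)$ factors only increase terms and are uniformly bounded by $2^n$. A secondary point is the padding to equal-size sets; it must keep all initial term values equal. It does, because all probabilities are equal and every padded term has size exactly $l$.
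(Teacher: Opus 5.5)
Your proposal is correct and matches the paper's proof: the same reduction from Hitting Set with equal-size sets, $v\equiv\false$, all error probabilities $\frac{1}{2}$, and $\phi$ the DNF of the sets, so that an update lowers the MES exactly when the selected variables hit every term. What you add is the explicit choice $q=2^{-(n+2)}$ with the per-term ratio bound, which makes precise the paper's informal ``$q$ low enough''. As a side remark, with $p=\frac{1}{2}$ every term already has initial value $2^{-n}$, so the padding step is harmless but not needed.
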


\begin{proof}
We will show that the problem is computationally hard in the case where the probabilities are all \(\frac{1}{2}\) and \(v \equiv \false\).

We will show a reduction from the hitting set problem~\cite{garey1979computers}. Let \(C = \range{S_1}{S_m}\) be a collection of subsets of a set \(S = \range{s_1}{s_n} \subset \mathbb{N}\) and let~\(l\) be a positive number. We need to determine if a hitting set of size at most~\(l\) exists. We can assume that the sets~\(S_i\) are equipotent and each contains at least two elements. If this is not the case, we will first reduce the general hitting set problem to this restricted problem by adding new elements.

Define \(\phi = \bigdisji{i=1}{m}{\parenth{\bigconji{s_j \in S_i}{}{s_j}}}\). Notice that a selection of variables to improve would lower the MES value iff it is a hitting set for the terms, assuming that~\(q\) is low enough. For example, if~\(l\) variables are improved while leaving an unimproved term, the resulting MES value would be \((1-q)^l \cdot 2^{l-n}\). Therefore, if such a selection can be computed in PTIME, then a PTIME solution of the hitting set problem would result.
\end{proof}

The following proposition shows that reducing the maximal MES value of multiple formulae labeled as~\true{} is a computationally hard problem:

\begin{proposition}
Let \(\phi_1, \dotsc \phi_m\) be \(k\)-DNF formulae, \(v\) be an assignment such that \(v(\phi_i) = \true\) for every \(i\), and \(\curly{p_i}\), $q$ be probabilities. If \(P \neq NP\), the minimal selection of variables for which updating their probabilities to \(q\) would reduce the maximal MES value cannot be done in polynomial time.
\end{proposition}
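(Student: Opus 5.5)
We reduce from Hitting Set, mirroring the preceding proposition for outputs labeled \false{} but exploiting the maximum over several outputs. Take an instance \(C=\range{S_1}{S_m}\) over \(S=\range{s_1}{s_n}\) with bound \(l\). Pad the sets so that all have the same size \(r\ge 2\). Then introduce one variable \(x_j\) per element \(s_j\) and one output formula per set:
\[
\phi_i = \bigdisji{s_j\in S_i}{}{x_j},
\]
which is a \(1\)-DNF and in particular a \(k\)-DNF. All variables get label \(\true\) and a common error probability \(p\in(0,\frac12)\). Then \(v(\phi_i)=\true\) for every \(i\).

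Since all labels are \(\true\), every world in \(\func{Inv}{}\) for \(\phi_i\) must set all \(r\) of its variables to \(\false\). The restricted labeling probability is then exactly \(\prod_{s_j\in S_i}p_j\). So each \(\func{MES}{}(\phi_i)=p^r\), the maximal MES over the \(\phi_i\) is \(p^r\), and every formula attains the maximum. This is where the multiple-formula setting does the work: the analogous single formula \(\bigvee x_j\) would have an easily reducible MES.

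Next we characterize when a selection helps. Let \(T\) be a set of variables whose probability is updated to some \(q<p\). The MES of \(\phi_i\) becomes \(q^{|T\cap S_i|}p^{\,r-|T\cap S_i|}\). This is strictly below \(p^r\) iff \(T\cap S_i\neq\emptyset\). Hence the maximal MES drops below \(p^r\) iff \(T\) hits every \(S_i\). The minimal selection that reduces the maximal MES is therefore exactly a minimum hitting set. If it could be found in PTIME, we would decide Hitting Set by comparing its size to \(l\).

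The key point to verify carefully is the MES computation for a disjunction of singleton terms. Here \(\func{Rel}{}\) is exactly the variables of \(\phi_i\). The only world pattern falsifying \(\phi_i\) makes every label erroneous, so the maximum is unique and there is no hidden competing world. The padding step (adding fresh elements so the sets become equipotent) is routine and was already used in the previous proof. Since the statement lets us choose \(\{p_i\}\) and \(q\), fixing \(p=\frac12\) or \(p=0.3\) and any \(q<p\) is allowed. The only remaining care is that the reduction's size is polynomial, which is immediate.
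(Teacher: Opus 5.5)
Your proposal is correct and follows essentially the same route as the paper: a reduction from Hitting Set with padded equipotent sets, one disjunction of single variables $\phi_i$ per set, all labels \true{} with a common error probability, and the observation that the maximal MES drops iff the selected variables hit every set. The only difference is cosmetic: the paper fixes the error probability at $\frac{1}{2}$ and takes ``$q$ low enough,'' while you take any $p\in(0,\frac12)$ and any $q<p$. You also make explicit the value $q^{|T\cap S_i|}p^{\,r-|T\cap S_i|}$ that the paper leaves implicit.
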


\begin{proof}
We will show that the problem is computationally hard in the case where the probabilities are all \(\frac{1}{2}\) and \(v \equiv \true\).

As in the previous proposition, we will show a reduction from the hitting set problem where the sets are equipotent. Let \(C = \range{S_1}{S_m}\) be a collection of equipotent subsets of a set \(S = \range{s_1}{s_n} \subset \mathbb{N}\) and let~\(l\) be a positive number. We need to determine if a hitting set of size at most~\(l\) exists.

For every~\(S_i\), define \(\phi_i = \bigdisji{s \in S_i}{}{s}\). As in the previous proposition, a selection of variables to improve would lower the maximal MES value iff it is a hitting set for the formulae, assuming that~$q$ is low enough: the MES value of each formula is the product of the probabilities of its variables; if a formula doesn’t contain a variable that is included in the selection, its MES value would remain the same; since the sets~$S_i$ are equipotent, it means that the maximal MES value would also remain the same. Therefore, again, if such a selection can be computed in PTIME, then a PTIME solution of the hitting set problem would result.
\end{proof}

\section{Experimental Study}
\label{sec:experiments}
We now describe the experimental study evaluating the usefulness of our indicators and the effectiveness of our algorithms.

\subsection{Implementation}\label{sec:implementation} We have implemented our framework in a prototype system (see~\cite{gitrepo} for the source code), using PostgreSQL~16 as the relational database along with ProvSQL~\cite{senellart2018provsql} for computing provenance expressions. Our algorithms were implemented in Python~3 using the NumPy~\cite{numpy} and Pandas~\cite{mckinney2010pandas} libraries. The CP-SAT solver from OR-Tools library~\cite{cpsatlp} was used for solving ILPs (in Alg.~\ref{alg:correct_tuple_mes}). See~\cite{schreiber2024cleaner} for a demonstration of an end-to-end interactive framework that incorporates our algorithms.

\paragraph*{MESReduce subroutines}
For $\mathsf{ImproveVerification}$ and the baseline algorithms, we use a majority-vote labeling process (as, e.g., in crowdsourcing). The cost of a label corresponds to the number of votes (e.g., crowd responses) required to achieve the desired error probability, and unless stated otherwise, the budget limit is 1000 votes. This enables consistent experimentation across different algorithms and parameter settings. We implement $\mathsf{FindImprovementSet}$ and $\mathsf{NextProbability}$ following Section~\ref{sec:uncertainty_red_algs_inst}, and: \begin{inparaenum}[(i)] \item we limit the search for risky variables to 10 seconds for efficiency; \item we solve the mentioned hitting set instance using greedy approximation; \item if multiple tuples obtain the maximal MES, we consider $\mu$ of them for selecting an improvement set -- a hyper-parameter defaulted to~$50$; see Section~\ref{sec:hyperparams_experiments}.\end{inparaenum}

\subsection{Experimental Setup}
\label{sec:experiments_setting}

\paragraph*{Databases}\phantomsection
\refstepcounter{subsubsection}
Our experiments were performed on NELL~\cite{mitchell2015nell} and TPC-H (https://www.tpc.org/tpch/, scale~1GB) databases. NELL contains real uncertain data gathered automatically from the internet; TPC-H contains synthetic data, and provides scale, diversity, and standardization for algorithmic evaluation of our system. For NELL, we used a workload of 8 queries created by~\cite{drien2023query} independently of this work. TPC-H includes a diverse workload with different types of queries; since our framework only supports SPJU queries, we ran the experiments with $8$~queries that could be naturally transformed to SPJU queries while preserving the semantics, e.g., by stripping outer aggregation and replacing GROUP BY with SELECT DISTINCT where possible; negation, in particular, is not supported in our framework. The query results and their provenance are highly diverse, with~5-300k output tuples and provenance size with 2-58506 variable occurrences in $k$-DNF per output tuple. Unless stated otherwise, we have randomly selected the output tuples of interest~$O\subseteq Q(D)$, by default, $\cardinality{O}=200$.

\paragraph*{Scenarios}
To evaluate our algorithms in different settings, three scenarios are examined:
\begin{enumerate}
    \item \textbf{Real-labels scenario (\texttt{RLBL})}: we use the real human labels provided for a subset of 1.5M tuples from NELL. If a tuple is labeled more than once, we treat the majority label as the ground truth. We use a state-of-the-art external tool to obtain the (naturally imperfect) initial verification.
    \item \textbf{Worst-case scenario (\texttt{WCS})}: we force a highly initial erroneous scenario by setting the ground-truth assignment to be all~\true{}, the initial error probabilities to be all~\(0.499\), and the initial verification assignment to be all~\false.\footnote{We have also examined the symmetric case with ground truth all~\false{} and assignment all~true{}; as this second scenario yielded less errors, we omit it.} This scenario also reflects settings in which no knowledge of the uncertainty of the initial verification steps is known.
    \item \textbf{Average-case scenario (\texttt{AVG})}: we randomly generate a ground truth correctness labeling; then, we choose initial error probabilities uniformly between \(0.2\) and \(0.499\); finally, we generate the initial verification assignment according to these probabilities.
\end{enumerate}

\paragraph*{Baselines}

Our MES metric and the derived risky-tuples indicator are defined over the output of a generic verification system. The \emph{MESReduce} algorithm builds on these indicators to complement existing verification tools by coordinating additional verification steps that reduce output uncertainty and provide provable worst-case guarantees, without making assumptions about the underlying data. Accordingly, our framework is not intended to replace existing systems but to guide and enhance their use, and is therefore not directly comparable to them. To the best of our knowledge, no existing algorithm addresses this setting. We therefore design a set of baseline strategies that iteratively select tuples for verification, against which we evaluate MESReduce. Unlike these baselines, MESReduce selects verification actions based on the MES and risky-tuple indicators in combination with provenance information.

Each baseline is parameterized by the target error probability for each step, \(p\). This introduces a trade-off: a higher probability lowers the verification cost per tuple and allows more tuples to be verified, while a lower one spends more budget per tuple to achieve higher certainty.

We categorize the baselines according to their use of information about the verified tuples; a baseline is \emph{Probability-Aware} if it uses error probabilities, \emph{Structure-Aware} if it uses provenance structure, and \emph{Agnostic} if it uses neither.
\begin{compactitem}
    \item \textbf{Random($p$)}: Selecting input tuples to verify uniformly at random -- \emph{Agnostic}.
    \item \textbf{Formula-Count Greedy($p$)}: Greedily selecting input tuples by the number of affected target output tuples in~$O\subseteq Q(D)$ -- \emph{Structure-Aware}.
    \item \textbf{Occurrences-Count Greedy($p$)}: Greedily selecting input tuples by the total number of occurrences of their annotating variables in the provenance expressions of tuples in~$O\subseteq Q(D)$ -- \emph{Structure-Aware}.
    \item \textbf{Probability Greedy($p$)}: Greedily selecting input tuples to verify by descending order of error probability -- \emph{Probability-Aware}.
\end{compactitem}

All algorithms are tested under the same budget limit.

\begin{figure}[t]
\centering
\includegraphics[width=0.5\linewidth, trim=0 2pt 0 0, clip]{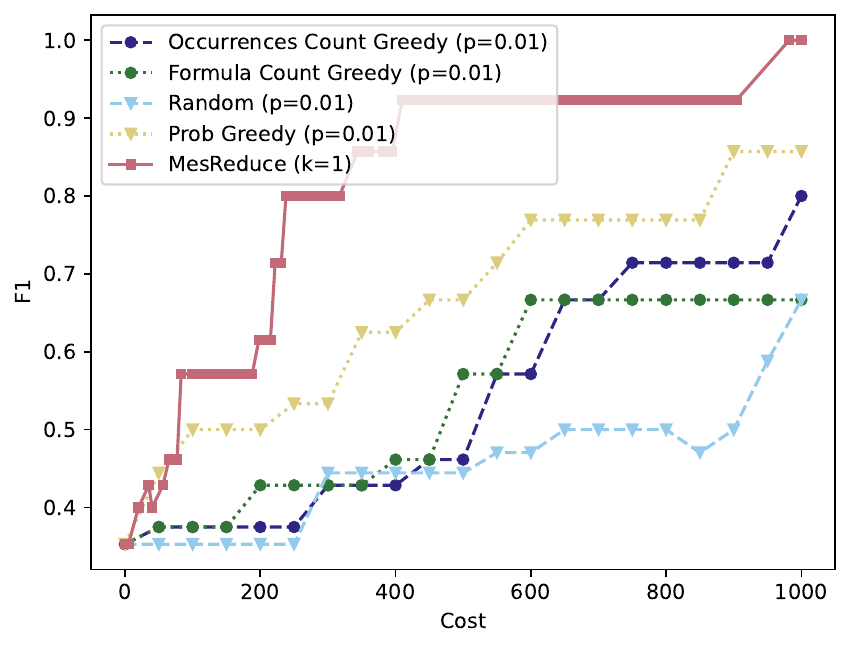}
\caption{\fscore{} metric as a function of the verification cost for selected algorithms on NELL's query Q4 in Scenario~\texttt{AVG}}
\label{fig:nell_q4_f1}
\end{figure}

\paragraph*{Metrics}
We evaluate both verification quality and scalability.
For quality, we use two complementary metrics:

\begin{compactitem}
\item\textbf{Maximal Error Score (MES):} Our worst-case uncertainty metric, which can increase or decrease during verification. We report the logarithmic ratio between the final and initial maximal MES values, where higher ratios indicate greater quality improvement, while the logarithmic form stabilizes computation.

\item\textbf{Extrinsic Accuracy (\fscore{}):} A measure comparing the computed correctness labels to the ground truth ones (unavailable to the algorithms themselves). Because the final-to-initial ratio of \fscore{} was less informative, we instead compute the area under the curve (AUC) of \fscore{} across verification steps.
\end{compactitem}

For illustration, Fig.~\ref{fig:nell_q4_f1} shows the \fscore{} increase during verification for NELL’s query Q4 in Scenario~\texttt{AVG}, where a sharper and earlier rise indicates better performance, as also reflected in the AUC. In this execution, MESReduce outperforms all other baselines in both AUC and final \fscore{} after~1000 steps. While Random and Formula-Count Greedy($p=0.01$) achieve similar final quality, the latter performs better across most budgets and consequently attains a higher AUC.

\subsection{End-to-End Workflows}
\label{sec:case_studies}

\begin{figure}
\fbox{\parbox{0.9\linewidth}{
\begin{smaller}
I have the following tuple from the NELL database: (entity="E", relation="R", value="V"). It represents a belief, which is a fact that can be correct or incorrect. Determine the correctness of this belief. For example, for the following facts you should answer "Correct": (entity="chocolate", relation="thinghascolor", value="brown"), (entity="detroit", relation="citylocatedinstate", value="michigan"), (entity="the\_andy\_warhol\_museum", relation="museumincity", value="pittsburgh"); for the following facts you should answer "Incorrect": (entity="arizona", relation="statehascapital", value="juneau"), (entity="mount\_katahdin", relation="mountaininstate", value="italy"), (entity="the\_da\_vinci\_code", relation="bookwriter", value="paul\_auster").
\end{smaller}
}}
\caption{LLM prompt for the verification of a tuple (entity="E", relation="R", value="V"). Concrete values of E, R, V are substituted.}
\label{fig:llm_verify_prompt}
\end{figure}

We conducted a few case studies to demonstrate end-to-end verification workflows using our framework (Fig.~\ref{fig:architecture_diagram}) in a real-world setting (Scenario~\texttt{RLBL}).

\paragraph*{Case Study (I): crowdsourced improvement of anomaly detection}
We begin with a case study in which initial verification is performed by an  anomaly detection tool, and MESReduce is used to decide \emph{which} input tuples to refine via crowdsourcing and \emph{how many} labels to collect per tuple.
An analyst executes query~Q3 from the workload of~\cite{drien2023query} over the NELL dataset (Fig.~\ref{fig:case_study_query}). The query retrieves teams, sports, and leagues linked through a common athlete and was selected due to the large number of relevant labels in the data. As an initial verification step, the analyst applies DTE~\cite{livernoche2024diffusion}, a recent, unsupervised anomaly detection model trained on structural properties of input tuples. (The extracted features per attribute: length, number of words, average word length, number of special characters, and number of unique characters.) The model is trained for 20 epochs, and the 10\% most anomalous tuples are labeled as incorrect.

\begin{figure}
\begin{smaller}
\begin{Verbatim}[frame=single]
SELECT DISTINCT a.value AS team,
                b.value AS sport,
                c.value AS league
FROM beliefs AS a, beliefs AS b, beliefs AS c
WHERE a.relation = 'concept:athleteplaysforteam'
  AND b.relation = 'concept:athleteplayssport'
  AND c.relation = 'concept:athleteplaysinleague'
  AND a.entity = b.entity
  AND a.entity = c.entity
\end{Verbatim}
\end{smaller}
\caption{NELL query \(Q_{3}\)}
\label{fig:case_study_query}
\end{figure}

Since DTE produces anomaly scores rather than calibrated error probabilities, we approximate error probabilities by mapping the anomaly threshold to a maximum error probability of~0.5 and linearly normalizing the remaining scores within the range $[0,0.5]$ based on their distance from the threshold. As shown below, even this coarse estimation is sufficient for our framework to operate effectively.

At this point, the analyst has preliminary verification results with unknown impact on the query output. Using our framework, she computes the MES to assess output reliability and identifies a highly uncertain output tuple, namely (\texttt{concept:sportsteam:rangers}, \texttt{concept:sport:baseball}, \texttt{concept:sportsleague:nhl}). Examining the risky-tuples indicator reveals that all input tuples contributing to this output are risky, indicating that refining any single input tuple in isolation would not reduce its uncertainty.

To systematically improve reliability under budget constraints, the analyst invokes the \emph{MESReduce} algorithm over the full set of query output tuples. MESReduce automatically selects input tuples for re-verification and determines how many crowdsourced labels to collect for each. Final labels are assigned by majority vote. As MESReduce progresses, the analyst observes a substantial reduction in the query MES log value, from $-2.079$ to $-8.316$, reflecting a significantly stronger worst-case reliability guarantee without making assumptions about the underlying data. This improvement is also reflected in extrinsic quality metrics (Fig.~\ref{fig:case_study_metrics}): precision increases from~$0.961$ to~$1.0$, recall from~$0.833$ to~$0.933$, and \fscore{} from~$0.893$ to~$0.965$.

\begin{figure}[t]
\center
\includegraphics[width=0.5\linewidth, trim=0 2pt 0 0, clip]{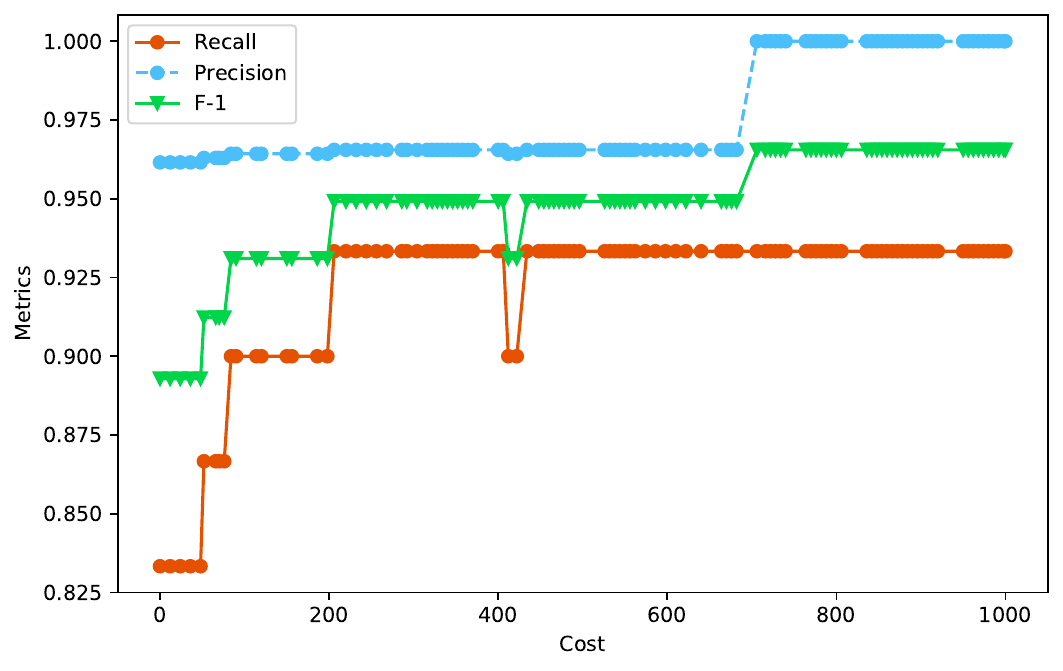}
\caption{The progress of the precision, recall, and \fscore\ scores in the case study experiment (Scenario~\texttt{RLBL})}
\label{fig:case_study_metrics}
\end{figure}

\paragraph*{Case Study~(II): Improving upon an LLM with crowdsourcing}
This case study illustrates how an analyst can use an LLM-based verifier to obtain initial verification results with error probabilities, and then apply MESReduce to guide targeted crowdsourcing for further improvement.

The analyst again executes NELL query~Q3, this time relying on a cost-effective LLM for initial verification. She selects Gemini~2.0 Flash-Lite, the smallest and most budget-friendly Gemini model available via API (\url{https://ai.google.dev/gemini-api/docs}). The analyst sends the relevant input tuples for verification in separate API calls, using a simple output schema of \emph{Correct} or \emph{Incorrect}. The prompt, shown in Figure~\ref{fig:llm_verify_prompt}, provides examples of correct and incorrect NELL tuples and asks the model to verify a given tuple.
To estimate error probabilities, the analyst uses the token-level probabilities of the two possible outputs~\cite{kadavath2022language, gemini_logprobs, openai_logprobs}, yielding values that range from $4.768\times10^{-6}$ to $0.487$. Using these estimates, she computes the MES log values for the query results and observes an initial overall MES of $-0.897$, indicating substantial uncertainty.

To improve reliability under budget constraints, the analyst invokes MESReduce on the output tuples, using crowdsourcing with a limit of~100 votes. MESReduce automatically selects which of the~322 relevant input tuples to refine and how much effort to allocate to each. After refinement, the maximal MES log value decreases to $-3.262$, indicating a substantially stronger worst-case reliability guarantee. Extrinsic quality metrics also improve: precision remains at $1.0$, recall increases from $0.9$ to $0.966$, and the \fscore{} rises from $0.947$ to $0.983$.

\paragraph*{Case Study~(III): Reducing the cost of automated verification}
Our third case study illustrates how MESReduce can be used to substantially reduce the cost of automated verification. Rather than applying an expensive, high-quality verifier to the entire database, which is costly and does not scale, the analyst first applies a fast, low-cost anomaly detector (DTE, as in Case Study~(I)) and then selectively invokes a high-quality LLM-based verifier using MESReduce. In this setting, the LLM is powered by the most capable Gemini model available via API, Gemini~3~Pro. MESReduce improves both the reliability and quality of the query results while requiring far fewer LLM invocations.

As in the first case study, the analyst applies DTE as an initial verification step for NELL query~Q3. Due to the limited accuracy of this automated model, the resulting verification exhibits relatively high uncertainty, reflected in elevated MES values. The analyst then invokes \emph{MESReduce} using the Gemini~3~Pro model, limiting the process to~100 API calls. In this scenario, the analyst does not have reliable error probability estimates for the LLM and instead treats its outputs as nearly perfect, assigning a uniform error probability of~0.

Despite this coarse and potentially inaccurate probability assumption, which may lead to suboptimal behavior compared to using precise probabilities, MESReduce still substantially reduces output uncertainty: the maximal MES log value decreases from $-2.085$ to $-7.611$. More importantly, extrinsic quality metrics, which are independent of the provided probabilities, also improve significantly: precision increases from $0.961$ to $1.0$, recall remains at $0.833$, and the F1 score improves from $0.892$ to $0.909$.

\subsection{Verification Quality}
\label{sec:verification_quality}
We now present experimental results that evaluate the quality improvement achieved by the different algorithms.

\paragraph*{MES reduction}
We examined the effect of the baseline algorithms on the MES in the average-case, Scenario~\texttt{AVG}. Table~\ref{tab:mes_reduction_experiment} presents the ratio of final to initial MES values in logarithmic form, over NELL and TPC-H, averaged over~\(100\) executions. As shown, MESReduce achieves the best improvement across all queries in both datasets. Its averaged ratios are all greater than~\(1\) (and~$\infty$ when the final MES is~\(0\)), indicating that MES is reduced and that query output reliability improves with a worst-case guarantee. In contrast, for some baselines and queries, the ratio is below~\(1\), meaning that MES increases and worst-case reliability degrades despite additional verification effort. The comparative performance of the baselines shows no clear trend.

\begin{table*}[t]
 \caption{Average ratio of final MES to initial MES (log value) in Scenario~\texttt{AVG}, the best-performing algorithm in bold.}
  \label{tab:mes_reduction_experiment}
 \begin{adjustbox}{max width=\linewidth}
  \begin{tabular}{*{10}{c}}
  \toprule
    \textbf{NELL Queries}& & \textbf{Q1} & \textbf{Q2} & \textbf{Q3} & \textbf{Q4} & \textbf{Q5} & \textbf{Q6} & \textbf{Q7} & \textbf{Q8} \\
    \midrule
    MESReduce & & \textbf{1.68} & \textbf{2.53} & \textbf{1.93} & \textbf{6.63} & \textbf{1.26} & \textbf{2.26} & \textbf{1.67} & \textbf{5.44} \\
    \midrule
    Random & \(p = 0.01\) & 0.74 & 0.84 & 0.84 & 0.63 & 0.89 & 0.88 & 0.62 & 0.56 \\
    \midrule
    Random & \(p = 0.0001\) & 0.74 & 0.92 & 0.91 & 0.69 & 0.95 & 0.94 & 0.70 & 0.70 \\
    \midrule
    Formula-Count & \(p = 0.01\) & 0.74 & 0.63 & 1.00 & 0.83 & 0.71 & 0.61 & 0.56 & 0.70 \\
    \midrule
    Formula-Count & \(p = 0.0001\) & 0.73 & 0.65 & 0.99 & 0.67 & 0.70 & 0.63 & 0.56 & 0.58 \\
    \midrule
    Occurrences-Count & \(p = 0.01\) & 0.74 & 0.83 & 1.00 & 0.84
& 0.71 & 0.71 & 0.56 & 0.64 \\
    \midrule
    Occurrences-Count & \(p = 0.0001\) & 0.73 & 0.88 & 1.00 & 0.66 & 0.70 & 0.78 & 0.56 & 0.63 \\
    \midrule
    Prob-Greedy & \(p = 0.01\) & 0.98 & 0.88 & 0.88 & 0.95 & 0.91 & 0.88 & 0.75 & 0.76 \\
    \midrule
    Prob-Greedy & \(p = 0.0001\) & 0.86 & 0.93 & 0.92 & 0.81 & 0.96 & 0.92 & 0.76 & 0.77 \\
    \bottomrule
  \end{tabular}

  \begin{tabular}{*{10}{c}}
    \toprule
    \textbf{TPC-H Queries}& & \textbf{Q3} & \textbf{Q4} & \textbf{Q5} & \textbf{Q6} & \textbf{Q7} & \textbf{Q8} & \textbf{Q9} & \textbf{Q10} \\
    \midrule
    MESReduce & & \textbf{1.59} & \(\infty\) & \(\infty\) & \textbf{2.93} & \(\infty\) & \(\infty\) & \textbf{1.33} & \textbf{1.41} \\ \midrule
    Random & \(p = 0.01\) & 0.74 & 1.00 & 1.00 & 1.00 & 0.82 & 0.92 & 0.83 & 0.85 \\ \midrule
    Random & \(p = 0.0001\) & 0.81 & 1.00 & 1.00 & 1.00 & 0.95 & 0.96 & 0.90 & 0.89 \\ \midrule
    Formula-Count & \(p = 0.01\) & 0.55 & 1.00 & 1.00 & 1.00 & 1.37 & 1.22 & 0.84 & 0.82 \\ \midrule
    Formula-Count & \(p = 0.0001\) & 0.59 & 1.00 & 1.00 & 1.00 & 2.44 & 1.86 & 0.84 & 0.80 \\ \midrule
    Occurrences-Count & \(p = 0.01\) & 0.56 & 1.00 & 1.00 & 1.00 & 1.34 & 1.24 & 0.84 & 0.81 \\ \midrule
    Occurrences-Count & \(p = 0.0001\) & 0.62 & 1.00 & 1.00 & 1.00 & 2.44 & 1.87 & 0.85 & 0.81 \\ \midrule
    Prob-Greedy & \(p = 0.01\) & 0.80 & 1.00 & 1.00 & 1.48 & 0.79 & 0.96 & 0.84 & 0.85 \\ \midrule
    Prob-Greedy & \(p = 0.0001\) & 0.86 & 1.00 & 1.00 & 1.22 & 0.92 & 1.04 & 0.90 & 0.89 \\
    \bottomrule
  \end{tabular}
  \end{adjustbox}
\end{table*}

\begin{table*}[t]
 \caption{Worst \fscore{} AUC scores in Scenario~\texttt{WCS}}
  \label{tab:worst_auc_scenario_1_experiment}
 \begin{adjustbox}{max width=\linewidth}
  \begin{tabular}{*{10}{c}}
    \toprule
  \textbf{NELL Queries} & & \textbf{Q1} & \textbf{Q2} & \textbf{Q3} & \textbf{Q4} & \textbf{Q5} & \textbf{Q6} & \textbf{Q7} & \textbf{Q8} \\
    \midrule
    MESReduce & & \textbf{342.63} & \textbf{201.46} &  \textbf{148.70} & \textbf{587.66} & \textbf{61.03} &  \textbf{101.10} & \textbf{181.57} & \textbf{438.82} \\
    \midrule
    Random & \(p = 0.01\) & 34.09 & 0.00 & 0.00 & 105.28 & 0.00 & 0.00 & 0.00 & 15.34 \\
    \midrule
    Random & \(p = 0.0001\) & 2.23 & 0.00 & 0.00 & 0.00 & 0.00 & 0.00 & 0.00 & 0.00 \\
    \midrule
    Formula-Count & \(p = 0.01\) & 74.03 & 0.00 & 8.12 & 387.08 & 7.47 & 0.00 & 0.00 & 301.64 \\
    \midrule
    Formula-Count & \(p = 0.0001\) & 0.00 & 0.00 & 3.73 & 188.67 & 0.00 & 0.00 & 0.00 & 91.08 \\
    \midrule
    Occurrences-Count & \(p = 0.01\) & 72.29 & 0.00 & 12.66 & 351.04 & 12.01 & 0.00 & 0.00 & 204.61 \\
    \midrule
    Occurrences-Count & \(p = 0.0001\) & 0.00 & 0.00 & 12.37 & 148.35 & 11.14 & 0.00 & 0.00 & 83.45 \\
    \midrule
    Prob-Greedy & \(p = 0.01\) & 38.43 & 0.00 & 0.00 & 86.85 & 0.00 & 0.00 & 0.00 & 16.51 \\
    \midrule
    Prob-Greedy & \(p = 0.0001\) & 4.76 & 0.00 & 0.00 & 0.00 & 0.00 & 0.00 & 0.00 & 0.00 \\
    \bottomrule
  \end{tabular}

  \begin{tabular}{*{8}{c}}
    \toprule
   \textbf{TPC-H Queries} & & \textbf{Q3} & \textbf{Q6} & \textbf{Q7} & \textbf{Q8} & \textbf{Q9} & \textbf{Q10} \\
    \midrule
    MESReduce & & \textbf{189.19} & \textbf{599.71} & \textbf{28.02} & 8.06 & 2.45 & \textbf{68.01} \\
    \midrule
    Random & \(p = 0.01\) & 0.00 & 354.86 & 0.00 & 0.00 & 0.00 & 0.00 \\
    \midrule
    Random & \(p = 0.0001\) & 0.00 & 212.5 & 0.00 & 0.00 & 0.00 & 0.00 \\
    \midrule
    Formula-Count & \(p = 0.01\) & 0.25 & 339.26 & 0.00 & 0.00 & 10.78 & 26.03 \\
    \midrule
    Formula-Count & \(p = 0.0001\) & 0.00 & 193.35 & 0.00 & 0.00 & 0.00 & 3.82 \\
    \midrule
    Occurrences-Count & \(p = 0.01\) & 0.00 & 337.94 & 0.00 & \textbf{11.97} & \textbf{11.97} & 0.00 \\
    \midrule
    Occurrences-Count & \(p = 0.0001\) & 0.00 & 193.35 & 0.00 & 0.00 & 0.00 & 0.00 \\
    \midrule
    Prob-Greedy & \(p = 0.01\) & 0.00 & 355.98 & 0.00 & 0.00 & 0.00 & 0.00 \\
    \midrule
    Prob-Greedy & \(p = 0.0001\) & 0.00 & 213.43 & 0.00 & 0.00 & 0.00 & 0.00 \\
    \bottomrule
  \end{tabular}
\end{adjustbox}
\end{table*}

\paragraph*{Extrinsic Accuracy}
We assessed the impact on label accuracy as measured by the AUC of the \fscore{} score, comparing to the ground truth. Focusing on worst-case guarantees, we first examined Scenario~\texttt{WCS} and the worst AUC obtained across~100 executions of each algorithm and query. The results for NELL and TPC-H are shown in Table~\ref{tab:worst_auc_scenario_1_experiment}.\footnote{TPC-H queries Q4 and Q5, with few output tuples, were omitted from this experiment as their AUC was uninformative.} Except for TPC-H queries Q8 and Q9, MESReduce consistently achieves the best score, demonstrating that reducing the MES effectively improves the extrinsic accuracy in the worst-case scenario.

\begin{table*}[t]
\caption{Worst \fscore{} AUC scores in Scenario~\texttt{AVG}}  \label{tab:worst_auc_scenario_2_experiment}
 \begin{adjustbox}{max width=\linewidth}
  \begin{tabular}{*{10}{c}}
    \toprule
\textbf{NELL Queries} & & \textbf{Q1} & \textbf{Q2} & \textbf{Q3} & \textbf{Q4} & \textbf{Q5} & \textbf{Q6} & \textbf{Q7} & \textbf{Q8} \\
    \midrule
    MESReduce & & \textbf{531.88} & 377.22 & \textbf{533.93} & \textbf{549.50} & 219.45 & 236.72 & \textbf{383.03} & \textbf{431.67} \\
    \midrule
    Random & \(p = 0.01\) & 335.88 & 155.25 & 452.95 & 67.73 & 0.00 & 0.00 & 76.62 & 0.00 \\
    \midrule
    Random & \(p = 0.0001\) & 281.29 & 172.19 & 458.31 & 43.18 & 0.00 & 0.00 & 45.00 & 0.00 \\
    \midrule
    Formula-Count & \(p = 0.01\) & 419.46 & 208.67 & 465.73 & 263.17 & \textbf{246.79} & 86.69 & 294.52 & 0.00 \\
    \midrule
    Formula-Count & \(p = 0.0001\) & 392.39 & 159.67 & 456.85 & 182.80 & 242.44 & 67.47 & 207.44 & 0.00 \\
    \midrule
    Occurrences-Count & \(p = 0.01\) & 418.60 & \textbf{386.28} & 466.55 & 272.43 & 211.15 & \textbf{249.17} & 308.13 & 0.00 \\
    \midrule
    Occurrences-Count & \(p = 0.0001\) & 392.39 & 259.58 & 458.40 & 185.15 & 242.44 & 101.22 & 206.82 & 0.00 \\
    \midrule
    Prob-Greedy & \(p = 0.01\) & 379.11 & 101.44 & 423.33 & 0.00 & 0.00 & 0.00 & 99.85 & 0.00 \\
    \midrule
    Prob-Greedy & \(p = 0.0001\) & 314.17 & 89.12 & 416.10 & 0.00 & 0.00 & 0.00 & 54.90 & 0.00 \\
    \bottomrule
  \end{tabular}

  \begin{tabular}{*{8}{c}}
    \toprule
\textbf{TPC-H Queries} & & \textbf{Q3} & \textbf{Q6} & \textbf{Q7} & \textbf{Q8} & \textbf{Q9} & \textbf{Q10} \\
    \midrule
    MESReduce & & 411.68 & \textbf{733.60} & \textbf{145.85} & 0.00 & 0.00 & \textbf{219.78} \\
    \midrule
    Random & \(p = 0.01\) & 143.86 & 664.35 & 0.00 & 0.00 & 0.00 & 0.00 \\
    \midrule
    Random & \(p = 0.0001\) & 131.66 & 632.65 & 0.00 & 0.00 & 0.00 & 0.00 \\
    \midrule
    Formula-Count & \(p = 0.01\) & 432.61 & 659.94 & 0.00 & 0.00 & 0.00 & 138.15 \\
    \midrule
    Formula-Count & \(p = 0.0001\) & 314.07 & 621.11 & 0.00 & 0.00 & 0.00 & 112.49 \\
    \midrule
    Occurrences-Count & \(p = 0.01\) & \textbf{451.69} & 659.94 & 0.00 & 0.00 & 0.00 & 215.77 \\
    \midrule
    Occurrences-Count & \(p = 0.0001\) & 323.45 & 621.11 & 0.00 & 0.00 & 0.00 & 168.04 \\
    \midrule
    Prob-Greedy & \(p = 0.01\) & 165.43 & 710.45 & 0.00 & 0.00 & 0.00 & 53.10 \\
    \midrule
    Prob-Greedy & \(p = 0.0001\) & 151.22 & 650.67 & 0.00 & 0.00 & 0.00 & 18.79 \\
    \bottomrule
  \end{tabular}
\end{adjustbox}
\end{table*}

The worst results for Scenario~\texttt{AVG} are shown in Table~\ref{tab:worst_auc_scenario_2_experiment} for NELL and TPC-H. In this scenario, with fewer output label errors, error detection and correction are more challenging. Nevertheless, MESReduce remains robust, outperforming in most queries, and achieving comparable performance in others. For this quality metric, baselines with a higher $p$ parameter usually achieved larger improvements.

In summary, these experiments demonstrate that, despite the MES focusing on a worst-case scenario rather than the most likely world, reducing the MES effectively enhances accuracy and reduces errors in data verification.

\subsection{Scalability Experiments}
\label{sec:scalability_experiments}

\begin{table*}[t]
\caption{Scalability experiments: average MES computation time (milliseconds)}
\label{tab:scalability_experiments}
 \begin{adjustbox}{max width=\linewidth}
  \begin{tabular}{*{9}{c}}
    \toprule
 \textbf{NELL Queries} & \textbf{Q1} & \textbf{Q2} & \textbf{Q3} & \textbf{Q4} & \textbf{Q5} & \textbf{Q6} & \textbf{Q7} & \textbf{Q8} \\
    \midrule
    Correct tuple & 2.094 & 2.605 & 2.321 & 2.170 & 2.693 & \textbf{2.751} & 2.114 & 2.324 \\
    Incorrect tuple & 0.016 & 0.073 & 0.042 & 0.024 & 0.077 & \textbf{0.093} & 0.017 & 0.034 \\
    \bottomrule
  \end{tabular}
  \begin{tabular}{*{9}{c}}
    \toprule
 \textbf{TPC-H Queries}& \textbf{Q3} & \textbf{Q4} & \textbf{Q5} & \textbf{Q6} & \textbf{Q7} & \textbf{Q8} & \textbf{Q9} & \textbf{Q10} \\
    \midrule
    Correct tuple & 2.242 & \textbf{1450.629} & 466.495 & 2.017 & 2.160 & 2.216 & 2.285 & 2.456 \\
    Incorrect tuple & 0.029 & \textbf{152.093} & 68.364 & 0.013 & 0.024 & 0.028 & 0.024 & 0.037 \\
    \bottomrule
  \end{tabular}
  \end{adjustbox}
\end{table*}

\begin{figure}[t]
\center
\subfloat[Running Time Experiment]{%
  \includegraphics[width=.4\linewidth]{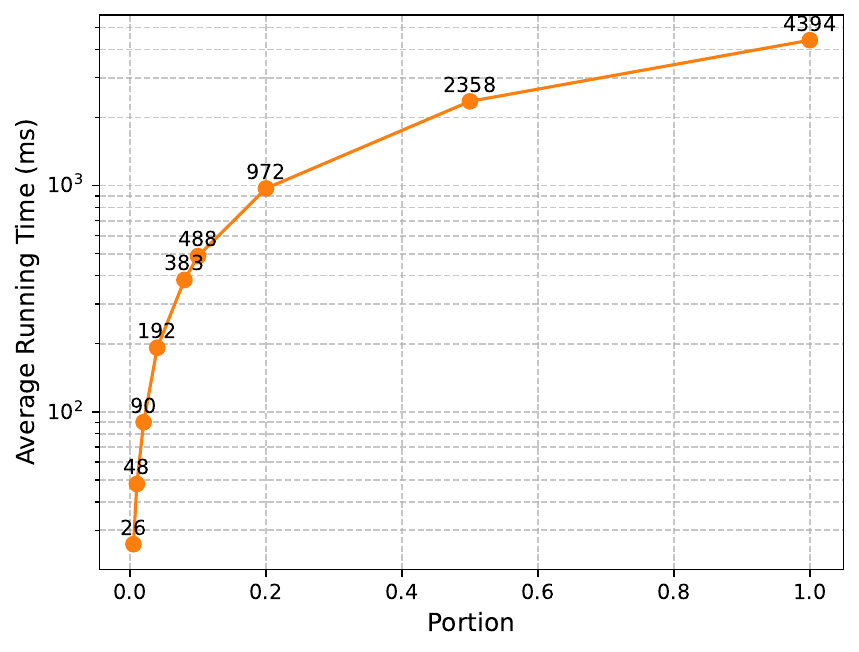}%
}
\hfil
\subfloat[Memory Usage Experiments]{%
  \includegraphics[width=.4\linewidth]{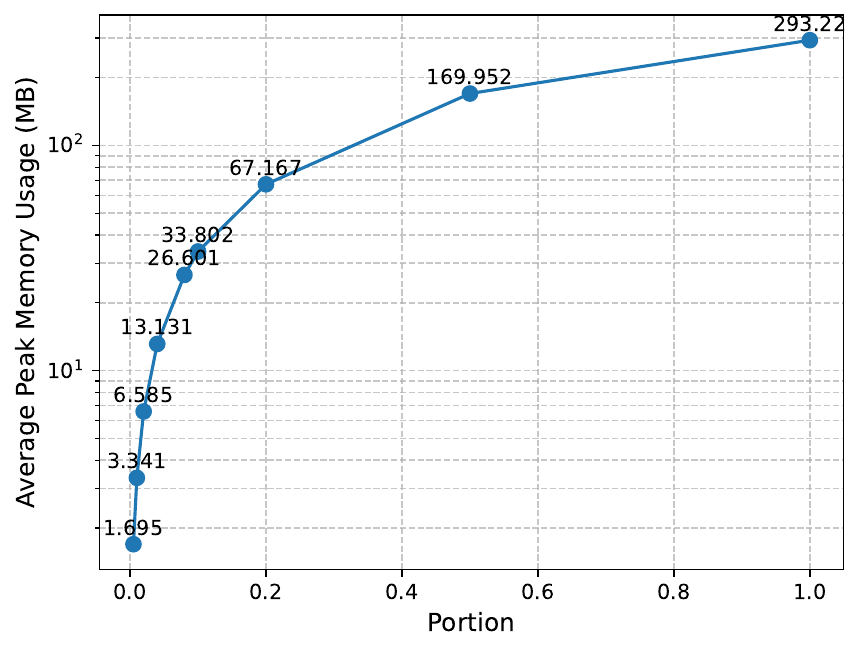}%
}
\caption{Scalability Experiment}
\label{fig:scalability_exp}
\end{figure}

\begin{figure*}[t]
\center
\subfloat{%
  \includegraphics[width=.4\textwidth]{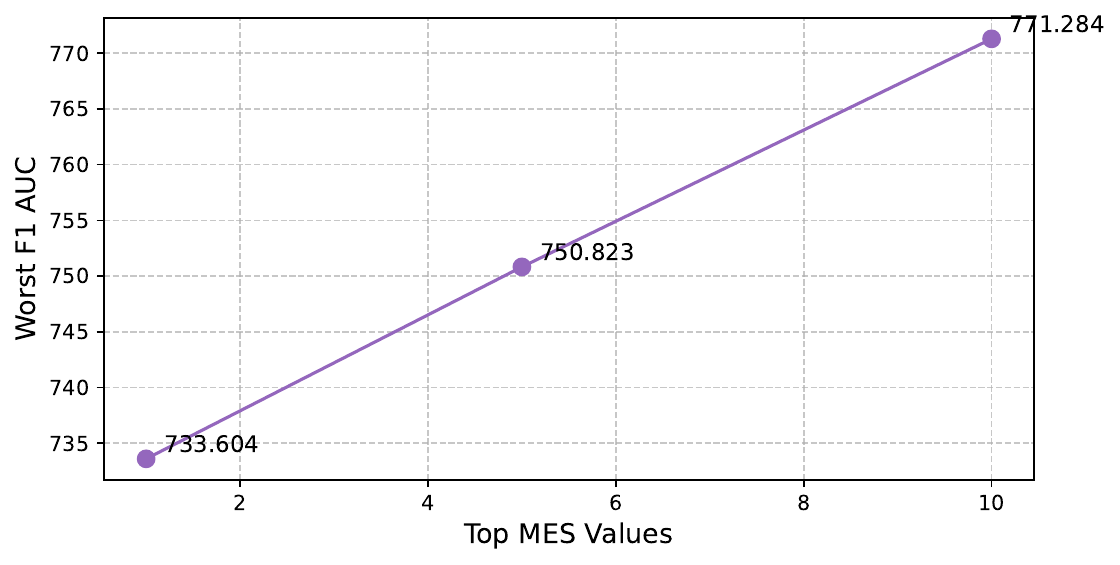}%
}
\hfil
\subfloat{%
  \includegraphics[width=.4\textwidth]{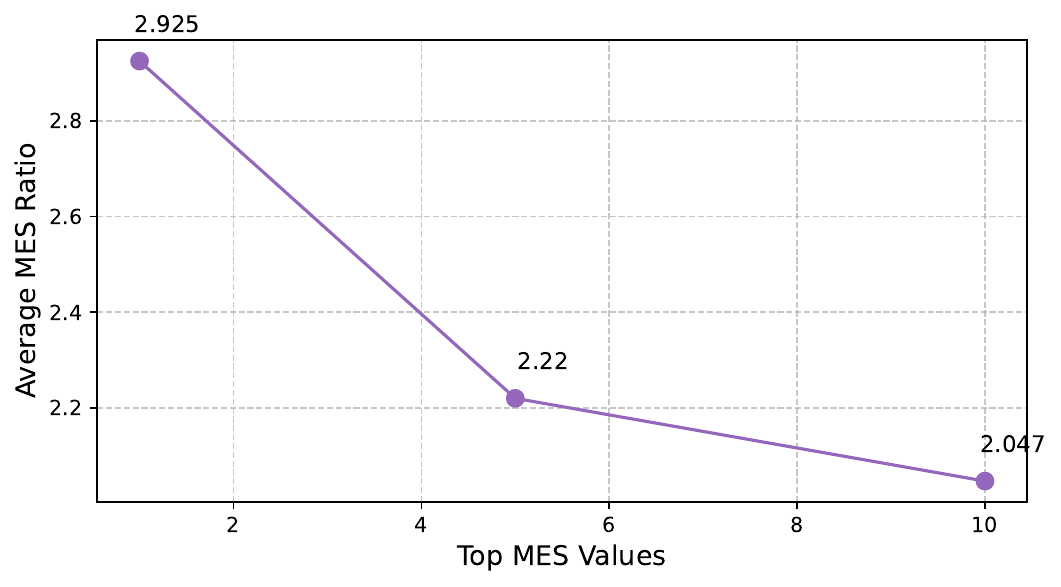}%
}
\caption{Top-k MES values Experiment (TPC-H Q6, Scenario~\texttt{AVG})}
\label{fig:top_k_exp}
\end{figure*}

\begin{figure*}[t]
\center
\subfloat{%
  \includegraphics[width=.4\linewidth]{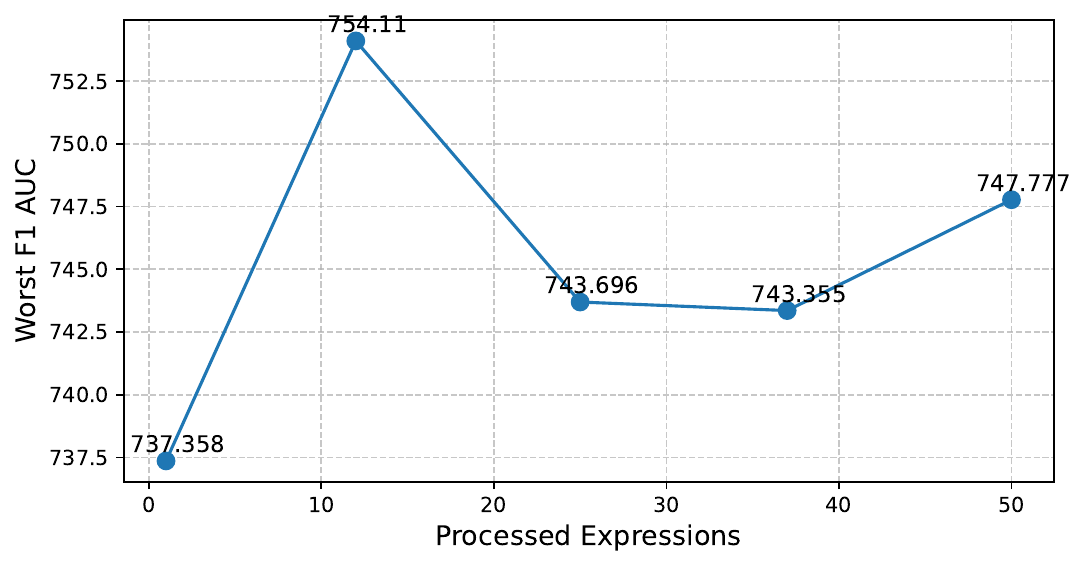}%
}
\hfil
\subfloat{%
  \includegraphics[width=.4\linewidth]{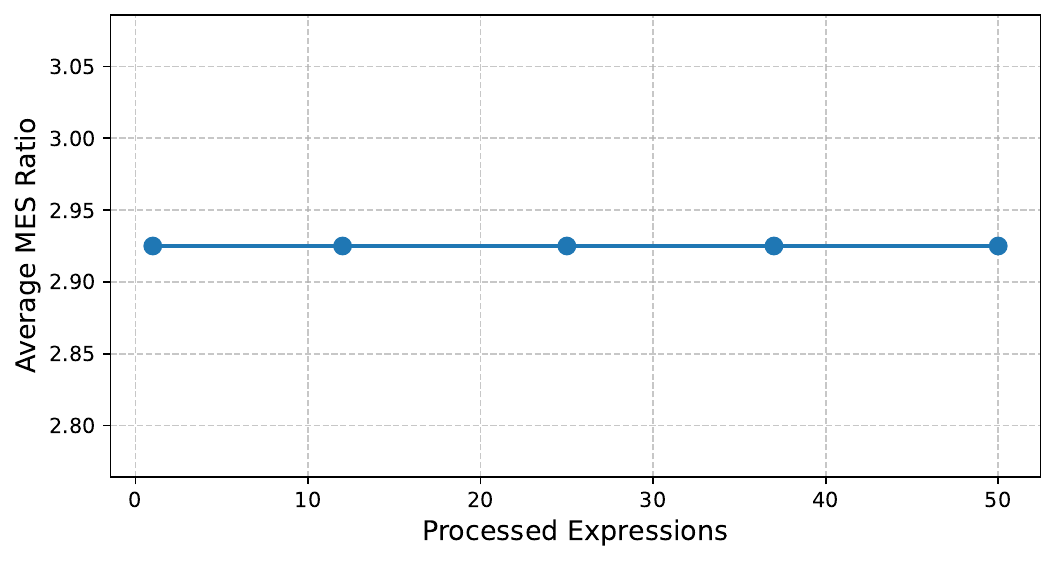}%
}
\caption{Tie-Breaking Experiment (TPC-H Q6, Scenario~\texttt{AVG})}
\label{fig:tie_breaking_exp}
\end{figure*}

\paragraph*{MES Computation}
These experiments evaluate the scalability of our MES computation algorithms (Section~\ref{sec:computation}). Since different algorithms are used depending on the output label, we randomly generated satisfying and non-satisfying assignments for each query, yielding~\true{} and~\false{} labels, respectively, and measured the MES computation time for every output tuple. We then computed the mean runtime separately for correct tuples (ILP solver from Algorithm~\ref{alg:correct_tuple_mes}) and incorrect ones (PTIME Algorithm~\ref{alg:incorrect_tuple_mes}). The procedure was repeated ten times, and the averaged results are reported in Table~\ref{tab:scalability_experiments}. All measurements are in milliseconds, with the maximal value in each row highlighted in bold.

For most queries, the computation is highly efficient, taking less than 3~milliseconds. Only TPC-H queries~Q4 and~Q5 required substantially longer running times due to the longer provenance expressions per tuple; for Q4, the average computation exceeded one second per correct tuple, which, nevertheless, remains reasonable. Unsurprisingly, the PTIME computation for output tuples labeled as incorrect is consistently faster than the ILP solver execution for correct ones.

\paragraph*{ILP Solver}
To further assess our usage of the ILP solver, we constructed a stress test by concatenating, via disjunction, three provenance expressions from the most computation-intensive query, TPC-H~Q4. The resulting expression consisted of \(118644\) variables and \(87084\) terms. We randomly generated a satisfying assignment and corresponding probabilities, then iteratively increased the fraction of terms considered ($0.005$, $0.01$, $0.02$, $0.04$, $0.08$, $0.1$, $0.2$, $0.5$, $1$). For each portion, we averaged ten MES computations. Fig.~\ref{fig:scalability_exp} presents the average running time and memory consumption. The runtime ranged from~25~milliseconds to~4.3~seconds, and memory usage from~1.7~MB to~293~MB, demonstrating that the algorithm scales well.
\subsection{Output Tuples Hyperparameters}
\label{sec:hyperparams_experiments}

MESReduce has two generalizations that consider more output tuples per iteration: one selects the top-$k$ tuples by MES, and the other includes up to $\mu$ tuples tied for the highest value. We fixed an arbitrary query, TPC-H~Q6, and ran MESReduce in Scenario~\texttt{AVG}, varying one hyperparameter at a time (\(k \in \curly{1,5,10}\), \(\mu \in \curly{1, 12, 25, 37, 50}\)). The aggregated results of $100$~independent runs are depicted in Figures~\ref{fig:top_k_exp},~\ref{fig:tie_breaking_exp} for the first and second generalizations, resp. Both variants yield larger improvement sets, resulting in fewer but more computationally intensive iterations and reduced adaptability during verification. They slightly improve extrinsic accuracy but show minor degradation in MES reduction; overall quality remains comparable, with no clear trend. Hence, these variants could be suitable for high-latency yet highly parallelizable verifiers (e.g., crowdsourcing). We plan to further study this trade-off between iteration count and adaptability in future work.

\section{Related work}
\label{sec:related_work}
\paragraph*{Data Quality}
Our work relates to data verification, anomaly detection, and data cleaning. The external verifier used by \emph{MESReduce} may come from any of these domains.

\emph{Data verification} assesses tuple correctness, typically via expert labeling by humans (e.g.,~\cite{mahdavi2019raha,nguyen2019user}) or trained machine learning models (e.g.,~\cite{dong2025automated,mahdavi2019raha,redyuk2021automating}), or indirectly through data ``unit tests'' (e.g.,~\cite{breck2019data,bylois2024data,caveness2020tensorflow,chen2025towards,schelter2018automating}). In some domains, correctness is assessed by cross-checking overlapping data sources (e.g.,~\cite{tian2019automated}). 
\emph{Anomaly detection} instead flags potentially incorrect tuples based on statistical or structural irregularities (e.g.,~\cite{chang2023data,chen2025towards,dong2025automated,gopalan2019pidforest,livernoche2024diffusion,liu2008isolation,mahdavi2019raha,pang2019deep,shankar2023automatic,redyuk2021automating}). Finally, large language models (LLMs) can be used as verifiers`\cite{chen2025towards}, as demonstrated in Section~\ref{sec:case_studies}. Our framework supports two complementary uses of LLM-based verification. First, it enables uncertainty-aware analysis of LLM-provided labels, accounting for factors such as missing or outdated training data. Even in our case studies, which involve largely general-domain data, we further improve the reliability and accuracy of LLM verification labels through targeted crowdsourcing (Case Study~(II)). Second, the framework helps reduce the cost of LLM-based verification by combining a fast, low-cost automated verifier with selective LLM calls guided by \emph{MESReduce} (Case Study~(III)).

Data verification is a component of the broader data cleaning problem, which addresses incorrect or inconsistent data (see~\cite{ilyas2019data}). Prior work typically defines clean data via constraints and focuses on identifying and repairing violations (e.g.,~\cite{afrati2009repair,bergman2015query,chu2013holistic,desa2019formal,geerts2020cleaning,livshits2018computing,rezig2021horizon}), or assumes conformance to statistical distributions (e.g.,~\cite{yakout2013scared}). Techniques include optimal repair computation (e.g.,~\cite{afrati2009repair,bertossi2013data,chu2013holistic,desa2019formal,geerts2020cleaning,gilad2020multiple,livshits2018computing,rezig2021horizon}), machine learning–based approaches (e.g.,~\cite{neutatz2021cleaning,rekatsinas2017holoclean}), and human-in-the-loop methods (e.g.,~\cite{assadi2018cleaning,bergman2015query,cheng2008cleaning,cheng2013cleaning,chu2015katara,drien2023query,fariha2021coco,krishnan2016activeclean,lin2018human,wang2014sample,yakout2011guided}).

Among existing systems, two widely cited approaches are HoloClean~\cite{rekatsinas2017holoclean} and ActiveClean~\cite{krishnan2016activeclean}. Both systems focus on \emph{data repair}, whereas our framework targets \emph{data verification} and operates on top of existing verifiers rather than replacing them. HoloClean combines constraints, external sources, and statistical models to repair erroneous data, while ActiveClean integrates cleaning into convex statistical learning workflows (e.g., SVMs) via stochastic gradient descent. Because our framework analyzes how verification errors propagate to query results and provides worst-case reliability guarantees, without performing repairs itself, a direct performance comparison with these end-to-end cleaning systems is not meaningful. Nevertheless, it may be interesting future work to adapt the MES metric to data repair or learning-based cleaning settings, including machine learning pipelines.

Closely related is \emph{query-guided data cleaning}, which focuses on cleaning query-relevant subsets rather than entire datasets (e.g.,~\cite{bergman2015query,cheng2008cleaning,drien2023query,mo2013cleaning,wang2014sample}). Rather than proposing a new verification or cleaning method, this work analyzes how verification errors propagate to query results and introduces algorithms that provide query-relevant reliability insights for existing verification systems.

\paragraph*{Uncertain and Probabilistic Databases}
A large body of work studies uncertainty in databases 
(e.g.,~\cite{arenas1999consistent,arenas2003answer,benjelloun2008databases,bertossi2011database,dalvi2007efficient,desa2019formal,greco2017computing,lian2010consistent,papaioannou2018supporting,suciu2011prob,suciu2020probabilistic,van2017query,wijsen2019foundations}). \emph{Probabilistic databases} model uncertainty by treating tuples or attributes as random variables.
For example, tuple-independent probabilistic databases assign each tuple an independent probability of being present in a possible world. A formal model of unclean databases that incorporates both probabilistic data and probabilistic cleaning errors was proposed in~\cite{desa2019formal}.

Our formal model differs fundamentally from this line of work. Probabilistic models typically assume knowledge of the ground truth data distribution and probabilities, which often requires domain expertise and ongoing maintenance. In contrast, our approach assumes knowledge only of the verification steps applied to the data, making it substantially more practical. Moreover, while prior work focuses primarily on expectation-based or distribution-based guarantees, our objective is to provide strong worst-case guarantees suitable for data-critical settings. Finally, we explicitly study how gradually reducing verification error probabilities affects query results, whereas most prior work assumes these probabilities are fixed.

Some approaches to uncertain data also adopt a worst-case perspective, most notably \emph{certain answers} and \emph{consistent query answering}, which compute query results that hold in every possible world or every minimal repair of an inconsistent database~\cite{arenas1999consistent,arenas2003answer,gilad2020multiple,lian2010consistent,wijsen2019foundations,bertossi2011database}. These works differ from ours in that consistency is defined via integrity constraints on clean data, leading to fundamentally different analyses and solutions.

\paragraph*{Possibilistic Query Answering}
An alternative approach to handling uncertainty is based on possibility theory and fuzzy sets, with \emph{fuzzy databases} as the central formal model. Several variants have been explored, including fuzzy relations and possibility distributions over attribute values (see, e.g.,~\cite{bosc1997introduction,dubois1988possibility}). In contrast, our approach does not model data uncertainty at all; instead, it focuses exclusively on uncertainty arising from verification processes and remains agnostic to the underlying data values. Furthermore, our uncertainty metric MES is not only used to quantify uncertainty but also to actively guide verification decisions, as described in Section~\ref{sec:uncertainty_red_algs}.

\paragraph*{Provenance in databases}
Provenance, also called lineage, plays a central part in this research, and all our algorithms process provenance expressions that are represented by Boolean formulae. In general, provenance refers to additional information attached to output tuples about their derivation from the input tuples~\cite{cheney2009provenance, senellart2019provenance}. There are multiple definitions for provenance, and in this work, we are using an extension of the PosBool semiring of~\cite{green2007provenance,imielinski1984incomplete} to three-valued logic. In general, many works in the field of probabilistic and uncertain databases use Provenance. In addition, it has been used in various contexts, including data cleaning~\cite{bergman2015query,drien2023query,glavic2021data,parulian2022dcm}, consent management~\cite{drien2021managing}, access control~\cite{moffitt2015collaborative}, causality~\cite{salimi2016quantifying, meliou2010the}, and tuples contribution quantification~\cite{deutch2022computing, abramovich2024banzhaf}. Many types of provenance have been implemented in ProvSQL~\cite{senellart2018provsql}, which has been used in our experiments.

\section{Conclusions and future work}
\label{sec:conclusions}

In this paper, we studied verification in data-critical settings and introduced \emph{MES}, a novel worst-case metric for the uncertainty of query results. We analyzed its computational complexity and developed efficient algorithms for its calculation. We also identified and formalized the notion of \emph{risky tuples} -- input tuples for which further verification may increase output uncertainty -- and proposed an algorithm to detect them. Building on these insights, we presented a generic algorithm, \emph{MESReduce}, that operates on top of existing verification systems to improve the reliability of query results. We demonstrated the practicality and effectiveness of our approach through an extensive experimental study.

This work opens several promising directions for future research. First, we have preliminary results on alternative uncertainty metrics over sets of tuples (e.g., worst-case \emph{expected} number of errors). Second, our approach could be extended to broader query types beyond SPJU, e.g., aggregate queries, as well as to generalized notions of risky tuples. Moreover, while the verified results of SPJU queries can be used as training data for machine learning models, it would be interesting to further explore the application of our techniques and algorithms within machine learning workflows -- specifically, the potential uses of MES, the relation between risky tuples and noise in training sets, and the impact of lowering the MES on model reliability. In addition, supporting additional classes of verifiers, such as constraint-based ones, could also broaden applicability. Finally, an interesting direction is to model the verifier's uncertainty using possibility theory, and apply our techniques and ideas under this alternative uncertainty model.

\paragraph*{Acknowledgments} ChatGPT was used for minor textual editing of some of the paragraphs in this submission, without affecting or generating content, non-textual parts (formulas, figures, tables...), or any of our contributions.

\bibliographystyle{IEEEtranS}
\bibliography{bib}

\end{document}